\definecolor{mediumblue}{rgb}{0.0, 0.0, 0.95}
\newcommand\blfootnote[1]{
  \begingroup
  \renewcommand\thefootnote{}\footnote{#1}
  \addtocounter{footnote}{-1}
  \endgroup
}
\theoremstyle{definition}
\newtheorem{theorem}{Theorem}
\newtheorem{corollary}{Corollary}
\newtheorem{definition}{Definition}
\newtheorem{example}{Example}
\newtheorem{assumption}{Assumption}
\newtheorem{observation}{Observation}
\theoremstyle{plain}
\newtheorem{lemma}{Lemma}
\newtheorem{proposition}{Proposition}
\newenvironment{proof2}[1][Proof]{\noindent\textit{#1.} }{\ \hfill$\square$ }
\renewcommand\thmcontinues[1]{Continued}
\renewcommand*{\thefootnote}{\fnsymbol{footnote}}
\newcommand{\constraint}[1]{\refstepcounter{equation}\label{#1}(\theequation)}
\newcommand{\csclabel}[2]{\label{#1}(#2)}
\renewcommand{\vec}[1]{\bm{#1}}
\newcommand{{\pbold}}{\vec{p}}
\newcommand{\kb}{\vec{k}}
\newcommand{{\sbold}}{\vec{s}}
\newcommand{\xb}{\vec{x}}
\newcommand{\K}{\mathcal{K}}
\newcommand{\Z}{\mathbb{Z}}
\newcommand{\N}{\mathbb{N}}
\newcommand{\R}{\mathbb{R}}
\newcommand{\buyers}{[L]}
\newcommand{\msets}{\Z_+^{2^n}}
\newcommand{\agents}{\buyers_0}
\newcommand{\Qbar}{\widebar{Q}}
\newcommand{\Rbar}{\widebar{R}}
\newcommand{\Qbarset}{[\Qbar]}
\newcommand{\Rbarset}{[\Rbar]}
\newcommand{\DC}{{\Delta c}}
\newcommand{\rtild}{\widetilde{r}}
\newcommand{\Sin}{{S\in2^N}}
\newcommand{\rt}{{:\,}}
\newcommand{\Succes}{R^+}
\newcommand{\Predec}{R^-}
\newcommand{\DSucces}{N^+}
\newcommand{\DPredec}{N^-}
\newcommand{\partition}{\kb}
\newcommand{\mstar}{\scalebox{1.2}{$\star$}}
\DeclareMathOperator*{\argmax}{arg\,max}
\newcommand{\msetsum}{\sum}
\newcommand{\wtild}{\widetilde}
\let\save@mathaccent\mathaccent
\newcommand*\if@single[3]{%
  \setbox0\hbox{${\mathaccent"0362{#1}}^H$}%
  \setbox2\hbox{${\mathaccent"0362{\kern0pt#1}}^H$}%
  \ifdim\ht0=\ht2 #3\else #2\fi
  }
\newcommand*\rel@kern[1]{\kern#1\dimexpr\macc@kerna}
\newcommand*\widebar[1]{\@ifnextchar^{{\wide@bar{#1}{0}}}{\wide@bar{#1}{1}}}
\newcommand*\wide@bar[2]{\if@single{#1}{\wide@bar@{#1}{#2}{1}}{\wide@bar@{#1}{#2}{2}}}
\newcommand*\wide@bar@[3]{%
  \begingroup
  \def\mathaccent##1##2{%
    \let\mathaccent\save@mathaccent
    \if#32 \let\macc@nucleus\first@char \fi
    \setbox\z@\hbox{$\macc@style{\macc@nucleus}_{}$}%
    \setbox\tw@\hbox{$\macc@style{\macc@nucleus}{}_{}$}%
    \dimen@\wd\tw@
    \advance\dimen@-\wd\z@
    \divide\dimen@ 3
    \@tempdima\wd\tw@
    \advance\@tempdima-\scriptspace
    \divide\@tempdima 10
    \advance\dimen@-\@tempdima
    \ifdim\dimen@>\z@ \dimen@0pt\fi
    \rel@kern{0.6}\kern-\dimen@
    \if#31
      \overline{\rel@kern{-0.6}\kern\dimen@\macc@nucleus\rel@kern{0.4}\kern\dimen@}%
      \advance\dimen@0.4\dimexpr\macc@kerna
      \let\final@kern#2%
      \ifdim\dimen@<\z@ \let\final@kern1\fi
      \if\final@kern1 \kern-\dimen@\fi
    \else
      \overline{\rel@kern{-0.6}\kern\dimen@#1}%
    \fi
  }%
  \macc@depth\@ne
  \let\math@bgroup\@empty \let\math@egroup\macc@set@skewchar
  \mathsurround\z@ \frozen@everymath{\mathgroup\macc@group\relax}%
  \macc@set@skewchar\relax
  \let\mathaccentV\macc@nested@a
  \if#31
    \macc@nested@a\relax111{#1}%
  \else
    \def\gobble@till@marker##1\endmarker{}%
    \futurelet\first@char\gobble@till@marker#1\endmarker
    \ifcat\noexpand\first@char A\else
      \def\first@char{}%
    \fi
    \macc@nested@a\relax111{\first@char}%
  \fi
  \endgroup
}
\title{\huge Selling Multiple Complements with Packaging Costs}
\author[1]{Simon Finster}
\affil[1]{CREST-ENSAE and Inria/FairPlay, \href{mailto:simon.finster@ensae.fr}{simon.finster@ensae.fr}}
\date{8 February 2025}
\renewcommand*{\thefootnote}{\arabic{footnote}}
\begin{document}

\maketitle

\begin{abstract}
\normalsize
\onehalfspacing
We consider a package assignment problem with multiple units of indivisible items. The seller can specify preferences over partitions of their supply between buyers as packaging costs. We propose incremental costs together with a graph that defines cost interdependence to express these preferences. This facilitates the use of linear programming to characterize Walrasian equilibrium prices. Firstly, we show that equilibrium prices are uniform, anonymous, and linear in packages. Prices and marginal gains exhibit a nested structure, which we characterize in closed form for complete graphs. Secondly, we provide sufficient conditions for the existence of package-linear competitive prices using an ascending auction implementation. Our framework of partition preferences ensures fair and transparent dual pricing and admits preferences over the concentration of allocated bundles in the market.\\

\noindent {\bf Keywords:} package assignment, non-linear pricing, Walrasian equilibrium, partition preferences, value graph, linear programming\\
{\bf JEL codes:} C61, C62, D40, D47, D50
\end{abstract}

\blfootnote{\emph{Acknowledgments}: I thank Nick Arnosti, Elizabeth Baldwin, Martin Bichler, Lorenzo Croissant, Peter Cramton, Péter Es{\H o}, Pär Holmberg, Zsuzsanna Jankó, Ian Jewitt, Bernhard Kasberger, Bettina Klaus, Paul Klemperer, Maciej Kotowski, Bing Liu, Edwin Lock, Simon Loertscher, Axel Ockenfels, Bary Pradelski, Alex Teytelboym, Bert Willems, Zaifu Yang, Yu Zhou, and seminar and conference audiences at the Simons Laufer Mathematical Sciences Institute (Berkeley), the University of Oxford, the University of Cologne, the University of Naples Federico II, the Research Institute of Industrial Economics Stockholm, the 2020 Econometric Society/Delhi Winter School, EARIE 2021, and various meetings of the Econometric Society for helpful comments and suggestions.
        
Parts of this material are based upon work supported by the National Science Foundation under Grant No.~DMS-1928930 and by the Alfred P. Sloan Foundation under grant G-2021-16778, while the author was in residence at the Simons Laufer Mathematical Sciences Institute (formerly MSRI) in Berkeley, California, during the Fall 2023 semester.}

\onehalfspacing

\section{Introduction}
\label{sec:introduction}

In many markets, buyers express preferences for bundles of indivisible items, and the seller cares about how items are bundled.
So far, despite attested real-world preferences, e.g.,~caps on the number of licenses in wireless spectrum auctions \citep{Cramton-2011,Kasberger-2023,Myers-2023}, the economic literature has not addressed competitive equilibrium with preferences over the partitioning of a supply of indivisible goods.
A key challenge is the representation of such preferences, as their domain, the space of supply partitions, is much larger than the space of bundles.
Moreover, with indivisible goods, the existence of Walrasian equilibria is guaranteed only under specific assumptions (see, e.g.,~\cite{Bikhchandani1997,Gul-1999, Sun2006,Milgrom2009}).

This article proposes a framework to express partition preferences. We show that Walrasian equilibria in our market are supported by uniform and anonymous prices that are linear in packages and reflect the nested structure of the seller's partition preferences. Furthermore, we provide conditions for the existence of Walrasian equilibria.
Our setting is a competitive market for multiple copies of multiple indivisible goods, including a set of buyers with values over bundles of items and a seller with preferences over partitions of their supply between buyers. We impose some structure on the seller's preferences with two objectives: we aim to obtain Walrasian equilibria that do not require personalized pricing while allowing the seller to express general preferences over the concentration of their allocated supply. To do so, we propose \emph{incremental cost functions} which specify the cost (savings) of combining two or more items for assignment to a single buyer, together with a graph structure defining cost interdependencies between bundles. Buyers have a rich set of preferences over bundles that can incorporate complements and substitutes.

In our market, the seller cares about whether any two or more items are allocated to separate buyers or, packaged, to a single buyer. Thus, we name the associated costs \emph{packaging costs}. 
Consider, for example, the reallocation of land plots to farmers via an auction. The government favors the allocation of two complementary land plots as a bundle to encourage defragmentation (cf.~\cite{Bryan-2024}). However, it wishes to allocate the two land plots separately if individual buyers have sufficiently high values. In our framework, it can do so by defining \emph{negative packaging costs}, offering the bundle more cheaply than the sum of costs of the items contained.\footnote{This can also be interpreted as an indirect subsidy. Selling biodiversity conservation contracts (e.g.,~\cite{Stoneham2003}), one may wish to subsidize bundles if contracts implemented on the same land are synergetic.} A small number of highly productive plots the government wishes to sell separately for fairness reasons, which can be expressed as \emph{positive packaging costs}. Because an assigned bundle may affect the cost of other bundles, e.g.,~if they contain items of the same variety, the seller's overall cost depends on the entire partitioning of their supply.
Applications for our framework are also found in procurement, the insurance industry, the transport sector, or wireless spectrum auctions.%
\footnote{In a procurement market with multiple suppliers, the \emph{buyer} has partition preferences: a single supplier may be preferred for machine maintenance and employee training. Decision support systems used in practice allow for different types of discounts and sophisticated bids \citep{Giunipero2009,Bichler2011}. \cite{Bichler2011} propose a bidding language that allows various types of discounts on bundles, but not in the context of Walrasian equilibrium. In the insurance industry, the risk between bundled products is interdependent. Representing such interdependencies while maintaining tractability is a key challenge in the partitioning problem. For transport sector auctions, see, e.g.,~\cite{Cantillon2005}, and for spectrum auctions, e.g.,~\cite{Cramton-2011}.}
In practice, sellers often wish for flexibility in shaping the market outcome; e.g.,~the Bank of England can express complex supply curves for allocating loans to commercial banks \citep{Klemperer2008,Klemperer2010,Klemperer2018}.

The market objective is to find a socially efficient bundling of the seller's supply and assignment of bundles to buyers. For buyers, we allow substitute trade-offs between any bundles, thus also allowing complements. If a buyer is assigned a collection of bundles, their value corresponds to its value-maximal matching to fictitious unit-demand agents, where each unit-demand agent is matched with at most one bundle.

For our first set of results, we construct the social welfare maximization problem embedding the graph structure that is part of the seller's partition preferences into an integer program. We characterize dual prices and show that they satisfy the desired properties of uniformity, anonymity, and linearity in packages.\footnote{Under package-linear pricing, the same price applies to identical packages, and the price of a collection of several packages equals the sum of prices of the packages contained in the collection. A package-linear pricing function is non-linear in items, i.e.,~the price of a package need not be the sum of prices of the contained items.}
We prove that this pricing function supports a competitive equilibrium and that the integrality of a solution to the linear relaxation of the social welfare maximization problem is equivalent to the existence of competitive equilibria. The classical results of \citet{Bikhchandani1997} and \citet{Bikhchandani2002} are not applicable because our seller's cost function depends on the entire partition sold. Thus, the buyers' values and the seller's costs must be considered separately, in contrast to, e.g.,~simply additive costs between items. To embed the graph structure into the social welfare maximization problem, we provide an algorithm that represents the graph's characteristic function, which uniquely maps supply partitions to their associated costs. For complete graphs, we derive a closed-form expression of the characteristic function.

In our second set of results, we provide sufficient conditions for the existence of competitive equilibria and establish a duality between revenue-maximizing and utility-maximizing sellers. We show that if the seller's partition preferences only involve (weakly) negative packaging costs, i.e.,~she prefers coarser partitions of her supply, and buyers have (weakly) superadditive values, i.e.~items are weak complements, a package-linear Walrasian equilibrium exists. This generalizes a result of \citet{Sun2014}, who establish equilibrium existence when buyers and the seller have superadditive values.
We also prove a duality relation between the objectives of a (value-based) revenue-maximizing and a (cost-based) utility-maximizing seller. Moreover, we establish the notion of \emph{set-cover submodularity} (\cref{def:set-cover-submodularity}), a weaker requirement than submodularity and show that the objective of a revenue-maximizing seller with superadditive values is equivalent to that of a utility-maximizing seller with set-cover submodular costs \emph{without} partition preferences.

Our framework of partition preferences implies new market design applications. The cost function graph we introduce facilitates modeling a wide range of cost interdependencies between related packages. For example, the bundle $ABC$ may be more expensive if several copies of $AB$ are sold as well, but it may be independent of the bundles of type $AC$. Furthermore, our incremental costs can encourage or discourage bundle allocation, but always provide flexibility to bundle in the opposite way, i.e.,~to sell items separately or together, if demand requires it. To illustrate this, consider the seller in a spectrum auction with concerns about an asymmetric distribution of licenses between buyers \citep{Ofcom-2017,GSMA-2021}. Such preferences would commonly be expressed through (hard) spectrum caps, i.e.,~each bidder can win at most a fixed number of licenses for each frequency band \citep{Cramton-2013,Kasberger-2023}. Although caps are typically set to mitigate market power in the downstream market, ``their drawback is that they may prohibit efficient aggregation of spectrum'' \citep{Cramton-2011}. The auctioneer can implement softer caps with our incremental cost structure, by penalizing bundle allocations with higher packaging costs. A buyer could then still obtain a bundled set of licenses, but the soft caps would require their bundle value to be high enough to outweigh the seller's preference for less market concentration. More generally, an auctioneer or regulator can steer market outcomes towards their preferred allocation using our cost structure, where hard caps are a special case corresponding to prohibitively high costs for certain bundles.

Our buyer preferences generalize assignment valuations \citep{Shapley-1962,Shapley-1971} and those admissible in the Product-Mix Auction with positive bids in that our buyers express substitute trade-offs between bundles instead of items (see \cite{Baldwin2024language} for more detail on the Product-Mix bidding languages). Combinatorial preferences also appear in the early package auction \emph{i}Bundle \citep{Parkes1999}, with a comprehensive account given, e.g.,~in \cite{Nisan2000} and \cite{Lehmann2006} (see also \cref{sec:agents-and-preferences} and the discussion in \cref{sec:discussion}).
Product-Mix auctions implement a Walrasian equilibrium assuming competitive behavior of participants. In the same way, our market can be implemented as a sealed-bid auction if agents act (approximately) as price takers and truthfully submit their preferences. The buyers' values and the seller's incremental cost functions and admissible graph structures are parsimonious in the vast space of partitions.
In markets for weak complements with negative packaging costs, we provide an implementation as an ascending auction. Our extended ascending auction strictly generalizes the ascending auction by \cite{Sun2014}, allowing a revenue-maximizing seller and an auctioneer with partition preferences.

Walrasian equilibrium in markets in which agents view some or all items as complementary have been studied, e.g.,~by \cite{Sun2006} and \cite{Teytelboym2014}, who establish equilibrium existence results, and by \cite{Sun-Yang-2009}, who develop a Walrasian tâtonnement process. \cite{Baldwin2019-a} introduce the concept of demand types and establish with their unimodularity theorem the existence of a Walrasian equilibrium for many classes of complements and substitutes preferences. Furthermore, \cite{Candogan-2015} show the existence of a linear pricing Walrasian equilibrium for the class of sign-consistent tree valuations\footnote{In this class, each two goods must be either substitutes or complements for all buyers.} and \cite{Candogan-2018} study pricing equilibria when buyers have graphical valuations. In all of those studies, partition preferences over the market supply are not considered. We note that partition preferences with negative packaging costs as well as buyers' superadditive values are orthogonal to the well-known classes of gross substitutes \citep{Kelso1982} and strong substitutes \citep{Milgrom2009} which admit a linear-pricing Walrasian equilibrium.

The remainder of this article is structured as follows. In \cref{sec:model}, we describe our market and the partition preference framework. In \cref{sec:walrasian-equilibrium}, we show the desired properties of the equilibrium pricing function and characterize Walrasian equilibria. In \cref{sec:equilibrium-existence}, we prove sufficient conditions for the existence of a Walrasian equilibrium and explore the dual relationship between revenue-maximizing and utility-maximizing sellers. \cref{sec:discussion} provides a brief discussion and \cref{sec:conclusion} concludes.

\section{The Competitive Market with Packaging Costs}\label{sec:model}

\subsection{Preliminaries}\label{sec:preliminaries}

There are $n$ indivisible, distinguishable varieties (or items) in the economy, identified with $j \in N:=\{1,\dots,n\}$, with a supply of $\Omega_j$ units for each variety $j\in N$.
\begin{definition}[Package]
    A \emph{package} (or bundle) is a subset of $N$, i.e.,~an element $\Sin$.
\end{definition}
Packages allow a single unit of each variety to be bundled together. This is without loss of generality, since with appropriate labeling packages containing identical varieties can be mapped to our model (we exemplify this in \cref{sec:identical-items}).
Multisets allow multiple copies of items by including the multiplicity of elements in their ground set $N$, where occurrences of the same element are indistinguishable.\footnote{See, e.g.,~\cite{Blizard1990} for a detailed treatment of multisets.} Denoting by $\Z_+$ the set of positive integers including zero, we define a \emph{multiset} as a mapping $m:N\rightarrow \Z_+$, and represent it as a vector of the multiplicities of its elements $\bm{m}:= (m(1),\dots,m(n))$.
We work mainly with \emph{package multisets} with ground set $2^N$, represented by vectors $\kb = (k_{S_1},\dots,k_{S_{2^n}}) \in \msets$ where $k_{S}$ denotes the multiplicity of the package $S$.
\begin{definition}[Feasible multisets]
    The universe of all \emph{feasible} package multisets is given by
    \begin{align*}
        \K := \left\{ \kb \in \msets: \sum_{\substack{S\in 2^N\negthickspace}} k_S \mathbbm{1}_{j\in S} \leq \Omega_j ~~\forall j \in N \right\}.
    \end{align*}
\end{definition}
A package multiset can also be seen as an anonymous partition in which the elements of the partition are not labeled. We refer to an anonymous partition simply as partition, and use package multiset and partition synonymously.
If a package multiset, or partition, $\kb$ can be identified to a set $S$ (i.e.,~$k_S=1$ for exactly one $S\in 2^N$ and $k_S'=0$ for all other $S'$), we will abuse the formal definition and write $\kb=S$ for the sake of~clarity.

The set of package multisets over $2^N$ can be endowed with basic operations and functions. Fixing package multisets $\kb$ and $\kb'$, the sum $\kb'' = \kb + \kb'$ is defined by $k''_S = k_S + k'_S$ for all $\Sin$, and scalar multiplication as $\alpha \kb = (\alpha k_S )_{\Sin}$ for any $\alpha\in \R$. The cardinality of $\kb$ is given by $|\kb| = \sum_{\Sin} k_S$. The unpacking operator $^\mstar$ unpacks a package multiset $\kb$ into the multiset of varieties contained in $\kb$ so that $\kb^\mstar = (m_j)_{j\in N}$ with $m_j := \sum_{\Sin} k_S \mathbbm{1}_{j\in S}$ for all $j \in N$.

To simplify notation, we often use implicit summation $f(X,Y) = \sum_{x\in X,y\in Y}f(x,y)$ for any finite sets $X$ and $Y$. We let $[X]:=\{1,\dots,X\}$.

\subsection{Agents and Preferences}
\label{sec:agents-and-preferences}

The economy consists of a seller (``she''), denoted $0$, and a set of $L$ buyers (``he'') denoted $l \in \buyers:=\{1,\dots,L\}$ and we let $\agents := \buyers \cup \{0\}$.
The preferences of each buyer are specified by a value function $V^l: \msets \rightarrow \Z_+$ with $V^l(\emptyset)= 0$.
If a buyer demands at most one package, the value function is given by $V^l: 2^N \rightarrow \Z_+$. Such a buyer is called a ``unit-demand agent'', where the ``unit'' refers to a package. We restrict $V^l$ as follows: each buyer's value function is the aggregate of values of a finite number $\Qbar^l$ of fictitious unit-demand agents, and $\Qbar:= \max_{l\in\buyers} \Qbar^l$. We sometimes denote by $(q,l)$ the $q$th unit-demand agent of buyer $l$.
\begin{definition}[Unit demand valuation]
    The \emph{unit-demand valuations} of the fictitious agents associated with buyer $l$ are defined as $v^l: 2^N\times \Z_+ \rightarrow \Z_+$, where $v^l(S,q)$ is the value of bundle $S$ for the $q$\textsuperscript{th} unit-demand agent associated with buyer $l$.
\end{definition}
The value of a buyer $V^l$ for a multiset $\kb$ is obtained by a value-maximizing matching of the contained bundles to his fictitious unit-demand agents,\footnote{Such preferences are also known as assignment valuations \citep{Shapley-1962,Shapley-1971}, and the value-maximizing matching is sometimes called a maximum-weight matching. In the terminology of \cite{Lehmann2006}, unit-demand valuations are combined by an ``inclusive-or''-operation (see also \cite{Nisan2000}). A generalized version of assignment valuations is the assignment messages in \cite{Milgrom-2009b}. Note that each of our unit-demand agents may be assigned a bundle of items.} where each fictitious unit-demand agent is assigned at most one bundle.
Formally, we aggregate unit-demand values as follows. Let $S_q$ be the bundle assigned to unit-demand agent $q$ and $e^{S_q} \in \{0,1\}^{2^n}$ be the indicator vector with value $1$ for bundle $S_q$ and value $0$ for all other bundles. We assume $v^l(S,q)=0$ for all $q > \Qbar^l$.
\begin{definition}[Unit-demand value aggregation]
\label{def:marg-value-aggr}
    \begin{align*}
        V^l(\mathbf{k}) := \max_{\substack{S_q \subseteq 2^N}} \sum_{q\in [\Qbar^l]} v^l(S_q,q) 
        \quad\text{ s.t. } \sum_{q \in [\Qbar^l]} e^{S_q} \leq \mathbf{k}.
    \end{align*}
\end{definition}

A \emph{pricing function} is a function $p:\msets\rightarrow \mathbb{R}$ with $p(0) = 0$. It is \emph{nonlinear} in varieties $j \in N$, i.e.,~for any package $\Sin$, we may have $p(S) \neq \sum_{j\in S} p(j)$. A pricing function $p:\msets\rightarrow \mathbb{R}$ is \emph{package-linear} if and only if, for all $\kb \in \msets$, $p(\kb) = \sum_{\Sin} k_S p(S)$.
Thus, a package-linear pricing function can be represented as a mapping $p:2^N\rightarrow \mathbb{R}$.

Each agent's utility is quasi-linear and given by $u^l(\kb,p) = V^l(\kb) - p(\kb)$ when they receive a package multiset $\kb$ at package-linear prices $p$.
Note that a buyer who receives a partition $\kb$ cannot unpack $\kb$ (see definition in \cref{sec:preliminaries}), i.e.,~the individual bundles of $\kb$ are fixed. \cref{ex:running-example} below illustrates the aggregation of unit-demand values.

\begin{example}[label=running-example]\label{ex:running-example}
    Consider the sale of two units of good $A$ and two units of good $B$, which may be sold separately or in packages. The set $\{AB\}$ is considered a package.
    Suppose that there is one buyer with two corresponding unit-demand buyers. The unit-demand value functions are given in \cref{tab:example-marg-values} and aggregated according to \cref{def:marg-value-aggr}. For legibility, here we write multisets in set notation instead of vectors. We have, e.g.,~$v^1(\{A\}) = 3$, $v^1(\{B\}) = 5$, $v^1(\{A,B\}) = \max(3+2,5+1) = 6$, $v^1(\{A,AB\}) = 3+9$, $ v^1(\{B,AB\}) = 5+9$, and $v^1(\{A,B,AB\}) = \max(3+2,5+1,3+9,5+9,1+9,2+9) = 14$. Note that with only two unit-demand buyers, the third and any further bundles contribute a value of zero.
    \begin{figure}
        \centering
        \begin{subfigure}[b]{0.49\textwidth}
            \centering
            \includegraphics[scale=0.4]{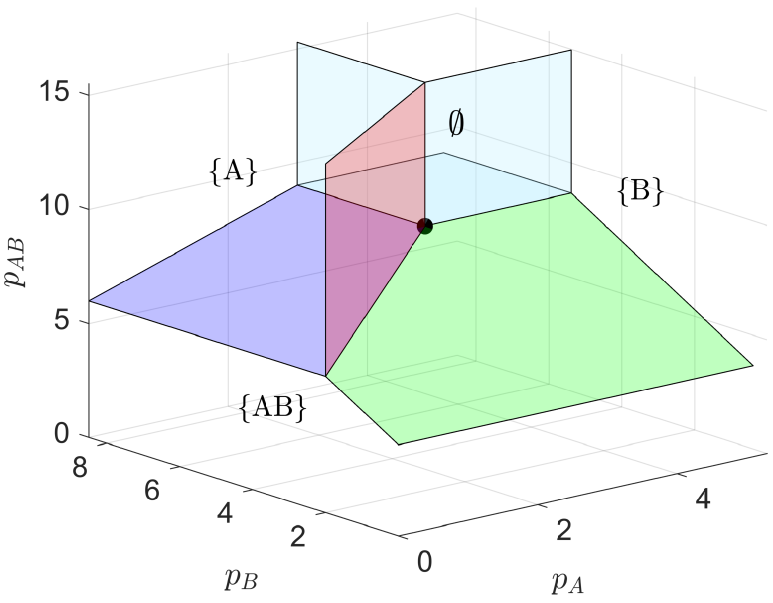}
            \caption{Unit demand in price space}
            \label{fig:3D-dot-bid}
        \end{subfigure}
        \begin{subfigure}[b]{0.49\textwidth}
            \centering
            \begin{tabular}{r|ccc}
                \toprule
                q     & $v^1(q,A)$ & $v^1(q,B)$ & $v^1(q,AB)$ \\
                \midrule
                1     & 3     & 5     & 9 \\
                2     & 1     & 2     & 9 \\
                \bottomrule
            \end{tabular}
            \caption{Unit-demand values}
            \label{tab:example-marg-values}
        \end{subfigure}
        \caption{Unit-demand values and their representation in price space}
        \label{fig:enter-label}
    \end{figure}
    The preferences of unit-demand agent $q=1$ are illustrated in price space in \cref{fig:3D-dot-bid} and represented by the black dot at $(3,5,9)$. At prices beyond the light-blue facets, the agent demands nothing. The remaining area is divided into three. At prices above the blue and left of the red facet $A$ is demanded, above the green and right of the red facet $B$ is demanded, and at prices below the blue and the green facet $AB$ is demanded. On the facets or their intersections, the agent is indifferent between the bundles demanded in adjacent regions.\footnote{The demand of multiple unit-demand agents can also be geometrically aggregated. \cite{Baldwin2019-a} illustrate this two-dimensional price space (not allowing a separate price for $AB$).}
\end{example}

The seller's preferences are over partitions of her supply, specified by the \emph{cost function} $C^0$. She has quasi-linear utility $u^0(\kb,p) =  p(\kb) - C^0(\kb)$ from selling the package multiset $\kb$ at a prices $p$. Any package multiset induces a partition of the contained bundles between buyers, where the identity of the buyers who receive a bundle is irrelevant. We note that, with multi-unit demand buyers (more than one corresponding unit-demand agent per buyer), there is an asymmetry in the interpretation of the seller's preferences: bundling items guarantees that a single buyer receives the bundle, but separate items may still be received by unit-demand agents belonging to the same buyer. If the seller cares about the separation of items, only one fictitious agent per buyer is allowed.

The cost function $C^0$ consists of two elements: 
(i) \emph{incremental costs} expressing additional costs or cost savings from bundling varieties in a package, and (ii) a graph with \emph{cost connections} specifying cost interdependencies between related bundles. For single-item packages $S$ with $|S|=1$, incremental costs are simply costs.
\begin{definition}\label{def:cfg}
    A \emph{cost function graph (CFG)} is a directed graph $G = (V,A)$ with vertices labeled with elements of $2^N$ representing the distinct packages, and arcs $A$ defining the cost connections between packages such that (i) if $(T,S) \in A$, then $S \subset T$, and (ii) every package $S$ is connected to the contained single varieties $\{j\}$, $j \in S$.
\end{definition}
The CFG is a tree due to property (i). Formally, we say that $S_1$ and $S_t$ are {(cost-)} connected and write $\exists (S_1...S_t)$ whenever there exists a sequence of vertices $(S_1,...,S_t)$ such that ${(S_1,S_2),...,(S_{t-1},S_t) \in A}$. Given a path $H:=(S_1,...,S_t)$, $|H|= t-1$ denotes the length of path $H$. Lastly, node $S$ is connected to itself.

The \textit{successors} $\Succes_0(S):=\{S'\in 2^N\colon\exists(S...S')\}$ and the \textit{strict successors} $\Succes(S):=\{S'\in 2^N\colon\exists(S...S'), S'\neq S\}$ of package $S$ are sets of packages reachable from $S$ in the CFG. Likewise, the \textit{predecessors} $\Predec_0(S):=\{S' \in 2^N\rt\exists(S'...S)$ and the \textit{strict predecessors} $\Predec(S):=\{S' \in 2^N\rt\exists(S'...S), {S'}\neq S\}$ of $S$ are sets of packages from which $S$ can be reached. Furthermore, $\DSucces(S):=\{S'\in 2^N \colon (S,S') \in A\}$ and $\DPredec(S):=\{S'\in 2^N \colon (S',S) \in A\}$ are the neighbors, the \textit{direct successors} and \textit{direct predecessors} of $S$, and $\DSucces_0:=\DSucces \cup \{S\}$ and $\DPredec_0:=\DPredec \cup \{S\}$.

When selling a supply partition, the seller incurs costs related to its individual elements, as well as their interaction. Each node (package) $S$ in the graph is associated with an incremental cost function. All predecessors $S'\in \Predec_0(S)$ of some node $S\in2^N$ are \textit{cost-connected} to $S$. Economically, this means that an incremental cost corresponding to $S$ also contributes to the cost of $S'$. Only bundles that are a subset or superset of each other can be cost-connected (\cref{def:cfg}~(i)), and for each bundle, the seller incurs at least the cost of the individual varieties contained in the bundle (\cref{def:cfg}~(ii)).
\begin{definition}\label{def:incr-marg-cost}
    Incremental costs are defined as $\DC: 2^N\times\N\rightarrow \Z$. The seller incurs the cost increase $\DC(S,r)$ from selling a copy of a package $T$ due to its cost connection to package $S$, when she sells $r-1$ copies of other packages cost-connected to $S$.
\end{definition}
Note that $\DC(\{j\},r) := \infty$ for $r>\Omega_j$, for all $j\in N$. Negative incremental costs, i.e.,~cost savings, are allowed. The total cost of a supply partition is obtained by adding all incremental costs associated with the bundles contained in the partition. For each bundle, the cost function graph defines the set of cost-connected bundles.
\begin{definition}\label{def:cost}
    The cost of selling a partition of supply (package multiset) $\kb \in \K$ is defined as $C^0:\K \rightarrow \Z_+$. Given a cost function graph and associated incremental costs, 
    \begin{align*}
        C^0(\kb) = \sum_{\substack{\Sin }} 
        \DC(S,[r_S])
        ,\quad \quad \text{with } r_S: = \sum_{\substack{S' \in \Predec_0(S)}} k_{S'}.
    \end{align*}
\end{definition}
Note that $r_S$ counts the copies of cost-connected packages (the predecessors of $S$) that are sold in the partition $\kb$. To illustrate the cost aggregation, suppose a single copy of bundle $S$ is sold and nothing else. Then the cost of $S$ is obtained by adding all incremental costs $\DC(S',1)$ of cost-connected bundles $S'$, i.e.,~the cost of $S$ is $\sum_{S'\in\Predec_0(S)}\DC(S',1)$. We often use the implicit summation $\DC(\Predec_0(S),r)$.
Although negative incremental costs are allowed, we assume that costs are non-negative, i.e.,~$C^0(\kb)\geq 0$ for all $\kb \in \K$, and we make the following monotonicity assumption on incremental costs.
\begin{assumption}[Increasing incremental cost]\label{ass:increasing-cost}
    For any package $\Sin$ and for all $r \geq 1$, it holds that $\DC(S,r) \leq \DC(S,r+1)$.
\end{assumption}
Intuitively, the more packages cost-connected to some package $S$ are sold, the more costly it becomes to sell an additional cost-connected package. This assumption is similar to one of increasing marginal costs, but note that, due to potential cost interdependencies, marginal costs can only be defined as a function of the entire partition sold. The following observations further illustrate the properties of CFGs.
\begin{observation}\label{observation:graph-properties}~
    \begin{itemize}
        \item [(a)] Node $N$ is a source and a CFG may contain other sources (\cref{def:cfg} (i)).
        \item [(b)] Node $\Sin$ is a sink iff $|S| = 1$ (\cref{def:cfg} (i) and (ii)).
        \item [(c)] A CFG is weakly connected (\cref{def:cfg} (ii)).
    \end{itemize}
\end{observation}
In \cref{ex:running-example-seller}, we show the simplest cost function graph and associated incremental cost functions with two distinct items $A$ and $B$. However, the expressive power of cost function graphs is better illustrated with at least three items $A$, $B$, and $C$, where the costs of packages with overlapping subsets of varieties may be connected. For example, the cost savings of grouping $ABC$ together may depend on the number of units of package $AB$ that are sold simultaneously in the market. We illustrate this in \cref{ex:three-goods}.

\begin{example}[continues=running-example]\label{ex:running-example-seller}
    The seller's preferences can be summarized by the incremental costs in \cref{tab:ex1:incremental-cost} and the CFG in \cref{fig:ex1:cost-relations}. Note that with only two goods, there exists only one valid CFG, and we illustrate more complex graphs in \cref{ex:three-goods}. For legibility, we write multisets not as vectors but in set notation.
    
    The total cost is given, as defined in \cref{def:cost}, as $C^0(\{A\}) = 1$, $C^0(\{B\}) = 1$, $C^0(\{AB\}) = 1 + 1 - 1$, $C^0(\{A,A\}) = C^0(\{B,B\} = 1 + 2$, $C^0(\{A,B\}) = 1 + 1$, $C^0(\{A,AB\}) = C^0(\{B,AB\}) = 1 + 2 + 1 - 1$, $C^0(\{A,B,AB\} = 1 + 2 +1 + 2 - 1$, and $C^0(\{AB,AB\}) = 1 + 2 +1 +2 -1 + 0$.
    \begin{figure}
    \centering
    \begin{subfigure}[b]{0.49\textwidth}
        \centering
        \includegraphics[scale=0.8]{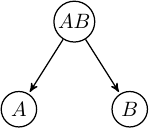}
        \caption{Cost function graph}
        \label{fig:ex1:cost-relations}
    \end{subfigure}
    \begin{subfigure}[b]{0.49\textwidth}
        \centering
        \begin{tabular}{r|ccc}
            \toprule
            r     & $\DC(A,r)$ & $\DC(B,r)$ & $\DC(AB,r)$ \\
            \midrule
            1     & 1     & 1     & -1 \\
            2     & 2     & 2     & 0 \\
            \bottomrule
        \end{tabular}
        \caption{Incremental cost}
        \label{tab:ex1:incremental-cost}
    \end{subfigure}
    \caption{Cost function graph for two goods $A$ and $B$ and incremental costs}
    \label{fig:ex1}
\end{figure}
\end{example}

\begin{example}\label{ex:three-goods}
    There are three goods $N=\{A,B,C\}$ that can be bundled as any package $S\in 2^N$. We consider two different cost function graphs, shown in \cref{fig:cfg_ABC} and \cref{fig:cfg_ABC_compl}.
    In \cref{fig:cfg_ABC}, each bundle $S$, $\vert S\vert~\geq 2$, when allocated to a buyer, creates a cost corresponding to its own incremental cost function and costs related to its \emph{subsets of a single variety}. With this type of packaging cost, the overall allocation of each variety affects the cost of related (superset) bundles. However, packaging costs are independent between bundles that consist of more than one item. This is illustrated in \cref{fig:mc_ABC} with the cost of selling the partition (or multiset) $\{A,B,C,AB,AC,BC, BC,ABC\}$.
    \begin{figure}[htb!]
    \centering
    \begin{subfigure}[b]{.4\textwidth}
      \centering
      \includegraphics[scale=0.8]{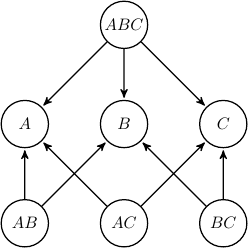}
      \caption{CFG}
        \label{fig:cfg_ABC}
    \end{subfigure}%
    \begin{subfigure}[b]{.59\textwidth}
      \centering
      \includegraphics[scale=0.3,trim=0 0 0 0, clip]{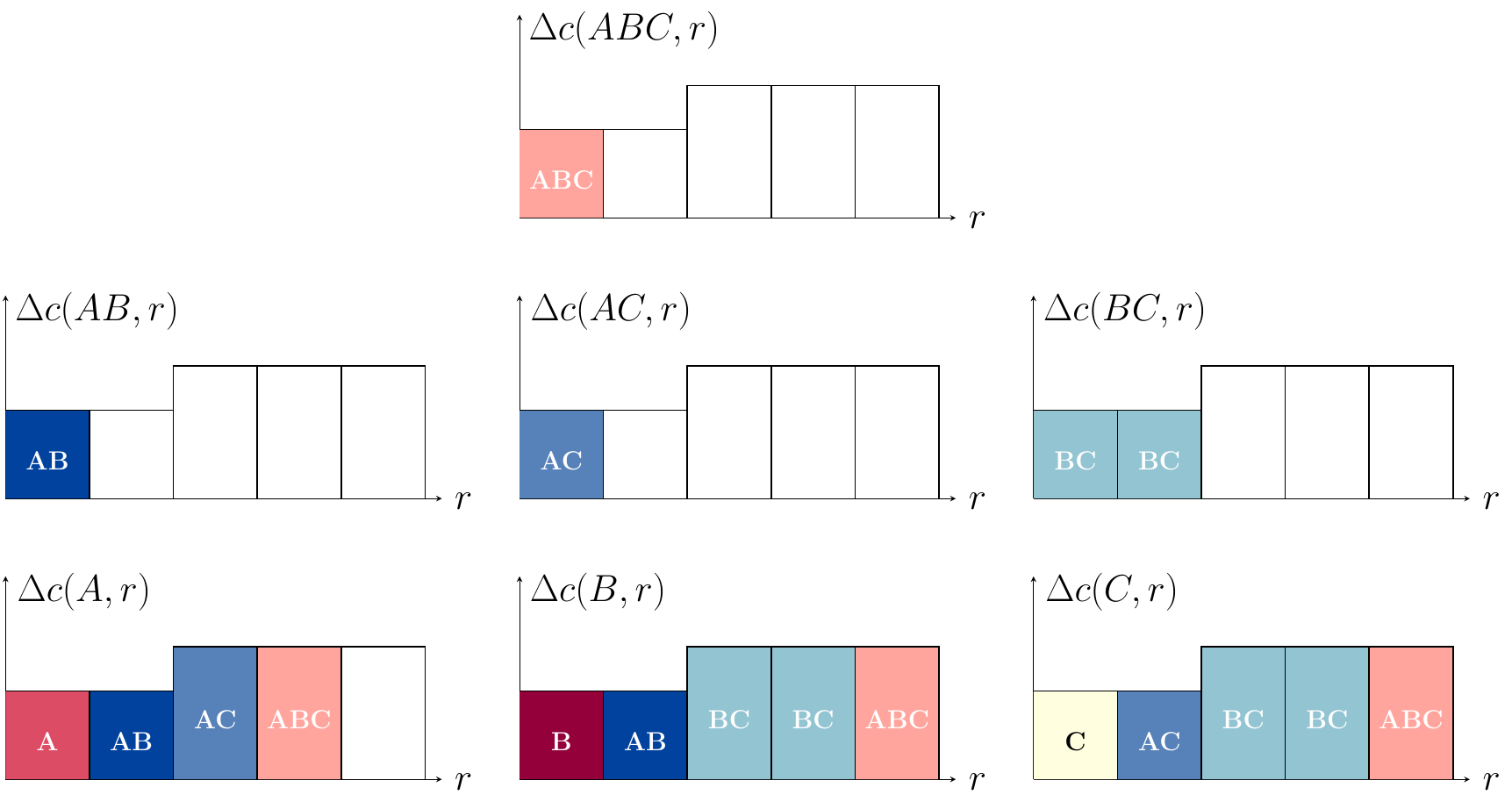}
        \caption{Incremental cost functions}
        \label{fig:mc_ABC}
    \end{subfigure}
    \caption{CFG and corresponding assignment on incremental cost functions of partition $\{A,B,C,AB,AC,$ $BC,BC,ABC\}$}
    \label{fig:CFG-and-cost}
    \end{figure}
    In \cref{fig:cfg_ABC_compl}, each bundle $S$, $\vert S\vert~\geq 2$, when allocated to a buyer, creates a cost corresponding to its own incremental cost function and costs related to \emph{all of its subsets}. Thus, the packaging cost of the sale of the package $ABC$ also depends on how many units of $AB$ (and $AC$ and $BC$) are being sold. Incremental cost functions are weakly increasing, so the more of, e.g.,~$AB$ is allocated, the more expensive it becomes to sell bundle $ABC$ (see \cref{fig:mc_ABC_compl}).
    \begin{figure}[tb!]
    \centering
    \begin{subfigure}[b]{.4\textwidth}
      \centering
      \includegraphics[scale=0.8]{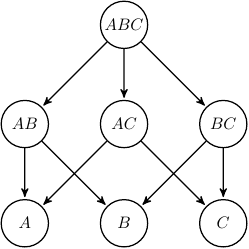}
      \caption{Complete CFG}
        \label{fig:cfg_ABC_compl}
    \end{subfigure}%
    \begin{subfigure}[b]{.59\textwidth}
      \centering
      \includegraphics[scale=0.3,trim=0 0 0 0, clip]{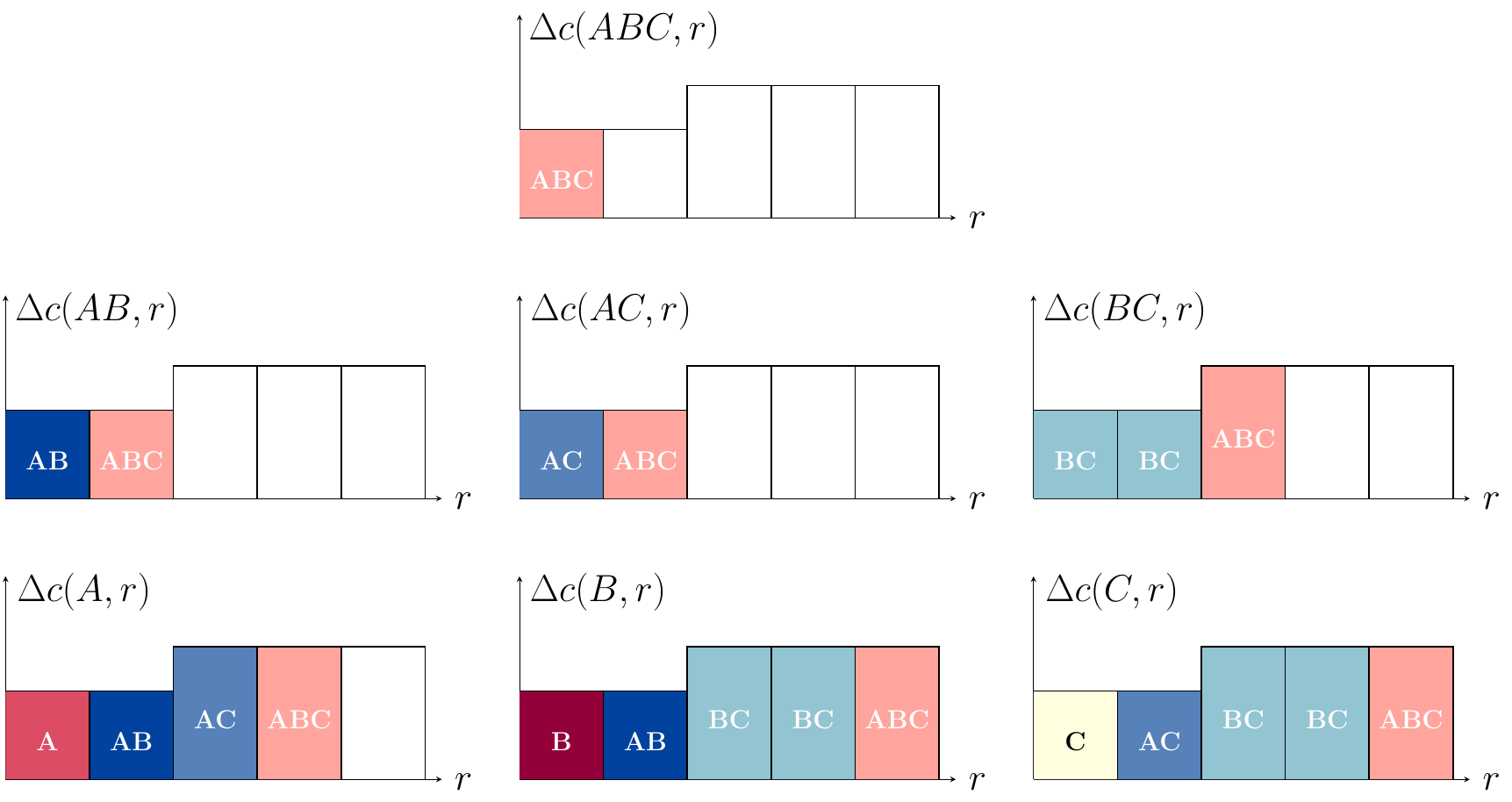}
        \caption{Incremental cost functions}
        \label{fig:mc_ABC_compl}
    \end{subfigure}
    \caption{Complete CFG and corresponding assignment on incremental cost functions of partition $\{A,B,C,AB,AC,$ $BC,BC,ABC\}$}
    \label{fig:CFG-and-cost-compl}
    \end{figure}
\end{example}
The two cost function graphs in the example above illustrate the most extreme cases of no cost interdependence and cost interdependence between all available bundles. More generally, the cost function graphs that satisfy \cref{def:cfg} allow a wide range of cost connections and associated packaging costs.

\subsection{Demand, Supply, and Equilibrium}
\label{sec:demand-supply-equ-multi-unit}

We define demand, supply, and Walrasian equilibrium. In our market, the equilibrium prices are \emph{package-linear}, i.e.,~the price of a partition is linear in the contained packages. The seller chooses how to partition and allocate their supply of individual varieties to buyers, i.e.,~she chooses a feasible multiset (anonymous partition) $\kb\in \K$ of packages to sell.

An allocation of items in $\Omega$ is defined as an assignment $\pi = (\pi(l))_{l \in \agents}$ of these items between the buyers and the seller, such that $\msetsum_{l \in \buyers} \pi(l) = \kb$ and $\pi(0) = \Omega - \kb^\mstar$. Recall that the operator $\mstar$ unpacks $\kb$ into a vector of individual varieties. $\pi(l)$ is the package multiset assigned to agent $l$ under the allocation $\pi$, where $\pi(l)$ may be the empty set, and $\pi(0) \neq \emptyset$ means that the items in $\pi(0)$ are not sold.
For all agents $l \in \agents$, the demand correspondences (or supply correspondence for the seller), and indirect utilities, are defined as $D^l(p):= \argmax_{\kb\in \K} u^l(\kb,p)$ and $ \mathcal{V}^l(p) := \max_{\kb\in \K} u^l(\kb,p)$.
An allocation $\pi$ is \emph{efficient} if, for every allocation $\pi'$, it holds that $\sum_{l\in\buyers}V^l(\pi(l)) - C^0(\kb) \geq \sum_{l\in\buyers}V^l(\pi'(l)) - C^0(\bm{k'})$, where $\kb = \msetsum_{l \in \buyers}\pi(l)$ and $\bm{k'} = \msetsum_{l \in \buyers} \pi'(l)$.
Given an efficient allocation $\pi$, the market value is defined as $V(\Omega):= \sum_{l\in\buyers}V^l(\pi(l)) - C^0(\kb)$. 

A \emph{package-linear pricing Walrasian equilibrium} is a tuple $(p^*,\pi^*)$, composed of a \emph{package-linear} pricing function $p^*(\kb) = \sum_{\Sin} k_S p^*(S)$, $\kb \in \msets$, with $p^*:2^N\rightarrow \mathbb{R}$ and an allocation $\pi^*$ such that $\msetsum_{l \in \buyers}\pi^*(l)\in D^0(p^*)$ and $\pi^*(l) \in D^l(p^*)$ for every buyer $l \in \buyers$.
Were the pricing function linear in varieties, the package-linear Walrasian equilibrium would reduce to the standard linear pricing Walrasian equilibrium.

\section{Walrasian Equilibrium}\label{sec:walrasian-equilibrium}
In this section, we formulate the social welfare maximization problem and characterize Walrasian equilibria via linear programming duality. The novelty in our results lies in that the seller's preferences are over supply partitions. Due to their nested structure of cost functions and graphs, we can summarize these preferences by the \emph{characteristic function} (\cref{lem:one-to-one-mapping}) and derive the pricing function. We then develop our main results starting with \cref{prop:pricing}, a characterization of competitive equilibrium prices, which exhibit a nested price structure related to the seller's partition preferences. Our \emph{characteristic function dual} relates prices and marginal gains from sale in \cref{prop:characteristic-dual}, and we confirm the important properties of uniformity, anonymity, and package-linearity in \cref{cor:uniform-anonmymous-package-linear-price}. In \cref{prop:primal-dual-CE}, we verify that this pricing function supports a competitive equilibrium allocation. Moreover, in \cref{prop:IP-LP-existence}, we show that the integrality of the LP solution is equivalent to the existence of a package-linear competitive equilibrium, generalizing the previous literature (e.g.,~\cite{Bikhchandani1997,Bikhchandani2002}) to incorporate partition preferences. Finally, we derive closed-form expressions for the characteristic function of complete graphs (describing the seller's partition preferences) in \cref{prop:charac-function-complete-cfg}. Our results on equilibrium existence and set-cover submodularity follow in \cref{sec:equilibrium-existence}.

As is standard in quasi-linear environments, the welfare theorems hold, stated in the following proposition. We provide a proof in \cref{proof:prop:efficientPLW-multi-unit} only for completeness.
\begin{proposition}\label{prop:efficientPLW-multi-unit}
     If $(p^*,\pi^*)$ is a package-linear Walrasian equilibrium, $\pi^*$ is an efficient allocation. If $\pi'$ is another efficient allocation, $(p^*,\pi')$ is also a package-linear Walrasian equilibrium.
\end{proposition}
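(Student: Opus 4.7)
The plan is to prove both welfare theorems by a standard summation argument, with the key leverage coming from the package-linearity of $p^*$, which allows aggregating the seller's revenue across buyers without introducing any residual price terms.

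For the first welfare theorem, I would start by writing down the utility-maximization condition implied by the equilibrium definition for each agent: for every buyer $l \in \buyers$, $V^l(\pi^*(l)) - p^*(\pi^*(l)) \geq V^l(\pi'(l)) - p^*(\pi'(l))$ for any competing $\pi'(l) \in \K$; and for the seller, $p^*(\kb^*) - C^0(\kb^*) \geq p^*(\kb') - C^0(\kb')$ for every $\kb' \in \K$, where $\kb^* = \msetsum_{l\in\buyers}\pi^*(l)$ and $\kb' = \msetsum_{l\in\buyers}\pi'(l)$ for any feasible allocation $\pi'$. Summing these $L+1$ inequalities and invoking package-linearity, which gives $p^*(\kb^*) = \msetsum_{l\in\buyers} p^*(\pi^*(l))$ and analogously $p^*(\kb') = \msetsum_{l\in\buyers} p^*(\pi'(l))$, the price terms cancel and I obtain $\sum_{l\in\buyers} V^l(\pi^*(l)) - C^0(\kb^*) \geq \sum_{l\in\buyers} V^l(\pi'(l)) - C^0(\kb')$. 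Since $\pi'$ was arbitrary, this is precisely efficiency of $\pi^*$.

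For the second welfare theorem, take any other efficient allocation $\pi'$, so $\sum_{l\in\buyers}V^l(\pi'(l)) - C^0(\kb') = V(\Omega) = \sum_{l\in\buyers}V^l(\pi^*(l)) - C^0(\kb^*)$. Instantiating the $L+1$ WE-inequalities above with this particular $\pi'$ and summing, package-linearity again collapses the price terms and yields an aggregate inequality whose two sides coincide with $V(\Omega)$. Hence every individual inequality in the sum must hold with equality. Tightness on the buyer side gives $V^l(\pi'(l)) - p^*(\pi'(l)) = V^l(\pi^*(l)) - p^*(\pi^*(l)) = \mathcal{V}^l(p^*)$, so $\pi'(l) \in D^l(p^*)$; tightness on the seller side gives $\kb' \in D^0(p^*)$. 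Therefore $(p^*,\pi')$ is a package-linear Walrasian equilibrium.

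The argument is short and the only subtle point is the use of package-linearity to ensure that the seller's revenue from selling $\kb^*$ equals the sum of the prices paid by the individual buyers for their shares $\pi^*(l)$; without this property the telescoping of the price terms would fail, and the quasilinear welfare argument would break down. This is why the proposition is stated specifically for package-linear equilibria, and it is the only nontrivial ingredient beyond bookkeeping.
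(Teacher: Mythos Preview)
Your proposal is correct and follows essentially the same approach as the paper's proof: both sum the $L+1$ optimality inequalities (buyers and seller) and use package-linearity to cancel the price terms, obtaining efficiency; for the second part, both argue that since the aggregate inequality collapses to an equality at another efficient $\pi'$, each individual inequality must bind. The only cosmetic difference is that the paper phrases the second part as a contradiction argument, whereas you state the ``sum of nonnegative slacks equals zero'' step directly.
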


\subsection{Social Welfare Maximization}\label{sec:welfare-maximization}

To formulate the social welfare maximization problem, we must aggregate the buyers' and the seller's preferences. As a first step, we reformulate the indirect utilities.
The buyers' value function can be rewritten using binary variables $x(S,q,l) \in \{0,1\}$, indicating whether fictitious unit-demand agent $q$ is assigned bundle $S$ or not.  We rewrite the indicator vectors in \cref{def:marg-value-aggr} as the constraint $\sum_{\Sin} x(S,q,l) \leqslant 1$, $\forall q \in\Qbarset$. Then, the buyers' indirect utility is given by
\begin{align}\label{equ:buyer-reformulation}
    \begin{split}
    V^l(\kb) & = \max_{\substack{\{x(S,q,l), \Sin, q \in \Qbarset\}}} \sum_{\substack{q \in \Qbarset, \Sin}} v^l(S,q) x(S,q,l) - \sum_{\Sin}k_S p(S) \\
    & \text{ s.t. } \quad x(S,\Qbarset,l) \leq k_S ~\forall S, \qquad x(2^N,q,l) \leq 1 ~\forall q, \qquad x(S,q,l) \in \{0, 1\} ~\forall S,q
    \end{split}
\end{align}
We denote by $\xb(S,l)$ the vector $(x(S,l,q))_{q\in\Qbarset}$, by $\xb(q,l)$ the vector $(x(S,l,q))_{\Sin}$, and by $\xb(l)$ the vector $(x(l))_{\Sin,q\in\Qbarset}$.

For the seller, we also introduce a binary variable $y(S,r)$. This variable indicates if partition $\kb$ invokes the assignment of step $r$ of the incremental cost function associated with bundle $S$. The cost function graph is encoded in the constraint in the seller's cost function in \cref{def:cost}, given by $r_s =\sum_{\substack{{S'} \in \Predec_0(S)}} k_{S'}$. Formally, $y(S,r)= 1$ if $r\leq r_s$ and zero otherwise. We assume that there exists a finite number of steps such that $\DC(S,r) < \infty$. We denote this number by $\Rbar := \max_{S,r}\{r:\DC(S,r) < \infty\}$ and $\Rbarset:=\{1,\dots,\Rbar\}$. 
\begin{lemma}\label{lemma:cost-function-assignment}
    If package $S$ appears in the seller's chosen partition, then one unit step on every incremental cost function corresponding to a successor of $S$, i.e.,~${S'}\in\Succes(S)$, must be assigned $y(S',\cdot) = 1$.
\end{lemma}
\begin{proof}
    This follows from the definitions of $r_S$ in \cref{def:cost} and of $y(S,r)$ above.
\end{proof}
\cref{lemma:cost-function-assignment} implies that the assignment on steps of an incremental cost function $\DC(S,\cdot)$ is limited by the minimum number of steps with finite height among all incremental cost functions $\DC(S',\cdot)$ of the successors $S'\in\Succes_0(S)$ of $S$. We write the seller's indirect utility as
\begin{align*}
    V^0(\kb) = \max_{\kb \in \K} \sum_{\Sin} k_S p(S) - \sum_{\substack{\Sin,r \in [r_s]}} \DC(S,r) y(S,r)\quad\text{ s.t. } r_s =\sum_{\substack{{S'} \in \Predec_0(S)}} k_{S'}, y(S,r)=1 ~\forall r\leq r_s
\end{align*}
Given a partition $\kb$, $y(S,r)$ for $\Sin$ and $r \in \N$ is uniquely defined. We also show that the converse holds, i.e.,~each CFG allocation $y(S,r)$ for $\Sin$ and $r \in \N$ can be mapped onto a unique partition of packages $\kb$. We formally state this in the lemma below and in Algorithm \ref{alg:one-to-one}. The proof is given in \cref{proof:lem:one-to-one-mapping}.
\begin{lemma}\label{lem:one-to-one-mapping}
    Given a cost function graph $G$, there exists a linear one-to-one mapping between an incremental cost function assignment $\{y(S,r)\}_{\Sin, r \leq \Rbar}$ and a corresponding package multiset $\kb$. This mapping, the ``characteristic function'' $\phi^G:\Z_+^{2^n\times \Rbar}\rightarrow \Z^{2^n}$, is inductively defined by Algorithm \ref{alg:one-to-one}.
\end{lemma}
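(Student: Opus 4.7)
The plan is to construct $\phi^G$ as the explicit triangular inverse of the map $\psi:\kb\mapsto\{y(S,r)\}$ defined by $y(S,r)=\mathbbm{1}[r\leq r_S]$ with $r_S=\sum_{S'\in\Predec_0(S)} k_{S'}$. The first reduction is that, since each $y(S,\cdot)$ is a monotone $0/1$ step function by construction, the integer $r_S$ can be recovered from $y$ by summation, $r_S = \sum_{r\in\Rbarset} y(S,r)$. It therefore suffices to invert the square linear system
\begin{align*}
r_S \;=\; \sum_{S'\in\Predec_0(S)} k_{S'}, \qquad \Sin.
\end{align*}

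Because the CFG is acyclic (\cref{observation:graph-properties}(b)), its $2^n$ nodes admit a topological order in which every strict predecessor $S'\in\Predec(S)$ appears before $S$. Under this order, the coefficient matrix of the system is unit triangular, so it has a unique solution over $\Z$, which is computed by the recursion that Algorithm~1 implements: at any source $S$, $\Predec_0(S)=\{S\}$ and we set $k_S := r_S$; at any non-source $S$, once $(k_{S'})_{S'\in\Predec(S)}$ has been computed, we set
\begin{align*}
k_S \;:=\; r_S - \sum_{S'\in\Predec(S)} k_{S'} \;=\; \sum_{r\in\Rbarset} y(S,r) - \sum_{S'\in\Predec(S)} k_{S'}.
\end{align*}
Because each $k_S$ is thereby assembled as a fixed $\Z$-linear combination of the $y$-variables, the map $\phi^G$ is linear. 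Triangular uniqueness then yields $\phi^G\circ\psi=\mathrm{id}$ on $\msets$, and conversely $\psi\circ\phi^G$ restores $r_S$ and hence the original monotone step function on the image of $\psi$, giving the one-to-one correspondence.

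The part I expect to be the main obstacle is not the existence or linearity of $\phi^G$, but formalizing the ``one-to-one'' claim cleanly: $\phi^G$ is declared as a linear map on all of $\Z_+^{2^n\times\Rbar}$, yet the bijection with $\msets$ only holds on the image of $\psi$ — the set of CFG-consistent monotone step assignments whose reconstructed multiplicities $k_S$ happen to be nonnegative. Verifying that the successive subtractions never produce a negative multiplicity on this restricted domain ultimately reduces to $\phi^G\circ\psi=\mathrm{id}$, but one must check inductively that no ``double counting'' occurs when a node $S$ has multiple ancestor paths in the DAG — this is automatic because $\Predec(S)$ is treated as a set rather than as a collection of paths, so each strict predecessor contributes to the recursion exactly once.
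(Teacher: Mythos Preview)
Your proof is correct and follows essentially the same approach as the paper. Both arguments exploit acyclicity of the CFG to process nodes in an order where all strict predecessors are handled first, then recover $k_S$ via the recursion $k_S = \sum_{r}y(S,r) - \sum_{S'\in\Predec(S)}k_{S'}$; you phrase this as inverting a unit-triangular linear system under a topological order, while the paper phrases it as termination and well-definedness of Algorithm~1, but the content is identical. Your added remark that the bijection holds only on the image of $\psi$ (the CFG-consistent monotone step assignments) is a useful clarification that the paper leaves implicit.
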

Intuitively, Algorithm \ref{alg:one-to-one} starts with a source $\mathcal{S}$ in the graph, for which $k_{\mathcal{S}} = y(S,\Rbarset)$. Then, it selects some node $S$ for which all predecessors have already been visited. Because the predecessors have been visited, we can compute $k_S = y(S,\Rbarset) - \sum_{S'\in\Predec(S)}k_{S'}$. Then, it selects another node for which all predecessors have been visited, and so forth. For brevity, we write $\phi^G_S(\{S,\Rbarset\}_{\Sin})$ as $\phi^G_S(\{S,\Rbarset\})$. Crucially, the mapping is linear.\\

\begin{algorithm}[H]
    \SetAlgoLined
    \caption{Construct partition from CFG assignment}
    \label{alg:one-to-one}
    \SetKwInOut{Input}{Input}
    \Input{Cost function graph $G = (V,A)$ with assignment $(y(S,r))_{S\in V, r \leq \Rbar}$}
    Initialize list of successfully visited nodes $\mathscr{V}:=\emptyset$.\\
    \While{$V \neq \mathscr{V}$}
    { Select a node $S\in V\setminus \mathscr{V}$ such that $ V\setminus \mathscr{V}\cap \Predec(S) = \emptyset$ (all predecessors of $S$ have been visited) \\
    Set $k_S = y(S,\Rbarset) - \sum_{S'\Predec(S)}k_{S'}$ and $\mathscr{V} = \mathscr{V} \cup S$}
    \Return $(k_S)_{\Sin} = (\phi_S^G(\{y(S,\Rbarset)\}))_{\Sin} = \phi^G(\{y(S,\Rbarset)\})$
\end{algorithm}$~$

Using \cref{lem:one-to-one-mapping}, we rewrite the seller's problem as follows.
\begin{align}\label{equ:seller-reformulation}
    \begin{split}
    V^0(\kb) = & \max_{\{y(S,r), \Sin,r\in\Rbarset\}} \sum_{\Sin} k_S p(S) - \sum_{\substack{\Sin,r \in \Rbarset}} \DC(S,r) y(S,r) \\
    & \text{ s.t. } \quad k_S = \phi_S^G(\{y(S,r)\})~\forall S, \qquad y(S,r)\in \{0,1\} ~\forall S,r
    \end{split}
\end{align}

With the reformulation of the buyers' and the seller's utility maximization problem given in \cref{equ:buyer-reformulation,equ:seller-reformulation}, we can now state the social welfare maximization problem. The objective is to find the \emph{partition} $\kb \in \K$ of supply between buyers that maximizes the sum of the buyers' and the seller's utilities, and thus the buyers' values minus the seller's costs. Using the characteristic function $\phi^G$ from \cref{lem:one-to-one-mapping}, we can write the problem as an allocation problem with only binary decision variables and substitute $k_S$. The welfare maximization problem is named ``SWP''. For all constraints, we write short $\forall S$ for $\forall S\in 2^N$, $\forall q$ for $\forall q \in [\Qbar]$, $\forall r$ for $\forall r \in [\Rbar]$, and $\forall l$ for $\forall l \in \buyers$, and we use implicit summation.\\

{\bfseries SWP}
\begin{equation*}
    \max_{\{x(S,q,l),y(S,r),\Sin,q\in\Qbar,r\in\Rbar\}} \left[ \sum_{S,q,l} v^l(S,q) x(S,q,l) - \sum_{S,r}\DC(S,r)y(S,r)\right]
\end{equation*}
s.t.
\begin{flushright}
\noindent\begin{tabular}
    {%
    @{\ }>{$}r<{$}@{\ }%
    @{\ }>{$}c<{$}@{\ }%
    @{\ }>{$}l<{$}@{\ }%
    @{\ }>{$}l<{$}@{\ }%
    @{\ }>{$}l<{$}@{\hspace{85pt} }%
    @{\ }>{$}r<{$}@{}
    }%
    x(2^N,q,l) & \leq & 1 & \forall q,l &  & \constraint{equ:SWP1} \\  
    x(S,\Qbarset,\buyers) - \phi^G_S(\{y(S,r)\}) & \leq & 0 & \forall S & & \constraint{equ:SWP2} \\ 
    x(S,q,l), y(S,r) & \in & \{0,1\} & \forall S,q,r,l & & \constraint{equ:SWP3} \\ 
\end{tabular}
\end{flushright}
Prices in the objective function cancel because, in equilibrium, the demand and supply of bundles must be equal. Relaxing the integrality constraints, we write the integer program as the linear program ``SWLP'' with the corresponding dual variables listed next to the constraints. The feasible set of the SWLP is a non-empty, convex polytope, and therefore an optimal solution always exists; by strong duality, an optimal solution for its dual ``DSWLP'' also exists.\\

{\bfseries SWLP}
\begin{equation*}
    \max_{\{x(S,q,l),y(S,r),\Sin,q\in\Qbar,r\in\Rbar\}} \left[ \sum_{S,q,l}v^l(S,q)x(S,q,l) - \sum_{S,r}\DC(S,r)y(S,r)\right]
\end{equation*}
s.t.
\begin{flushright}
\noindent\begin{tabular}
    {%
    @{\ }>{$}r<{$}@{\ }%
    @{\ }>{$}c<{$}@{\ }%
    @{\ }>{$}l<{$}@{\ }%
    @{\ }>{$}l<{$}@{\ }%
    @{\ }>{$}l<{$}@{\hspace{80pt} }%
    @{\ }>{$}r<{$}@{}
    }%
    x(2^N,q,l) & \leq & 1 & \forall q,l & [b(q,l)] & \constraint{equ:SWLP1} \\ 
    x(S,\Qbarset,\buyers) - \phi^G_S(\{y(S,r)\}) & \leq & 0 & \forall S & [p(S)] & \constraint{equ:SWLP2} \\ 
    y(S,r) & \leq & 1 & \forall S,r & [d(S,r)] & \constraint{equ:SWLP3} \\ 
    x(S,q,l), y(S,r) & \geq & 0 & \forall S,q,r & & \constraint{equ:SWLP4}
\end{tabular}
\end{flushright}
\medskip

Note that the constraint $x(S,q,l)\leq 1$ is implied by the first constraint and can thus be omitted. The corresponding dual problem ``DSWLP'' is constructed using the characteristic function dual $\psi^G(\{p(S)\})$ of $\phi^G(\{y(S,r)\})$. Formally, let $\phi^G(\{y(S,r)\}) = \Phi \bm{y}^\intercal$, where $\Phi$ is a $2^n\times2^n$-matrix determined by Algorithm \ref{alg:one-to-one}, and $\bm{y} = \left( y(S,\Rbarset)\right)_{\Sin}$, i.e.,~a row vector each entry of which contains the total quantity allocated on incremental cost function $\DC(S,\cdot)$. Thus, we have $\phi^G_S(\{y(S,r)\}) = \Phi_S \bm{y}^\intercal$, i.e.,~the row corresponding to package $S$ of $\Phi$ multiplied by $\bm{y}^\intercal$. We define the characteristic function dual $\psi^G(\{p(S)\}) = {\Phi}^\intercal \bm{p}^\intercal$, where $\bm{p} = \left(p(S)\right)_\Sin$.\\

{\bfseries DSWLP}
\begin{equation*}
    \min_{\{b(q,l),p(S),d(S,r), \Sin, q\in\Qbarset,r\in\Rbarset,l\in\buyers\}}\left[
    \sum_{q,l} b(q,l) + \sum_{S,r} d(S,r) \right]
\end{equation*}%
s.t.
\begin{flushright}
\noindent\begin{tabular}
    {%
    @{\ }>{$}r<{$}@{\ }%
    @{\ }>{$}c<{$}@{\ }%
    @{\ }>{$}l<{$}@{\ }%
    @{\ }>{$}l<{$}@{\ }%
    @{\ }>{$}l<{$}@{\hspace{80pt} }%
    @{\ }>{$}r<{$}@{}
    }%
    b(q,l)+p(S) & \geq & v^l(S,q) & \forall S,q,l & [x(S,q,l)] & \constraint{equ:DSWLP1} \\ 
    d(S,r) - \psi^G_S(\{p(S)\}) & \geq & -\DC(S,r) & \forall S,r & [y(S,r)] & \constraint{equ:DSWLP2} \\
    b(q,l), d(S,r), p(S) & \geq & 0 & \forall S,q,r,l & &  \constraint{equ:DSWLP3}
\end{tabular}
\end{flushright}
\medskip

The pricing function is given as part of the DSWLP solution: $p(S)$ is the value of the last ``unit'' of a given bundle $S$ the seller assigns to some buyer. The dual variable $b(q,l)$ is the surplus of fictitious agent $q$ of buyer $l$, and $d(S,r)$ is the seller's incremental surplus from selling a bundle with a cost connection to $S$ assuming a corresponding incremental cost of $\DC(S,r)$.

We establish in \cref{prop:primal-dual-CE} that $p(S)$ is indeed a competitive equilibrium pricing function. This pricing function takes into account the interactions of a package $S$ with the aggregate partition $\kb$ the seller supplies. The price of the supplied partition, as well as the price for any buyer demanding a package multiset, is given by $p(\kb) = \sum_{\Sin} k_S p(S)$. We formally state the price structure in \cref{prop:pricing} below. We define $\rtild_S:= \argmax_r\{y(S,r):y(S,r)>0\}$, the last step on incremental cost function corresponding to bundle $S$, on which a positive quantity is allocated, and $\rtild:=\min_\Sin \rtild_S$.
\begin{lemma}\label{lemma:monotonicity-r-tilde}
    For all $\Sin$ and for all ${S'}\in\Succes(S)$, it holds that $\rtild_S \leq \rtild_{S'}$. 
\end{lemma}
\begin{proof}
    As noted in \cref{lemma:cost-function-assignment}, the assignment $y(S,r) = 1$ of a step on some incremental cost function $\DC(S,\cdot)$ requires the assignment of some step $r'$ such that $y({S'},r') = 1$ on every incremental cost function $\DC({S'},\cdot)$, if $S$ has a cost connection to ${S'}$, i.e.,~${S'}\in\Succes(S)$. Therefore, for any given partition, on any incremental cost function corresponding to ${S'}$, there must be at least as many steps with $y({S'},\cdot) = 1$ as on any incremental cost function corresponding to~$S$.
\end{proof}

\subsection{Competitive Equilibrium and its Pricing Function}\label{sec:competitive-equilibrium}
The linear programming formulation permits us to characterize the equilibrium pricing function, which exhibits a structure intimately related to the seller's partition preferences. The proof proceeds with an alternative, more complex formulation of SWLP. All proofs for this section are given in \cref{app:sec:proofs-walrasian-equilibrium}.
\begin{proposition}\label{prop:pricing}
    Fixing any $r \leq \Rbar$, it holds that $p(S) \leq \DC(\Succes_0(S),r) + d(\Succes_0(S),r)$ for all $\Sin$. For all $\Sin$ and $r\leq \rtild_S$, it holds that
    \begin{align*}
        p(S) = \DC (\Succes_0(S), r) + d(\Succes_0(S),r).
    \end{align*}
\end{proposition}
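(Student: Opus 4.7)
The plan is to reduce both claims to a single algebraic identity,
\[
p(S) \;=\; \sum_{T\in\Succes_0(S)} \psi^G_T(p), \qquad \Sin,
\]
and then apply LP duality. To derive the identity I would write the defining relation $r_S=\sum_{S'\in\Predec_0(S)} k_{S'}$ as the matrix equation $\bm{r}=M\bm{k}$, where $M_{S,S'}=\mathbbm{1}[S'\in\Predec_0(S)]$. By \cref{lem:one-to-one-mapping} the map $\phi^G$ is a linear bijection, so $\Phi=M^{-1}$ and hence $(\Phi M)^\intercal=M^\intercal\Phi^\intercal=I$. Since $\psi^G(p)=\Phi^\intercal\bm{p}^\intercal$, reading the $S$-th row of $M^\intercal\psi^G(p)=p$ together with $(M^\intercal)_{S,T}=M_{T,S}=\mathbbm{1}[T\in\Succes_0(S)]$ yields the identity.

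The inequality now follows directly from dual feasibility. Constraint~(\ref{equ:DSWLP2}) gives $\psi^G_T(p)\le d(T,r)+\DC(T,r)$ for every $T\in 2^N$ and every $r\le\Rbar$; summing these inequalities over $T\in\Succes_0(S)$ at a common $r$ and invoking the identity produces
\[
p(S) \;\le\; \DC(\Succes_0(S),r) + d(\Succes_0(S),r).
\]

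For the equality case $r\le\rtild_S$, I would apply \cref{observation:monotonicity-r-tilde} to obtain $r\le\rtild_S\le\rtild_T$ for every $T\in\Succes_0(S)$. \cref{ass:increasing-cost} lets us work with a primal LP optimum in which, for each bundle, the incremental cost-function steps are filled from the bottom (if $y(T,r_2)>0$ while $y(T,r_1)=0$ for some $r_1<r_2$, shifting weight from $r_2$ down to $r_1$ preserves $\phi^G_{\cdot}(\{y\})$ and weakly decreases cost). With such an optimum, $r\le\rtild_T$ implies $y(T,r)>0$, and complementary slackness on constraint~(\ref{equ:DSWLP2}) forces $\psi^G_T(p)=d(T,r)+\DC(T,r)$ for every $T\in\Succes_0(S)$. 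Summing once more and applying the identity delivers the claimed equality.

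The main obstacle is the first step: the dual quantities $\psi^G_T(p)$ are defined only implicitly via $\Phi^\intercal$ produced by Algorithm~\ref{alg:one-to-one}, so one must verify they aggregate over successors to reproduce $p(S)$. Once this Möbius-style inversion is in place, the remaining arguments reduce to standard LP feasibility and complementary slackness, combined with \cref{observation:monotonicity-r-tilde} to ensure that tightness of the dual constraint at any $r\le\rtild_S$ propagates simultaneously to every successor of $S$.
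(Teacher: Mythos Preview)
Your argument is correct and takes a genuinely different route from the paper. The paper does not work with the original DSWLP; instead it introduces an auxiliary linear program SWLP$'$ in which the cost-graph structure is unpacked into explicit equality constraints $y(S,r(S,S))=y(S',r(S',S))$ for every $S'\in\Succes(S)$, so that the nesting of prices inside $\psi^G$ disappears from the dual. The desired bound then falls out of two dual constraints of DSWLP$'$ by substitution, and the translation back to the original indexing is handled informally by observing that the order of assignment on incremental cost functions is immaterial. Your approach stays entirely within the original formulation: you recognise that $\Phi=M^{-1}$ with $M_{S,S'}=\mathbbm{1}[S'\in\Predec_0(S)]$, so $M^\intercal\Phi^\intercal=I$ yields the M\"obius-type identity $p(S)=\sum_{T\in\Succes_0(S)}\psi^G_T(p)$, after which feasibility and complementary slackness on~(\ref{equ:DSWLP2}) do all the work. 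What you gain is economy---no auxiliary LP, no equivalence argument---at the price of having to justify that the implicit map $\psi^G$ really aggregates this way; the paper's reformulation trades that algebraic step for an expanded variable space. Your explicit bottom-filling argument (invoking \cref{ass:increasing-cost} and the fact that $\phi^G$ depends on the $y(T,\cdot)$ only through their sums) is also a slight sharpening: it makes precise why, for the equality claim, one may assume $y(T,r)>0$ whenever $r\le\rtild_T$, which the paper handles more loosely via the reindexing in SWLP$'$.
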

The above proposition characterizes prices as a function of the incremental costs associated with the bundles that are allocated to buyers, i.e.,~those assigned on cost function steps $r\leq \rtild_S$. The price of a bundle $S$ nests the incremental costs of all cost-connected bundles and the associated dual variables $d(S,r)$.

Furthermore, we characterize the characteristic function dual $\psi^G_S(\{p(S)\})$. Recall that the characteristic function $\phi^G$ maps the assignment on the incremental cost functions $\{y(S,r)\}_{\Sin, r \leq \Rbar}$ to the corresponding supply partition. Its dual $\psi_S^G$ maps the set of prices $\{p(S)\}_{\Sin}$ to the price gain from combining the individual items in $S$ together in package $S$, given all other cost connections. We formalize this in the following corollary.
\begin{corollary}\label{prop:characteristic-dual}
    For all $r \leq \rtild_S$, it holds that $\psi^G_S(\{p(S)\}) = p(S) - \left(\DC(\Succes(S),r) + d(\Succes(S),r)\right)$.
\end{corollary}
For the final unit of a bundle allocated, $\rtild_S$, we must have $\psi^G_S(\{p(S)\}) \leq \DC(S,\rtild_S+1)$, as for a price gain strictly greater than $\DC(S,r+1)$, the seller prefers to sell an additional bundle $S$.
The pricing function is a generalization of the uniform pricing rule for a market with multiple packages of different varieties and it can be shown to support an appropriate allocation as a competitive equilibrium.
\begin{proposition}\label{prop:primal-dual-CE}
    The pricing function $p^*(\kb) = \sum_{\Sin} k_S p^*(S)$, $\kb \in \msets$, where $\{p(S)\}_{S\in 2^N}$ is part of an optimal solution of DSWLP, supports the allocation $\{x^*(S,q,l)\}_{S\in 2^N, q \in [\Qbar^l], l \in \buyers}$, $\phi^G(\{y^*(S,r)\})$, given by an optimal solution of SWLP, as a package-linear pricing Walrasian equilibrium.
\end{proposition}
Because the seller has preferences over partitions, the proof requires additional techniques, compared to the standard LP literature, e.g.,~\cite{Bikhchandani1997}, and relies on \cref{prop:pricing}, \cref{prop:characteristic-dual}, and \cref{lem:facts} in the appendix.

The next corollary follows directly from \cref{prop:pricing} and \cref{prop:primal-dual-CE} and confirms the desired properties of equilibrium prices.
\begin{corollary}\label{cor:uniform-anonmymous-package-linear-price}
    The equilibrium pricing function $p(\kb)$ is uniform, anonymous, and package-linear.
\end{corollary}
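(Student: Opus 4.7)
The plan is to derive all three properties directly from the structure of the DSWLP together with the explicit characterization of $p(S)$ given in \cref{prop:pricing}. First, I would observe that the constraint \cref{equ:SWLP2} in the SWLP is indexed only by packages $\Sin$, one constraint per package, aggregating across all buyers $l \in \buyers$ and all fictitious unit-demand agents $q \in \Qbarset$ on the left-hand side. Therefore, its associated dual variable $p(S)$ is a scalar depending only on the package $S$, not on any buyer index. This immediately delivers \emph{anonymity}: the dual price of a package is independent of who receives it. Moreover, \cref{prop:pricing} expresses $p(S) = \DC(\Succes_0(S),r) + d(\Succes_0(S),r)$ for $r \leq \rtild_S$, and the right-hand side depends only on structural quantities of the cost function graph (successors of $S$ and their dual surpluses), containing no buyer-specific terms, which reinforces anonymity.

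Second, for \emph{package-linearity}, I would invoke the definition $p(\kb) = \sum_{\Sin} k_S p(S)$ stated just before the corollary. This expression is, by construction, a linear function of the multiplicities $\{k_S\}_{\Sin}$ with coefficients given by the scalar per-package prices $p(S)$. This matches exactly the definition of a package-linear pricing function introduced in \cref{sec:agents-and-preferences}, so the property holds by construction once we have established that each $p(S)$ is a single well-defined scalar per package (which step one does).

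Finally, \emph{uniformity} is immediate: since each copy of the same package $S$ in a package multiset $\kb$ contributes exactly $p(S)$ to the total price $p(\kb)$, all units of the same package trade at the same per-package price, regardless of how many copies are allocated or to which buyer.

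I do not anticipate a genuine obstacle here; the corollary is essentially a bookkeeping observation about the dual variables of the SWLP and about the formula defining $p(\kb)$. The only point that deserves emphasis in the write-up is that \cref{prop:pricing} is what rules out any hidden dependence of $p(S)$ on buyer identity through the $d(\cdot,\cdot)$ variables, since those too are indexed by packages and incremental-cost steps, not by buyers. Once this is noted, all three properties follow in one or two lines each.
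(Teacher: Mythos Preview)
Your proposal is correct and matches the paper's approach: the paper simply states that the corollary is immediate from \cref{prop:pricing}, and your write-up spells out precisely why---namely that the dual variable $p(S)$ attached to constraint \eqref{equ:SWLP2} is indexed by packages alone, that \cref{prop:pricing} confirms no buyer-specific terms enter, and that the definition $p(\kb)=\sum_{\Sin}k_S\,p(S)$ then gives uniformity and package-linearity by construction. There is nothing to add or correct.
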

Furthermore, we show that a Walrasian equilibrium exists if and only if it is characterized by an optimal solution of SWLP, further generalizing results of \cite{Bikhchandani1997} and \cite{Bikhchandani2002}.
\begin{proposition}\label{prop:IP-LP-existence}
    A package-linear pricing Walrasian equilibrium exists if and only if any optimal solution to SWP is also an optimal solution to SWLP, i.e.,~the optimum values of SWP and SWLP coincide.
\end{proposition}
To our knowledge, our \cref{prop:IP-LP-existence} is first to accommodate preferences over partitions of supply in the primal-dual characterization and existence equivalence of Walrasian equilibrium. Despite the generality of these partition preferences, their nested structure facilitates the pricing properties given in \cref{cor:uniform-anonmymous-package-linear-price}. Uniformity, anonymity, and linearity (in packages) are highly desirable in applications for reasons of fairness and transparency.

We also note that competitive equilibrium prices are not necessarily unique. This creates flexibility for the seller to choose from a set of equilibrium prices and to specify additional rules to do so.%
\footnote{Modern LP solvers can return the set of all integer solutions.} A set of lowest equilibrium prices for all packages may not exist as demonstrated in \cref{ex:running-example-3} below, but the seller may choose the lowest prices according to a lexicographic ordering.

In general, cost function graphs can be represented by their characteristic function, which can be computed using Algorithm \ref{alg:one-to-one}. For complete graphs, we also derive a closed-form solution of the characteristic function. First, we first define ``levels'' within a graph.
\begin{definition}\label{def:levels} 
    Let $x,y \in 2^N$. Then $x \subset_{t} y : = \{x \mid x \subseteq y, |y|-|x| = t\}$. $y$ is said to be $r$ levels above $x$, and $x$ is $r$ levels below $y$.
\end{definition}
Similarly, we define $x \subset_{\geq t} y$ and $x \subset_{\leq t} y$, whereby $|y|-|x| = t$ is replaced with $|y|-|x| \geq t$ and $|y|-|x| \leq t$, respectively. $x\supset_0 y$ implies $x=y$.
In a complete graph, each bundle has a cost connection to all of its subsets. This can be achieved with a minimal number of edges with each package $S$ pointing only to its subsets of size $|S|-1$. An example is shown in \cref{fig:cfg_ABC_compl}.
\begin{proposition}\label{prop:charac-function-complete-cfg}
    Let $G$ be a complete cost function graph. Then its characteristic function is given by $\phi^G_S(\{y(S,r)\}) = \sum_{t=0}^{n-|S|}\sum_{r,S' \supset_{t}S} (-1)^t y(S',r)$ for $\Sin$.
\end{proposition}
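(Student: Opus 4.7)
The approach is a downward induction on $n - |S|$, starting from $S = N$. In a complete cost function graph, the strict predecessors of $S$ coincide with the strict supersets of $S$ in $2^N$: directed edges go from each $T$ to its $(|T|{-}1)$-element subsets, so any $S' \supsetneq S$ can reach $S$ by peeling off one element at a time, and conversely any predecessor must contain $S$ by \cref{def:cfg}(i). The base case $S = N$ is immediate: $N$ has no strict supersets, Algorithm \ref{alg:one-to-one} returns $k_N = y(N, [\Rbar])$, and the claimed formula reduces to the single term with $t = 0$.

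For the inductive step, applying Algorithm \ref{alg:one-to-one} gives
$$k_S \;=\; y(S,[\Rbar]) \;-\; \sum_{S' \supsetneq S} k_{S'},$$
and plugging in the inductive hypothesis for each $k_{S'}$ yields a double sum. I would then swap the order of summation to collect the coefficient of each $y(S'', r)$ with $S'' \supsetneq S$. Setting $t := |S''| - |S| \geq 1$, this coefficient equals
$$-\sum_{S' \,:\, S \subsetneq S' \subseteq S''} (-1)^{|S''|-|S'|} \;=\; -\sum_{j=0}^{t-1} \binom{t}{j} (-1)^{j},$$
where I reindex by $j := |S''| - |S'|$; the binomial coefficient $\binom{t}{j}$ counts the $S'$ of the required size via the bijection $S' \mapsto S'' \setminus S'$, which is a $j$-element subset of the $t$-element set $S'' \setminus S$ (and the strict inclusion $S \subsetneq S'$ forces $j \leq t-1$).

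The main combinatorial step is the identity $\sum_{j=0}^{t-1}\binom{t}{j}(-1)^j = -(-1)^t$, which follows at once from $\sum_{j=0}^{t}\binom{t}{j}(-1)^j = 0$ for $t \geq 1$. Negating gives coefficient $(-1)^t$ for $y(S'', r)$, precisely matching the claimed formula, while the coefficient of $y(S,r)$ itself is $1 = (-1)^0$ from the leading term $y(S,[\Rbar])$. This closes the induction.

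The main potential obstacle is bookkeeping: making sure that when the inductive hypothesis is substituted into $\sum_{S' \supsetneq S} k_{S'}$ and the summation is reordered, each strict superset $S''$ of $S$ is counted with the correct multiplicity, and that the case $S' = S$ is properly excluded (it is, because $\Predec(S)$ contains only \emph{strict} supersets). Once the index reshuffling is carried out carefully, the final identity is a one-line consequence of the binomial theorem.
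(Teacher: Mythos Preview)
Your argument is correct. The identification $\Predec(S)=\{S'\in 2^N: S\subsetneq S'\}$ for the complete graph is right, the induction is well-founded (the recurrence from Algorithm~\ref{alg:one-to-one} only involves strict supersets, all of larger cardinality), and the coefficient computation via the bijection $S'\mapsto S''\setminus S'$ together with $\sum_{j=0}^{t}\binom{t}{j}(-1)^j=0$ is exactly what is needed. One small wording point: what you describe is \emph{upward} induction on $n-|S|$ (equivalently downward on $|S|$); the phrase ``downward induction on $n-|S|$'' says the opposite of what you do, though your base case and hypothesis make the intended direction clear.

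The paper proceeds differently. Rather than induct, it verifies the formula directly by decomposing each $\sum_r y(S',r)$ according to which allocated bundle $S''\supseteq S'$ is responsible for each unit on the incremental cost function $\Delta c(S',\cdot)$; it introduces an auxiliary quantity $y(S',S'',t,S)$ for this purpose, notes (via a counting lemma equivalent to your bijection) that there are $\binom{t}{r}$ intermediate nodes at level $r$ between $S$ and a fixed $S''$ at level $t$, and then collapses the resulting double sum with the same binomial identity you use. Your inductive route is more elementary and closer in spirit to M\"obius inversion on the Boolean lattice, avoiding the auxiliary bookkeeping of ``which bundle caused which step''; the paper's route is more operational, tying the formula to the semantics of the cost-function assignment. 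Both hinge on the same alternating-sum identity, so neither is materially shorter, but yours requires less ad hoc notation.
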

Using LP duality, we also derive a closed-form solution for the characteristic function dual.
\begin{corollary}\label{cor:charac-function-dual-complete-cfg}
    Let $G$ be a complete cost function graph. Then its characteristic function dual is given by $\psi^G_S(\{p(S)\}) = \sum_{t=0}^{|S|-1} \sum_{S' \subset_t S} (-1)^t p(S')$ for $\Sin$.
\end{corollary}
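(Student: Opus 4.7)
The plan is to derive $\psi^G_S$ directly from the closed form for $\phi^G_S$ in \cref{prop:charac-function-complete-cfg} via the matrix transpose relation defined just before the statement of DSWLP. Recall that the mapping $\phi^G$ is linear in $\bm{y}$ and is written $\phi^G(\{y(S,r)\}) = \Phi \bm{y}^\intercal$, and by construction $\psi^G(\{p(S)\}) = \Phi^\intercal \bm{p}^\intercal$. So the task reduces to (i) reading off the entries of $\Phi$ from \cref{prop:charac-function-complete-cfg}, (ii) transposing, and (iii) reindexing the resulting sum.

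First I would identify $\Phi$. Since $\bm{y}$ has entries $y(S',\Rbarset) = \sum_r y(S',r)$, the formula $\phi^G_S(\{y(S,r)\}) = \sum_{t=0}^{n-|S|}\sum_{r,\,S' \supset_t S} (-1)^t y(S',r)$ shows that the coefficient of $y(S',\Rbarset)$ in $\phi^G_S$ is $(-1)^{|S'|-|S|}$ when $S \subseteq S'$ and zero otherwise. Hence
\[
\Phi_{S,S'} \;=\; \begin{cases} (-1)^{|S'|-|S|} & \text{if } S \subseteq S',\\ 0 & \text{otherwise.} \end{cases}
\]

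Second, transposing yields $(\Phi^\intercal)_{S,S'} = \Phi_{S',S}$, which equals $(-1)^{|S|-|S'|}$ when $S' \subseteq S$ and zero otherwise. Therefore
\[
\psi^G_S(\{p(S)\}) \;=\; \sum_{S' \in 2^N} (\Phi^\intercal)_{S,S'}\,p(S') \;=\; \sum_{S' \subseteq S} (-1)^{|S|-|S'|}\,p(S').
\]
Reindexing by $t := |S|-|S'|$, so that $S' \subset_t S$ in the notation of \cref{def:levels}, and noting that the $t=|S|$ term corresponds to $S'=\emptyset$ and vanishes since pricing functions satisfy $p(\emptyset)=0$, the sum may be truncated at $t = |S|-1$, giving the claimed expression.

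There is no substantial obstacle here: the result is a routine linear-algebraic consequence of \cref{prop:charac-function-complete-cfg} once the matrix $\Phi$ is explicitly identified. The only care required lies in the index bookkeeping between ``levels above'' (used for $\phi^G$) and ``levels below'' (used for $\psi^G$), and in invoking $p(\emptyset)=0$ to drop the boundary term so that the upper limit matches the stated $|S|-1$.
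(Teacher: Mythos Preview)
Your proposal is correct and is precisely the argument the paper has in mind: the corollary is stated without proof, as an immediate consequence of \cref{prop:charac-function-complete-cfg} via the transpose relation $\psi^G = \Phi^\intercal \bm{p}^\intercal$ that the paper introduces just before DSWLP. Your identification of $\Phi_{S,S'}$, the transposition, and the use of $p(\emptyset)=0$ to truncate at $t=|S|-1$ are exactly the details that make the ``immediate'' claim rigorous.
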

Complete cost function graphs may be especially relevant in the procurement of factor inputs. By way of example, suppose that if services $A$ and $B$ are delivered by the same provider, the buyer incurs cost savings, and similarly for services $A$ and $C$. However, if services $B$ and $C$ are delivered by the same provider, additional costs arise, e.g.,~because at least either $B$ and or $C$ are crucial to the production and the contingency risk of the provider. If all three services $A$, $B$, and $C$ are delivered by the same provider, the cost savings of $A$ and $B$, $A$ and $C$, and the additional costs of pairing $B$ and $C$ enter the cost function, plus an additional term to account for the interaction of $A$, $B$, and $C$.

\begin{example}[continues=running-example]\label{ex:running-example-3}
    \noindent Compared to the previous \cref{ex:running-example}, suppose there are two additional unit-demand agents labeled 3 and 4. All values are given in \cref{tab:ex1:incremental-cost-full}.
    The seller has cost savings if she sells one unit of $A$ and one unit of $B$ as a package; hence, agent 4 obtains $\{AB\}$ (breaking the tie between an alternative allocation of $A$ to agent 4 and $B$ to agent 1). For the second unit of $A$ and $B$, the seller is indifferent between selling items $A$ and $B$ separately and selling them as a bundle, so agent 1 and agent 3 win. The assignment of bundles sold on incremental cost functions is shown in \cref{fig:ex1:incr-cost-functions}. The cost of supplying bundle $AB$ consist of the cost of supplying $A$, $B$, and the packaging cost (savings) $\Delta c(AB,1)$.
    One can verify that the set of equilibrium prices is given by $(p(A),p(B),p({AB})$  $\in$ $\{(4,5,9)$, $(5,4,9)$, $(5,5,9)$, $(5,5,10)\}$, which all support the unique equilibrium allocation.
    From \cref{prop:pricing} we have that, e.g.,~$p({AB}) = p(A) + p(B) + \DC({AB},1) + d({AB},1)$, where $d({AB},1)$ can be either $0$ or $1$.

    \begin{figure}
        \centering
        \begin{subfigure}[b]{0.49\textwidth}
        \centering
        \includegraphics[scale=0.42]{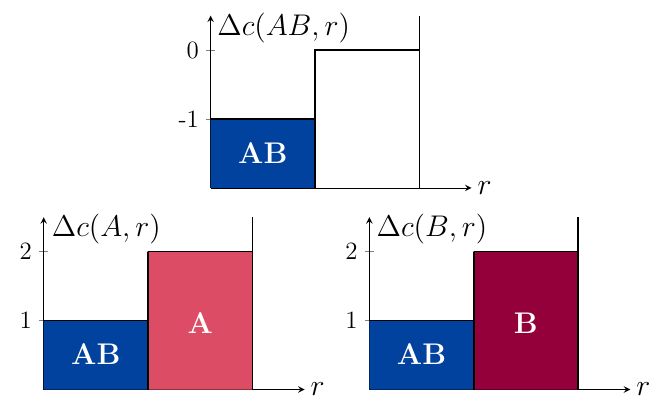}
        \caption{Incremental cost functions}
        \label{fig:ex1:incr-cost-functions}
        \end{subfigure}
        \begin{subfigure}[b]{0.49\textwidth}
        \centering
        \begin{tabular}{r|ccc}
        \toprule
        q     & $v^1(q,A)$ & $v^1(q,B)$ & $v^1(q,AB)$ \\
        \midrule
        1     & 3     & \underline{5}     & 9  \\
        2     & 1     & 2     & 9  \\
        3     & \underline{5}     & 3     & 8  \\
        4     & 6     & 2     & \underline{11} \\
        \bottomrule
        \end{tabular}
        \caption{Unit-demand values}
        \label{tab:ex1:incremental-cost-full}
        \end{subfigure}
        \caption{Bundles assigned on incremental cost functions and buyers' values with allocations}
        \label{fig:enter-label}
    \end{figure}
\end{example}

\section{Equilibrium Existence}\label{sec:equilibrium-existence}

We now establish sufficient conditions for the existence of package-linear Walrasian equilibria, and we show a duality between revenue-maximizing and utility-maximizing sellers (\cref{sec:revenue-vs-utility}). Market supply is restricted to one unit per variety, and we assume that the seller's partition preferences only comprise (weakly) negative packaging costs and buyers have (weakly) superadditive values, that is, they view items as weak complements. We prove that under these conditions, there exists a package-linear Walrasian equilibrium, using an ascending auction (\cref{sec:constructing-an-equilibrium}).

Given a market supply of $N$, supply partitions are described by vectors $\partition\in \{0,1\}^{2^n}$. Each buyer has a valuation $v^l: 2^N \rightarrow \Z_+$ and the seller's marginal cost of selling a package $\Sin$ is given by $c^0:2^N\rightarrow \Z_+$. These marginal costs can be made up of any arbitrary incremental cost function and cost function graphs as described in \cref{sec:agents-and-preferences}. The aggregate cost of selling a partition $\partition$ is given by $C^0(\partition) = \sum_{\Sin} c^0(S) k_S$.
We also write a partition now simply as $\partition=\{S_1,\dots,S_k\}$ where $S_t \in 2^N, t=1,\dots,k$, $S_{t_1}\cap S_{t_2} = \emptyset$ for all $S_{t_1}\neq S_{t_2}$, and $\bigcup_{t=1}^k S_t \in 2^N$. An allocation $\pi = (\pi(0),\pi(1),\dots,\pi(L))$ is also interpreted as a partition of supply, where $\pi(l)$ is removed if it is the empty set. Utilities, demand and supply correspondences, and indirect utilities remain unchanged.
\begin{assumption}
    The buyers' valuations are superadditive, i.e.,~for all disjoint $S_1,S_2 \in 2^N$ and for all buyers $l\in\buyers$, $v^l(S_1) + v^l(S_2) \leq v^l(S_1 \cup S_2)$.
\end{assumption}
Superadditivity is the most general concept of complementarity, which contains supermodularity and gross complements \citep{Samuelson1974,Sun2014}.
\begin{assumption}
    The seller's marginal costs are subadditive, i.e.,~for all disjoint $S_1,S_2 \in 2^N$, $c^0(S_1) + c^0(S_2) \geq c^0(S_1 \cup S_2)$.
\end{assumption}
We show the existence of package-linear Walrasian equilibrium by construction with a modified version of the ascending auction by \cite{Sun2014} (henceforth SY), and we extend this modified ascending auction to strictly generalize the auction by SY. We state the theorem and give the proof in \cref{sec:constructing-an-equilibrium}.
\begin{theorem}\label{th:equilibrium-existence}
    If the buyers' value functions satisfy superadditivity and the seller's marginal cost function satisfies subadditivity, there exists a package-linear Walrasian equilibrium.
\end{theorem}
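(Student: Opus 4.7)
The approach is to invoke \cref{th:package-linear-equ}: a package-linear Walrasian equilibrium exists if and only if SWP and its LP relaxation SWLP share the same optimum. Hence it suffices to prove this LP--IP equality under the superadditive/subadditive hypotheses.

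In the simplified setting of this section (one unit per variety, one unit-demand agent per buyer, simple cost $c^0$), after substituting $k_S = \sum_l x(S,l)$ (which is tight at any optimum since extra supply is strictly costly), SWLP reduces to the weighted set-packing LP
\[
\max_{x \geq 0}\ \sum_{S, l} w^l(S)\, x(S, l) \quad \text{s.t.} \quad \sum_S x(S, l) \leq 1 \ \forall l, \quad \sum_{l,\, S \ni j} x(S, l) \leq 1 \ \forall j,
\]
where $w^l(S) := v^l(S) - c^0(S)$. Superadditivity of $v^l$ and subadditivity of $c^0$ together imply that $w^l$ is superadditive in $S$ for every buyer $l$.

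To establish LP integrality, I would apply a \emph{coarsening} argument to any fractional optimum $x^*$: whenever two disjoint bundles $S_1, S_2$ both appear in the support of some buyer $l$, set $\alpha := \min\{x^*(S_1,l), x^*(S_2,l)\}$ and transfer $x^*(S_1, l), x^*(S_2, l) \mapsto x^*(S_1, l) - \alpha, x^*(S_2, l) - \alpha$ and $x^*(S_1 \cup S_2, l) \mapsto x^*(S_1 \cup S_2, l) + \alpha$. Both buyer and item constraints are preserved, and the objective weakly increases by $\alpha(w^l(S_1 \cup S_2) - w^l(S_1) - w^l(S_2)) \geq 0$ by superadditivity. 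After finitely many merges, each buyer's support is a pairwise intersecting family. To extract an integer optimum matching this value, I would pair the coarsening with a primal--dual construction: identify non-negative item-level dual prices $r_j$ such that the package prices $p(S) := c^0(S) + \sum_{j \in S} r_j$ support an efficient integer allocation. Under such prices, each buyer's utility simplifies to $w^l(S) - \sum_{j \in S} r_j$, driving demand to the bundle maximizing the adjusted superadditive score, while the seller's per-bundle profit is the partition-independent $\sum_{j \in S} r_j$, so she is indifferent among partitions that cover the same items with $r_j > 0$.

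The main obstacle is proving existence of the item prices $r_j$ supporting the efficient integer allocation under superadditive $w^l$. Because subadditivity of $c^0$ does not in general imply set-cover submodularity (the condition equivalent to a seller's superadditive value for unsold items, as shown later in the paper), no direct reduction to the Sun (2014) equilibrium-existence theorem is available, and the argument must be carried out within the framework of \cref{sec:welfare-maximization,sec:competitive-equilibrium}. A complete proof would combine the coarsening step above with an extreme-point analysis of the feasible polytope and the complementary slackness conditions derived from DSWLP, possibly organized as an inductive assignment of items to either a demanded bundle of some buyer or to the unsold set, using superadditivity to ensure each step is welfare-improving.
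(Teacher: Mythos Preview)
Your route differs substantially from the paper's. The paper does not invoke \cref{th:package-linear-equ} at all; it constructs an equilibrium directly by running a modified Sun--Yang ascending auction (Algorithm~\ref{alg:ascending-auction}, \cref{prop:ascending-auction}) and verifying, via a case analysis on ``squeezed-out'' bundles, that the terminal prices and allocation satisfy the equilibrium conditions. Superadditivity of $v^l$ and subadditivity of $c^0$ enter only in the absorption step, where a squeezed-out bundle $B$ must be added to some buyer's demanded bundle $S^{l*}$: the two hypotheses combine to force $p^*(S^{l*}\cup B)=p^*(S^{l*})+p^*(B)$, making the buyer indifferent and leaving the seller's profit unchanged.

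Your reduction to a set-packing LP with superadditive weights $w^l=v^l-c^0$ is correct, but what remains is essentially the whole theorem. Coarsening gets you only to pairwise-intersecting support for each buyer, which is far from integrality, and the promised ``extreme-point analysis'' and ``inductive assignment'' are placeholders, not arguments. More to the point, your dismissal of Sun--Yang is where the plan goes wrong. You reason that subadditive $c^0$ is not set-cover submodular, so the \emph{original} seller is not a Sun--Yang seller---true---but after you have absorbed $c^0$ into the buyers, the \emph{reduced} economy has a zero-value (hence trivially superadditive) seller and superadditive buyer weights $w^l$. That is precisely a Sun--Yang economy, and their existence result (together with the Bikhchandani--Ostroy LP/IP equivalence you are already implicitly using) delivers exactly the integrality you need. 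The one technicality is that $w^l$ may take negative values while Sun--Yang assume non-negative valuations, but their ascending-auction proof does not rely on non-negativity---a buyer with $w^l(S)<0$ for all $S$ simply demands $\emptyset$ throughout---and indeed the paper's own \cref{prop:ascending-auction} is exactly that auction with the cost-bearing seller reinstated. So your alternative route can be completed, but it ultimately leans on the same ascending-auction machinery the paper deploys directly.
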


\subsection{Constructing an Equilibrium}\label{sec:constructing-an-equilibrium}

In each round of the ascending auction, the seller states her supply, and buyers state their demand at current prices. If a package is overdemanded (defined below), the price of this package increases by one in the next round. The procedure stops as soon as no package is overdemanded.
We let $p(t,S)$ denote the price of bundle $\Sin$ at time $t$. The seller and the buyers behave straightforwardly. Buyer $l$ bids straightforwardly with respect to his values $v^l$ if, at all times $t \in \Z_+$ and for any prices $p(t)$, he demands $S^l(t) \in D^l(p(t)) = \argmax_{\Sin} \left\{v^l(S) - p(t,S) \right\}$, where $S^l(t) = \emptyset$ when $\emptyset \in D^l(p(t))$. That is, in each round, he demands a bundle that maximizes his utility given the current prices. We denote aggregate demand by partition $\partition^D(t):=\{S^1(t),\dots,S^L(t)\}$, which can also be seen as a vector $\partition^D\in \Z_+^{2^n}$ with each element counting how many buyers demand a package $\Sin$. The seller behaves straightforwardly with respect to her cost $C^0$ if, at any time $t\in \Z_+$ and at any prices $p(t)$, she chooses a supply partition $\partition(t)\in D^0(p(t)) = \argmax_{\partition\in\K} \sum_{S\in\partition}\left(p(t,S) - c^0(S)\right)$. We say that a package $S$ is overdemanded at time $t$ iff $k^D_S(t) > k_S(t)$, where $\partition^D(t)$ and $\partition(t)$ are the aggregate demand and the supply reported at time $t$, respectively. The procedure is described in Algorithm \ref{alg:ascending-auction}.

\begin{algorithm}[htbp]
    \SetAlgoLined
    \caption{Ascending auction}
    \label{alg:ascending-auction}
    Seller states initial reserve prices $c^0$ setting $t=0$ and initial prices $p(0,S) = c^0(S)~\forall\Sin$.\\
    All buyers $l\in \buyers$ report a demanded bundle $S^l(0)$ and the seller chooses a supply partition $\partition(0)$.
    \While{some package $S$ is overdemanded at time $t$}
    { Set $p(t+1,S) = p(t,S) + 1$ for all overdemanded bundles $S$, i.e.,~those with $k^D_S(t) > k_S(t)$.\\
    Set $p(t+1,S') = p(t,S')$ for all bundles $S'$ that are not overdemanded at $t$.\\
    Set $t = t +1$.
    }
    The auction terminates at $t = t^*$. Every bundle $S \in S^l(t^*)$ is allocated to buyer $l$ at price $p(t^*,S)$.\\
    \For{any $S\in\partition(t^*)$ not demanded by any buyer at $p(t^*)$}
    {
    \lIf{$p(t^*,S) = c^0(S)$}
    {$S$ remains with the seller}
    \Else{$S$ is allocated at price $p(t^*,S)$ to a buyer who demanded and was among the last to forfeit $S$ in a previous round.}
    }
    \Return Walrasian equilibrium prices $p(t^*,S)$, $\Sin$ and equilibrium allocation $\pi$
\end{algorithm}
\begin{proposition}\label{prop:ascending-auction}
    If all buyers bid straightforwardly, the ascending auction given by Algorithm \ref{alg:ascending-auction} terminates in a package-linear pricing Walrasian equilibrium after a finite number of rounds.
\end{proposition}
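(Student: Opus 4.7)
The plan is to verify the proposition in three stages: finite termination, equilibrium at the terminal round, and consistency of the re-assignment rule for unclaimed supply.

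First, I would establish finite termination by bounding prices from above. Since buyer valuations take values in $\Z_+$, the price $p(t,S)$ of any package $S$ can never profitably exceed $\overline{v}_S:=\max_{l\in\buyers}v^l(S)$: once $p(t,S)>\overline{v}_S$, no straightforward buyer demands $S$, hence $S$ is not overdemanded, and by Algorithm~\ref{alg:ascending-auction} its price is frozen. Since every price starts at $c^0(S)\in\Z_+$, each round in which the algorithm does not terminate raises at least one such price by exactly $1$, and the total number of rounds is bounded by $\sum_{\Sin}\max\{0,\overline{v}_S-c^0(S)\}<\infty$.

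Second, I would verify the equilibrium conditions at the terminal round $t^*$. Straightforward bidding immediately gives $S^l(t^*)\in D^l(p(t^*))$ for each buyer and $\partition(t^*)\in D^0(p(t^*))$ for the seller, so the individual rationality and utility-maximization conditions hold by construction of the bidding rule. The stopping criterion says $k_S^D(t^*)\leq k_S(t^*)$ for every $\Sin$, i.e.~no package is overdemanded.

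Third, I would handle undemanded supplied packages. Consider any $S\in\partition(t^*)$ with $k_S^D(t^*)=0$. If $p(t^*,S)=c^0(S)$, the seller keeps $S$ at zero profit, which is consistent with $\partition(t^*)\in D^0(p(t^*))$ because retaining $S$ is weakly preferred to selling it at cost. Otherwise, monotonicity of prices and the fact that prices rise only when overdemanded imply that some buyer demanded $S$ at an earlier round; let $l'$ be the last such buyer to forfeit $S$, say at round $\tau+1$. The task is to show that $S\in D^{l'}(p(t^*))$, so that assigning $S$ to $l'$ at price $p(t^*,S)$ is consistent with buyer $l'$'s demand correspondence and yields market clearing.

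The hard part is the indifference argument in this third stage. I would establish it by combining three ingredients: (i) at time $\tau$, buyer $l'$ demanded $S$, so $v^{l'}(S)-p(\tau,S)\geq v^{l'}(S'')-p(\tau,S'')$ for all $S''$; (ii) at time $\tau+1$, $l'$ forfeited $S$, and since $p(\tau+1,S)=p(\tau,S)+1$ while all other prices rose by at most $1$, integer-valuedness forces exact indifference, $v^{l'}(S)-p(\tau+1,S)=v^{l'}(S^{l'}(\tau+1))-p(\tau+1,S^{l'}(\tau+1))$; (iii) because $l'$ was the \emph{last} buyer to forfeit $S$, no further increase of $p(\cdot,S)$ occurs, so $p(t^*,S)=p(\tau+1,S)$, while the prices of all alternative bundles only weakly increase between $\tau+1$ and $t^*$, preserving $S\in D^{l'}(p(t^*))$. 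The superadditivity of $v^{l'}$ and subadditivity of $c^0$, together with the fact that the seller would have preferred supplying $l'$'s currently demanded bundle to supplying $S$ if $l'$ were strictly better off elsewhere, rule out conflicts with Case~A assignments; this justifies swapping $l'$'s allocation to $S$ and confirms that $(p(t^*),\pi)$ is a package-linear Walrasian equilibrium.
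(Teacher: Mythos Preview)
Your first two stages are fine and match the paper. The gap is in the third stage, in the case where the last forfeiter $l'$ of the squeezed-out bundle $S$ still demands some $S^{l'}(t^*)\neq\emptyset$ at termination.

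First, the target you aim for is wrong: you try to prove $S\in D^{l'}(p(t^*))$ and then ``swap $l'$'s allocation to $S$''. But the algorithm does not swap; it assigns $S$ \emph{in addition} to $S^{l'}(t^*)$, so the relevant bundle is the union $S^{l'}(t^*)\cup S$, and one must show $S^{l'}(t^*)\cup S\in D^{l'}(p(t^*))$. Swapping would leave $S^{l'}(t^*)$ unallocated and simply create a new squeezed-out package. Moreover, the claim $S\in D^{l'}(p(t^*))$ is actually \emph{false} in this case: the paper's chain of equalities (once established) forces $v^{l'}(S)-p^*(S)=0$, whereas $\mathcal{V}^{l'}(p^*)\ge 1$ by the tiebreak rule when $S^{l'}(t^*)\neq\emptyset$. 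Your indifference argument in~(ii) breaks down because nothing guarantees that the price of the bundle $l'$ switches to also rose; if it did not, and $l'$ was already indifferent at~$\tau$, then at $\tau+1$ he strictly prefers the new bundle and $S\notin D^{l'}(p(\tau+1))$.

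Second, the correct argument---which your final sentence gestures at but does not carry out---uses the seller's revealed preference, not just the buyer's. Since $\partition(t^*)$ contains $S^{l'*}$ and $S$ separately and is optimal for the seller, replacing them by their union cannot raise her profit; together with subadditivity of $c^0$ this yields $p^*(S^{l'*})+p^*(S)\ge p^*(S^{l'*}\cup S)$. Combining this with superadditivity of $v^{l'}$, with $u^{l'}(S,p^*)\ge 0$, and with buyer optimality of $S^{l'*}$ squeezes all inequalities to equalities, giving $p^*(S^{l'*}\cup S)=p^*(S^{l'*})+p^*(S)$ and $S^{l'*}\cup S\in D^{l'}(p^*)$; the seller's utility is then unchanged under the merged partition. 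This is the step where the super/subadditivity assumptions actually do work, and it is missing from your plan.
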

The proof is given in \cref{proof:prop:ascending-auction} and \cref{th:equilibrium-existence} follows then directly from \cref{prop:ascending-auction}. The ascending auction specified in Algorithm \ref{alg:ascending-auction} is modified from the ascending auction in SY by the participation of the seller with partition preferences. We also extend the procedure to fully generalize the ascending auction of SY in \cref{sec:extended-ascending-auction}. First, however, we demonstrate the difference and relation of a (value-based) revenue-maximizing seller, as it appears in SY, and our (cost-based) utility-maximizing seller.

\subsection{Revenue vs.~Utility Maximization}\label{sec:revenue-vs-utility}

The market studied by SY includes a revenue-maximizing seller whose preferences are symmetric to buyers. The seller's superadditive value function $v^0:2^N\rightarrow \Z_+$ is called a ``reserve price'' and she obtains the market price of all sold bundles plus the value of the unsold bundle as revenue.\footnote{With bundling, true reserve prices differ. Consider the supply of $A$ and $B$, which can be bundled as $AB$. The seller's values are $v(A)$, $v(B)$, and $v(AB)$, respectively. The seller chooses to sell $A$ alone only if $p(A) + v(B) > v(AB)$ and $p(A) + v(B) > p(AB)$. Thus, her true reserve price for $A$ is $v(AB) - v(B)$.}
In contrast, our seller has \emph{preferences over partitions} expressed through a cost function. The cost of supplying several bundles is the sum of their marginal costs, and the seller's utility is defined as the sum of obtained prices minus total cost.

In the following, we provide an equivalence characterization between value-based revenue maximization and cost-based utility maximization. This equivalence implies fundamentally different preferences in the model of SY and ours, and it reveals a new class of preferences that may be of independent interest.
We let $S^c$ denote the complement for any set $\Sin$, i.e.,~$S^{\mathrm{c}} = N\setminus S$ and define the set function dual (see, e.g.,~\cite{Gul2000} or \cite{Fushijige2005}). All proofs are deferred to \cref{proof:prop:seller-equivalence}.
\begin{definition}
    For any $\Sin$, given a set function $f:2^N\rightarrow \mathbb{R}$ with $f(\emptyset) = 0$, define the transformation $ g(f,S) = f(N) - f(S^{\mathrm{c}})$. $g(f,\cdot)$ is called the \emph{set function dual} of~$f$.
\end{definition}
Note that $g(f,N) = f(N)$. With the set function dual, we are equipped to formalize the relationship between revenue maximization and utility maximization.
\begin{proposition}\label{prop:seller-equivalence}
    Given a value function $v^0:2^N\rightarrow \Z_+$ with $v^0(\emptyset) = 0$, the objective of maximizing revenue is equivalent to maximizing utility, where the marginal costs $c^0:2^N\rightarrow \Z_+$ are given by the dual of $v^0$, and the total cost of any given partition $\kb$ is $C^0(\partition) = c^0\left(\bigcup_{S\in\partition}S\right)$.
\end{proposition}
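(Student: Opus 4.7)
The plan is to express both objective functions in terms of the set of items the seller retains, and then observe that they differ only by a constant. For any partition $\partition \in \K$, I would let $T(\partition) := \bigcup_{S\in\partition}S$ denote the items sold and $T(\partition)^{\mathrm{c}} := N\setminus T(\partition)$ the items retained. The SY revenue-maximizing seller, who keeps $T(\partition)^{\mathrm{c}}$ at reserve value $v^0(T(\partition)^{\mathrm{c}})$ and sells each $S\in\partition$ at the market price $p(S)$, earns revenue $R(\partition,p) = \sum_{S\in\partition} p(S) + v^0(T(\partition)^{\mathrm{c}})$, whereas our utility-maximizing seller obtains $u^0(\partition,p) = \sum_{S\in\partition} p(S) - C^0(\partition)$.

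The next step is a direct substitution. Using the proposed definitions $C^0(\partition) = c^0(T(\partition))$ together with $c^0(T) = v^0(N) - v^0(T^{\mathrm{c}})$, the cost collapses to $C^0(\partition) = v^0(N) - v^0(T(\partition)^{\mathrm{c}})$, which yields $u^0(\partition,p) = R(\partition,p) - v^0(N)$. Since $v^0(N)$ is a constant independent of $\partition$, the argmax sets $\argmax_{\partition \in \K} R(\partition,p)$ and $\argmax_{\partition \in \K} u^0(\partition,p)$ coincide over the common feasible domain $\K$, so the two sellers select identical partitions at any price vector $p$.

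There is no real obstacle here: the argument reduces to an algebraic identity driven by the definition of the set function dual, and consistency is automatic (for instance, the empty partition gives $c^0(\emptyset) = v^0(N) - v^0(N) = 0$, and superadditivity of $v^0$ together with $v^0(\emptyset)=0$ ensures $c^0$ is $\Z_+$-valued as required). The conceptual point worth highlighting is that, under this construction, the total cost of a partition depends only on the \emph{union} of its packages and not on how that union is split; hence, unlike the general packaging costs of \cref{sec:model}, the class of costs induced via the set function dual is partition-blind, which is precisely why it captures an SY-style revenue maximizer who is indifferent to how sold items are distributed among buyers.
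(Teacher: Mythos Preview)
Your proof is correct and follows essentially the same approach as the paper: both express the two objectives, observe they differ by the constant $v^0(N)$, and conclude the argmax sets coincide. Your formulation is slightly more direct---the paper threads the argument through the ascending auction by noting that the retained bundle $B$ satisfies $p(B)=v^0(B)$ before rewriting $\sum_{A\in\partition}p(A)$ as $\sum_{A\in\partition\setminus B}p(A)+v^0(B)$, whereas you write the revenue expression $\sum_{S\in\partition}p(S)+v^0(T(\partition)^{\mathrm{c}})$ from the outset---but the algebraic content is identical.
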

We introduce a new notion of submodularity to further characterize the revenue-maximizing seller's cost function.
\begin{definition}[Set-cover submodularity]\label{def:set-cover-submodularity}
    Given a finite set $N$, a function $f:2^N\rightarrow \mathbb{R}$ is \emph{set-cover submodular} if $\forall~ S_1,S_2 \in 2^N$ with $S_1 \cup S_2 = N$,
    \begin{align*}
        f(S_1) + f(S_2)  \geq f(S_1\cup S_2) + f(S_1 \cap S_2).
    \end{align*}
\end{definition}
If the inequality sign is reversed, $f$ is \emph{set-cover supermodular}, and if replaced with an equality sign, $f$ is \emph{set-cover modular}.
The following proposition makes the crucial connection between superadditive and set-cover submodular functions.
\begin{proposition}\label{prop:setcover-submodular}
    Given a superadditive (subadditive) function $f:2^N\rightarrow \mathbb{R}$, its dual $g(f,\cdot)$ is set-cover submodular (set-cover supermodular).
\end{proposition}
From \cref{prop:setcover-submodular} and \cref{prop:seller-equivalence} we have the following corollary.
\begin{corollary}\label{cor:setcover-submodular-cost}
    The revenue-maximizing seller's marginal costs are set-cover submodular.
\end{corollary}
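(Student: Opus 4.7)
The plan is to combine the two results immediately preceding the corollary. By \cref{prop:seller-equivalence}, a revenue-maximizing seller with reserve price function $v^0:2^N\rightarrow \Z_+$ (with $v^0(\emptyset)=0$) has an equivalent utility-maximizing representation whose marginal cost function $c^0$ is exactly the set function dual $g(v^0,\cdot)$ of $v^0$. Since SY assume superadditivity of the reserve price $v^0$, we can directly apply \cref{lem:setcover-submodular} to $f=v^0$ to conclude that $g(v^0,\cdot)=c^0$ is set-cover submodular.

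In more detail, I would write: fix any $S_1,S_2\in 2^N$ with $S_1\cup S_2 = N$. Then by definition of the dual, $c^0(S_1)+c^0(S_2) = 2v^0(N) - v^0(S_1^{\mathrm{c}}) - v^0(S_2^{\mathrm{c}})$, while $c^0(S_1\cup S_2)+c^0(S_1\cap S_2) = v^0(N) + v^0(N) - v^0(\emptyset) - v^0((S_1\cap S_2)^{\mathrm{c}}) = 2v^0(N) - v^0(S_1^{\mathrm{c}}\cup S_2^{\mathrm{c}})$ using $S_1\cup S_2 = N$ and $v^0(\emptyset)=0$. Superadditivity of $v^0$ applied to the disjoint sets $S_1^{\mathrm{c}}$ and $S_2^{\mathrm{c}}$ (which are disjoint because $S_1\cup S_2 = N$) gives $v^0(S_1^{\mathrm{c}}) + v^0(S_2^{\mathrm{c}}) \leq v^0(S_1^{\mathrm{c}}\cup S_2^{\mathrm{c}})$, so $c^0(S_1)+c^0(S_2) \geq c^0(S_1\cup S_2)+c^0(S_1\cap S_2)$, which is exactly \cref{def:set-cover-submodularity}.

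There is effectively no obstacle here, since the corollary is precisely the composition of the preceding proposition with the preceding lemma. The only small point worth verifying is that the disjointness hypothesis needed to invoke superadditivity of $v^0$ on $S_1^{\mathrm{c}}$ and $S_2^{\mathrm{c}}$ is supplied exactly by the set-cover condition $S_1\cup S_2 = N$; this is why the weaker notion of set-cover submodularity (rather than full submodularity) is the right conclusion, and it explains why the proof cannot be strengthened without an additional hypothesis on $v^0$.
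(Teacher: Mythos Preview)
Your proposal is correct and matches the paper's own reasoning: the paper simply states that the corollary is immediate from \cref{lem:setcover-submodular} and \cref{prop:seller-equivalence}, and your argument spells out exactly that composition (together with the detailed computation that already appears in the proof of \cref{lem:setcover-submodular}).
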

Set-cover submodularity is weaker than submodularity, because it is only required for every two subsets of $N$, the union of which fully covers $N$. In particular, set-cover submodularity does not imply subadditivity (unlike proper submodularity, which does imply subadditivity).
A set-cover submodular cost function can have strictly subadditive and strictly superadditive elements, as shown in \cref{ex:setcover-sub-super-additive} below. A revenue maximizer only takes into account the \emph{set} of sold items, whereas a utility maximizer with an appropriately defined cost function cares about the \emph{partition} of sold items between buyers.
\begin{example}\label{ex:setcover-sub-super-additive}
    The function $v$ in \cref{tab:superadd-and-submod} is superadditive and its dual $c$ is set-cover submodular. The dual $c$ is strictly subadditive with respect to $A$ and $BC$ $(2+3>4)$, but superadditive with respect to items $B$ and $C$ $(2+0<3)$.
    \begin{table}[htbp]
    \centering
        \centering
        \begin{tabular}{cccccccc}
            \toprule
               & $A$   & $B$   & $C$   & $AB$  & $AC$  & $BC$  & $ABC$ \\
            \midrule
            $v$ & 1     & 2     & 0     & 4     & 2     & 2     & 4 \\
            $c$ & 2     & 2     & 0     & 4     & 2     & 3     & 4 \\
            \bottomrule
        \end{tabular}
        \caption{Superadditive function $v$ and set-cover submodular dual $c$}
        \label{tab:superadd-and-submod}
    \end{table}
\end{example}
Although our seller's preferences are fundamentally different from a value-based revenue maximizer, we can generalize the ascending auction to incorporate agents with both types of value and cost functions.

\subsection{The Extended Ascending Auction}\label{sec:extended-ascending-auction}

In the extended auction, a revenue-maximizing seller as well as an additional seller with its own supply, or an additional agent with partition preferences over the original seller's supply, participate.
The revenue-maximizing seller surrenders her supply to the auctioneer and disguises herself as a buyer in the auction, attempting to buy back her supply. Her bids serve as reserve prices. In addition to potentially winning a bundle herself, she receives the price of every bundle (of her original supply) sold to proper buyers.
We consider a revenue-maximizing seller and an auctioneer with partition preferences over the seller's supply.\footnote{The variation of the market in which a second seller supplies an additional set of items that is disjoint from the other seller's supply is analogous, as long as items from the two sellers are not bundled together. If multiple sellers participated in the auction, a surplus sharing rule would have to be defined.}
\begin{lemma}\label{lem:same-allocation}
    If the auctioneer's marginal costs are zero, the extended ascending auction and the ascending auction by SY terminate in the same allocation (up to ties).
\end{lemma}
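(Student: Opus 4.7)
The plan is to show that when $c^0 \equiv 0$, the extended ascending auction reduces step by step to the SY auction, so that both procedures generate the same sequence of prices and aggregate demands, and hence the same terminal allocation up to tie-breaking.

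By construction of the extended auction in \cref{sec:extended-ascending-auction}, the bidding rules for the buyers and for the disguised revenue-maximizing seller are identical to SY: each bids straightforwardly, reporting at prices $p(t)$ a bundle $S^l(t) \in \argmax_{S\in 2^N}\{v^l(S) - p(t,S)\}$, and the revenue-maximizer participates as a pseudo-buyer with values $v^0$. The price-update rule, namely $p(t+1,S) = p(t,S) + 1$ on each overdemanded bundle and $p(t+1,S') = p(t,S')$ otherwise, is identical, as is the termination condition that no bundle be overdemanded. The only behavioral ingredient that is not literally present in SY is the auctioneer's choice of supply partition $\partition(t) \in D^0(p(t))$.

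The key observation is that when $c^0 \equiv 0$, the auctioneer's utility from offering a supply partition $\partition$ at prices $p(t)$ collapses to $\sum_{S\in\partition} p(t,S)$, so that $D^0(p(t)) = \argmax_{\partition \in \K}\sum_{S\in\partition} p(t,S)$: the auctioneer is indifferent across partitions yielding equal total revenue, and prefers to sell the whole endowment whenever prices are nonnegative. Since prices start at $p(0,S) = c^0(S) = 0$ and only ever weakly rise, the auctioneer's best response at every round is to offer \emph{some} revenue-maximizing partition of $\Omega$, and in particular she may offer any partition that matches the current aggregate demand $\partition^D(t)$ whenever that demand is feasible. Consequently, either $\partition^D(t)$ is feasible and the auctioneer can set $\partition(t) = \partition^D(t)$, triggering termination exactly as in SY, or at least one bundle is overdemanded irrespective of the auctioneer's choice, triggering the same price update as in SY.

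I would then conclude by a straightforward induction on the round index $t$: starting from $p(0) \equiv 0$ in both auctions, the price vector, the demand profile, the set of overdemanded bundles, and the updated prices coincide round by round, so termination occurs at the same time $t^*$ with the same demanded bundles allocated to the same buyers (or to the disguised revenue-maximizer). The main subtlety, and the reason for the ``up to ties'' qualifier, is that multiple best responses may coexist for buyers, for the disguised seller, for the auctioneer's revenue-maximizing supply choice, and in the final reallocation of bundles in $\partition(t^*)$ that are not currently demanded; applying the same tie-breaking convention in both procedures (or arguing modulo the equivalence class of tied outcomes) yields identical allocations.
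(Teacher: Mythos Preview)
Your round-by-round induction rests on the claim that both auctions start from $p(0)\equiv 0$, but this is false for SY. In SY's auction the seller sets reserve prices equal to her values $v^0(S)$, so the auction starts at $p(0,S)=v^0(S)$; there is no separate auctioneer with cost function $c^0$. In the extended auction, by contrast, the auctioneer's starting prices are $p(0,S)=c^0(S)=0$ and the revenue-maximizing seller participates as a pseudo-buyer. Thus the two procedures do \emph{not} have the same price vector at $t=0$, and the step-by-step identity of prices, demands, and overdemanded sets does not get off the ground. This is precisely why \cref{prop:extended-ascending-auction} needs the artificial start at $t=-1$ with $p(-1,S)=v^0(S)-1$ and dummy bidders to force the prices up to $v^0(S)$ before the ``real'' auction begins.

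The paper's proof sidesteps the price-path comparison entirely. It observes that SY's auction terminates in an allocation that is efficient in the sense $\sum_{l\in\agents} v^l(\pi(l))\geq \sum_{l\in\agents} v^l(\pi'(l))$, while the extended auction terminates in an allocation efficient in the sense $\sum_{l\in\buyers}[v^l(\pi(l))-c^0(\pi(l))]\geq \sum_{l\in\buyers}[v^l(\pi'(l))-c^0(\pi'(l))]$. When $c^0\equiv 0$ and the revenue-maximizing seller is treated as buyer $0$ (so that the buyer set becomes $\buyers'=\agents$), these two efficiency criteria coincide, and hence the terminal allocations agree up to ties among efficient allocations. This argument needs nothing about the dynamics; it uses only that each auction is known (from \cref{prop:ascending-auction} and its SY analogue) to output an efficient allocation.
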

\begin{proposition}\label{prop:extended-ascending-auction}
    Suppose that the auctioneer's marginal costs are zero and the extended ascending auction starts at prices $p(-1,S) = v^0(S)-1 ~\forall \Sin$. Then, there exists a price path that is identical in the extended ascending auction and in the ascending auction by SY, resulting in final prices that support the same allocation in both auctions.
\end{proposition}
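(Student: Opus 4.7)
The plan is to construct a common price path by synchronizing the two auctions after one initial round and then inducting on the round index. At the starting prices $p(-1,S) = v^0(S) - 1$, the disguised revenue-maximizing seller in the extended auction (who bids with values $v^0$) obtains utility $v^0(S) - p(-1,S) = 1 > 0$ from every $\Sin$; by straightforward bidding she therefore demands every package, which makes every package overdemanded regardless of the other $L$ buyers' choices. The first round of price updates raises all prices by one, yielding $p(0,S) = v^0(S)$, which coincides with the initial reserve prices in the SY auction.

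From $t = 0$ onward, I would argue inductively that the two auctions can be made to follow the same price trajectory. Both involve the same active bidders -- the $L$ buyers and the disguised revenue-maximizing seller -- all of whom bid straightforwardly against their own valuations. Under the inductive hypothesis that $p(t,\cdot)$ agrees across the two auctions, their demand correspondences coincide, and with identical tie-breaking they produce the same aggregate demand $\partition^D(t)$. The price-increment rule (raise overdemanded packages by one, leave the rest fixed) is common to both procedures. Supply requires alignment: in SY's auction the auctioneer passively holds the surrendered single unit of each variety of the revenue-maximizing seller, while in the extended auction the zero-cost auctioneer with partition preferences selects $\partition(t) \in \argmax_{\partition\in\K}\sum_{S\in\partition}p(t,S)$, an argmax that is typically multivalued. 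Because $\Omega_j = 1$ for all $j$ and costs are zero, ties in this argmax can be resolved so that the overdemand condition $k^D_S(t) > k_S(t)$ reduces in both auctions to ``strictly more than one bidder demands $S$'', which is an intrinsic property of the bidders' straightforward choices at the common prices. The sets of overdemanded packages then coincide, and $p(t+1,\cdot)$ agrees across auctions.

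Termination is triggered simultaneously in both auctions (no package overdemanded), giving identical final prices $p(t^*,\cdot)$; and the supported allocation coincides by \cref{lem:same-allocation} (or directly from straightforward bidding at the common terminal prices). The main obstacle I anticipate is the supply-side alignment in the inductive step: one must verify carefully that the zero-cost auctioneer's flexibility in selecting $\partition(t)$ can always be exploited to mirror the single-unit-per-variety supply implicit in SY's auction exactly on the packages that are actively demanded, so that every overdemand event -- and therefore every price increment -- matches step by step between the two procedures.
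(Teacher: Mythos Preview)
Your initial step at $t=-1$ does not work as stated. In the setting of \cref{sec:equilibrium-existence} every buyer is a \emph{unit-demand} agent: straightforward bidding means demanding a \emph{single} bundle $S^l(t)\in D^l(p(t))$, not the entire argmax set. So the disguised revenue-maximizing seller, treated as one buyer with valuation $v^0$, can report exactly one package at $t=-1$, not ``every package''. Consequently you cannot conclude that every $S\in 2^N$ is overdemanded at $t=-1$, and hence you cannot conclude that all prices rise by one to $p(0,S)=v^0(S)$. Without this synchronization at $t=0$ the inductive argument never gets off the ground.

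The paper repairs exactly this point by a different construction: it splits the revenue-maximizing seller into $2^n$ dummy buyers $l_S$, one per bundle, with $v^{l_S}(B)=v^0(S)$ if $B\supseteq S$ and $0$ otherwise, and lets \emph{two} copies of each dummy participate. At $t=-1$ each dummy $l_S$ (both copies) straightforwardly demands $S$, so every bundle has at least two demanders and is overdemanded regardless of the auctioneer's chosen supply partition; this forces $p(0,S)=v^0(S)$ for all $S$. From $t=0$ onward all dummies weakly prefer $\emptyset$ and drop out, so active demand reduces to the $L$ real buyers, while the zero-cost auctioneer's supply correspondence coincides with SY's (both are $\argmax_{\partition}\sum_{A\in\partition}p(A)$). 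Your inductive step from $t=0$ is then essentially correct, and the squeezed-out bundles at termination can be absorbed by the dummies who last demanded them at $t=-1$. The missing ingredient in your proposal is precisely this multi-dummy decomposition that manufactures overdemand on \emph{every} bundle in the preliminary round.
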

With non-zero marginal costs, the auctioneer can express additional preferences. Moreover, the existence of a package-linear Walrasian equilibrium is also guaranteed in the extended ascending auction. Thus, \cref{lem:same-allocation} and \cref{prop:extended-ascending-auction} imply that our auction strictly generalizes the ascending auction by SY. \cref{ex:ref_vs_utilmax} in the appendix further illustrates that the extended ascending auction can lead to different outcomes depending on the total cost function used, even when marginal costs and values are dual to each other.

In the literature, the existence of a Walrasian equilibrium has been demonstrated for markets with only substitutes (linear pricing) or only complements (non-linear pricing, cf.~SY), and some mixtures of substitutes and complements (cf.~\cite{Sun2006}), but not with preferences over partitions. Subadditive cost functions are especially relevant in markets with a high degree of fragmentation in which the auctioneer wishes to encourage bundle allocations. We also note that weak superadditivity of values allows buyers to be interested in a single item only or to have merely additive values.

\section{Discussion}
\label{sec:discussion}

\paragraph{Implementation as a sealed-bid auction.}
\label{sec:buyer-language}
Under the assumption of approximately competitive behavior, the Product-Mix Auction \cite{Klemperer2008,Klemperer2010,Klemperer2018} implements a competitive equilibrium with uniform, competitive prices in markets for substitutes. The Bank of England has been using this sealed-bid procedure to allocate loans to commercial banks against different types of collateral since the financial crisis in 2007. Similarly, our market may be implemented as a sealed-bid auction, assuming a large market with many buyers, in which all agents, including the seller, behave non-strategically. Our buyer preferences are related to those of buyers in the standard Product-Mix auction, where each buyer submits a finite list of ``paired bids'' bids to the auctioneer. This list of bids is equivalent to the aggregation of multiple unit-demand valuations (reported truthfully), and our buyer preferences generalize such Product-Mix preferences to incorporate packaged items.
We note that, although with a small number of strategic buyers efficiency is not guaranteed, if a large number of buyers for each package participate, our market is asymptotically ex-ante efficient \citep{Swinkels2001}.\footnote{Standard technical assumptions like asymptotic environmental similarity are required to be satisfied for each package separately. Related results have been shown by \cite{Cripps2006} and \cite{Fan2003}.}

Our seller's preferences also lend themselves as a bidding language, as the cost function graph and incremental cost functions provide a tractable structure that can be submitted to a clearing house or auctioneer. In general, in combinatorial allocation problems, there is no hope of achieving better than exponential communication, even if agents truthfully report their preferences \citep{Nisan-Segal-2006}. The aggregation of incremental costs significantly reduces the communication in the context bundles, as cost savings or increments for the combination of single varieties can be specified only for relevant bundles and omitted for those the seller is indifferent about.

\paragraph{The seller's demand type.}\label{sec:sellers-demand-type}

If all agents have strong substitutes valuations and supply is fixed, it is well known that a linear-pricing Walrasian equilibrium exists \citep{Milgrom2009}. The notion of strong substitutes can be extended to packages using the concept of demand types \citep{Baldwin2019-a}. An agent's demand type defines how demand may change due to small price changes, i.e.,~the possible trade-offs between items (in our setup, bundles). A valuation is strong substitutes if and only if it is concave and trade-offs are one-to-one.\footnote{\cite{Baldwin2019-a} show that a valuation is strong substitutes iff it is concave and corresponds to a strong substitutes demand type. A demand type is specified for quasi-linear utilities by a list of vectors describing the directions in which demand could change due to an infinitesimal generic price change. A strong substitutes demand type is defined by vectors with at most one +1 entry, at most one -1 entry, and no other non-zero entries. For details, we refer to \cite{Baldwin2019-a}.} Our buyers' preferences correspond to aggregations of unit-demand agents with one-to-one trade-offs and can thus be said to be \emph{``strong substitutes between packages''}. However, for this analogy between items and bundles to be complete, one would also require a fixed supply of bundles. In our market, supply is given in terms of items, and the seller divides these items into packages. Thus, in the context of package-linear pricing, her preferences are not strong substitutes between packages, and a package-linear pricing Walrasian equilibrium may not always exist.

We illustrate this with Example 2 from \cite{Sun2014} (also \cite{Bevia1999}).
Three buyers 1, 2, and 3 are interested in purchasing three items $A$, $B$, and $C$ with values given in \cref{tab:ex2-sun-yang}. The seller's costs are zero. Walrasian equilibrium does not exist, either with linear pricing \citep{Bevia1999} or with non-linear pricing \citep{Sun2014}. All buyers demand only one package, hence their valuations are strong substitutes between packages. However, induced by an infinitesimal price change, the seller may wish to, e.g.,~sell partition $\{AB\}$ instead of $\{A,B\}$. Thus, the characterization of the seller's ``demand'' type would contain the vectors $\pm(1,1,0,-1,0,0,0)$, where the entries correspond to the change in her supply of packages $(A,B,C,AB,AC,BC,ABC)$ induced by an arbitrarily small price change.
Therefore, the seller's demand type is not strong substitutes between packages.
\begin{table}[htp]
  \centering
    \begin{tabular}[t]{lrrrrrrrr}
        \toprule
        & $\emptyset$ & $A$ & $B$ & $C$ & $AB$ & $AC$ & $BC$ & $ABC$  \\
        \midrule
        Buyer 1 & 0     & 10    & 8     & 2     & 13    & 11    & 9     & 14     \\
        Buyer 2 & 0     & 8     & 5     & 10    & 13    & 14    & 13    & 15     \\
        Buyer 3 & 0     & 1     & 1     & 8     & 2     & 9     & 9     & 10\\
        \bottomrule
    \end{tabular}
    \caption{Buyers' valuations of bundles}
    \label{tab:ex2-sun-yang}
\end{table}

\section{Conclusion}
\label{sec:conclusion}

In markets for bundled items, it is often natural for the seller to have preferences over partitions of items between buyers. We analyze a competitive market in which such preferences are modeled as incremental cost functions together with a graph that defines cost interdependencies. We contribute the first characterization and existence results of anonymous and package-linear Walrasian equilibrium prices in the presence of partition preferences. We also uncover a duality relation between revenue- and utility-maximizing sellers, introducing the class of set-cover submodular valuations.
Although our results are theoretical, we also aim to inspire new market design applications in practice, especially in (near-)competitive environments. Our setting allows the seller to express vastly richer preferences than described in the previous literature and used in present-day auction design. The equilibrium prices we describe have a nested structure relating to the cost interdependencies specified by the seller, and ensuring transparency in pricing between related bundles. Moreover, in combinatorial auctions in practice, this may ease the construction of prices even for those bundles that were not bid for.
Finally, our framework of partition preferences may be of independent interest in other allocation problems, where the distribution of goods, services, or matchings in the market matters.

\renewcommand{\bibname}{References}

\bibliography{refs}

\pagebreak

\appendix

{\flushleft \Large \textbf{Appendix}}

\renewcommand{\thetable}{\Alph{section}\arabic{table}}
\renewcommand{\thefigure}{\Alph{section}\arabic{figure}}
\renewcommand{\theHfigure}{\Alph{section}\arabic{figure}}
\setcounter{figure}{0}

\section{Proofs}
\label{app:sec:proofs}
\setcounter{table}{0}
\setcounter{figure}{0}

\subsection{Proofs for \cref{sec:model}}
\label{app:sec:proofs-model}

\begin{proof2}[Proof of \cref{prop:efficientPLW-multi-unit}]\phantomsection\label{proof:prop:efficientPLW-multi-unit}
    Let $\mathcal{A}$ denote the universe of all feasible allocations, i.e.,
    \begin{align*}
        \mathcal{A}:= \left\{\pi \in \Z_+^{2^n\times|\agents|} : \msetsum_{l\in \buyers}\pi(l) = \kb \text{ for some } \kb\in \K, \text{ and } \pi(0) = \Omega - \kb^* \right\}
    \end{align*}
    $(p^*,\pi^*)$ is a package-linear pricing Walrasian equilibrium, so for any buyer $l\in\buyers$ and any allocation $\pi' \in \mathcal{A}$, we have
    \begin{align*}
        V^l(\pi^*(l)) - \sum_{S \in \pi^*(l)}  p^*(S) \geq V^l(\pi'(l)) - \sum_{S \in \pi'(l)}  p^*(S)
    \end{align*}
    Let $\msetsum_{l \in \buyers} \pi^*(l) = \kb$ and $\msetsum_{l \in \buyers}  \pi'(l) = \kb'$. We sum over $l\in\buyers$ and add and subtract the seller's cost
    \begin{align} \label{equ:proof-efficient-PLW}
    \begin{split}
        & \sum_{l\in\buyers} V^l(\pi^*(l)) - C^0(\kb) -  \left( \sum_{l\in\buyers} V^l(\pi'(l)) - C^0(\kb')\right)\\
        \geq & \sum_{l\in\buyers} \sum_{S \in \pi^*(l)} p^*(S) -C^0(\kb) - \left( \sum_{l\in\buyers}\sum_{S \in \pi'(l)} p^*(S) -  C^0(\kb') \right)
    \end{split}
    \end{align}
    Because $\pi^* \in D^0(p^*)$, we have, for all $\pi' \in \mathcal{A}$,
    \begin{align*}
        \sum_{l\in\buyers}  \sum_{S \in \pi^*(l)} p^*(S) -C^0(\kb)  \geq \sum_{l\in\buyers} \sum_{S \in \pi'(l)} p^*(S) -  C^0(\kb')
    \end{align*}
    From \cref{equ:proof-efficient-PLW}, it follows that, for all $\pi' \in \mathcal{A}$,
    \begin{align*}
             \sum_{l\in\buyers} V^l(\pi^*(l)) - C^0(\kb) -  \left( \sum_{l\in\buyers} V^l(\pi'(l)) - C^0(\kb')\right) \geq 0
    \end{align*}
    and so $\pi^*$ is efficient.
    
    Now let $\pi'$ be an efficient allocation. Then $V(\Omega)= \sum_{l\in\buyers}V^l(\pi'(l))- C^0(\kb')$. It also holds that $V(\Omega)= \sum_{l\in\buyers}V^l(\pi^*(l))- C^0(\kb)$ because $\pi^*$ is efficient as part of the equilibrium. The equilibrium is also buyer-optimal and seller-optimal, given prices. Thus, we obtain the following two inequalities:
    \begin{align*}
        & \mathcal{V}^l(p^*) \geq V^l(\pi'(l)) - \sum_{S \in \pi'(l)} p^*(S), \quad \quad \text{ for all } l \in \buyers \text{ and }\\
        & \sum_{l\in\buyers}\sum_{S \in \pi^*(l)}p^*(S) - C^0(\kb) = \mathcal{V}^0(p^*) \geq \sum_{l\in\buyers}\sum_{S \in \pi'(l)}p^*(S) - C^0(\kb')
    \end{align*}
    Suppose that one of these two inequalities were strict, then we would obtain
    \begin{align*}
        V(\Omega)
        & = \sum_{l\in\buyers}V^l(\pi^*(l)) - C^0(\kb)\\
        & = \sum_{l\in\buyers} \left[ V^l(\pi^*(l)) - \sum_{S \in \pi^*(l)} p^*(S) + \sum_{S \in \pi^*(l)} p^*(S) \right] - C^0(\kb)\\
        & = \sum_{l\in \buyers} \mathcal{V}^l(p^*) + \mathcal{V}^0(p^*) \\
        & > V^l(\pi'(l)) - \sum_{S \in \pi'(l)} p^*(S) + \sum_{l\in\buyers}\sum_{S \in \pi'(l)}p^*(S) - C^0(\kb')\\
        & = \sum_{l\in\buyers} V^l(\pi'(l)) - C^0(\kb') \\
        & = V(\Omega).
    \end{align*}
    This is a contradiction, and consequently it holds that 
    \begin{align*}
        \mathcal{V}^l(p^*) = V^l(\pi'(l)) - \sum_{S \in \pi'(l)} p^*(S), \quad \quad \text{ for all } l \in \buyers \text{ and },\\
        \mathcal{V}^0(p^*) = \sum_{l\in\buyers}\sum_{S \in \pi'(l)}p^*(S) - C^0(\kb'), \quad \text{ i.e., } \pi' \in D^0(p^*).
    \end{align*}
    It follows that $(p^*,\pi')$ is also a package-linear pricing Walrasian equilibrium.
\end{proof2}

\subsection{Proofs for \cref{sec:walrasian-equilibrium}}
\label{app:sec:proofs-walrasian-equilibrium}

\begin{proof2}[Proof of \cref{lem:one-to-one-mapping}]\phantomsection\label{proof:lem:one-to-one-mapping}
First, we show that given a cost function graph $G$, any incremental cost function assignment $(y(S,r))_{\Sin, r \leq \Rbar}$ maps into a unique vector $(k_S)_{\Sin}$. To proof that Algorithm \ref{alg:one-to-one} constructs a unique image from any permissible input, we demonstrate that (a) Algorithm \ref{alg:one-to-one} terminates after a finite number of steps, (b) each node $S$ is successfully visited at some point, and (c) the time at which $S$ is successfully visited is irrelevant. The mapping is linear by the definition of $k_S$ in the algorithm.
    
(a) and (b) follow because \cref{def:cfg}(i) implies that there are no cycles in $G$. Thus, as long as $V \neq \mathscr{V}$, there exists some $S\in V\setminus \mathscr{V}$ for which the if-condition is false. Consequently, all nodes are added to $\mathscr{V}$ at some point. It is without loss of generality to require that the algorithm does not get stuck in a trivial loop, i.e.,~it does not select a sequence of nodes that allow no successful visit and are skipped and revisited indefinitely. (c) follows because once the ``if condition'' in Algorithm \ref{alg:one-to-one} is false for a given node $S$, the set of nodes $A\in \mathscr{V}:\exists(A...S)$ remains unchanged. Once $S$ \emph{could} be successfully visited, it does not matter when it is actually selected for the successful visit, i.e.,~other nodes may be selected first.
    
The reverse mapping is straightforward. Given a cost function graph $G$, a partition $\kb = (k_S)_{\Sin}$ is mapped to the following incremental cost function assignment: for all $\Sin$, $y(S,r_S) = 1 ~\forall r_S \leq \sum_{A:\exists(A...S)} k_A$ and $y(S,r_S) = 0 ~\forall r > \sum_{A:\exists(A...S)} k_A$.
\end{proof2}

\hypertarget{proof:prop:pricing}{~}\\
\begin{proof2}[Proof of \cref{prop:pricing}]
    In the social welfare maximization problem, prices are nested in the dual characteristic function. To avoid this difficulty, we will use a more complex but equivalent formulation of the SWP. Recall that whenever a step on the incremental cost function $\DC(S,\cdot)$ is allocated, a step on every incremental cost function $\DC({S'},\cdot)$ for all $S'\in\Succes(S)$ must be also allocated (see also \cref{lemma:monotonicity-r-tilde}). Thus, on each incremental cost function associated with package $S$, some steps can be assigned that are directly linked to the allocation $S$ and, in addition, some steps may be allocated that are related to the allocation of some package $S'\in\Predec(S)$. To make this distinction, we denote a cost function step as $r({S'},S)$ if it is allocated on cost function $\DC({S'},\cdot)$ due to the allocation of package $S$ (to a buyer). With this definition, $\sum_{r(S,S)} y(S,r(S,S)) = k_S$.
    
    In the original SWLP, the cost function graph was encoded in the characteristic function $\phi^G$. In the subsequent alternative formulation SWLP$^\prime$, the cost function graph is encoded in the equality constraints which require that, for each incremental cost step corresponding to the allocation of package $S$, cost steps on all incremental cost functions corresponding to successors $\Succes(S)$ are also allocated. Recall that $S\in\Succes_0(S)$. In the notation of dual variables we write, e.g.,~$d(S',S,r(S',S))$ as $d(S',S,r)$ for brevity.
    \medskip
    
    {\bfseries SWLP$^\prime$}
    \begin{equation*}
        \max_{\substack{\{x(S,q,l),y({S'},r({S'},S)),r({S'},S)\in\Rbarset \\ \forall \Sin,{S'}\in\Succes_0(S)), q\in\Qbarset\}}}\left[ \sum_{S,q,l}v^l(S,q)x(S,q,l) - \sum_{S,{S'}\in\Succes_0(S),r({S'},S)} y({S'},r({S'},S)\DC({S'},r({S'},S))\right]
    \end{equation*}
    s.t.
    \begin{flushright}
    \noindent\begin{tabular}
    {%
    @{\ }>{$}r<{$}@{\ }%
    @{\ }>{$}c<{$}@{\ }%
    @{\ }>{$}l<{$}@{\ }%
    @{\ }>{$}l<{$}@{\ }%
    @{\ }>{$}l<{$}@{\ }%
    @{\ }>{$}r<{$}@{\ }
    }%
    x(2^N,q,l) & \leq & 1 & \forall q,l & b(q,l) & \constraint{equ:SWLP'1}\\
    x(S,\Qbarset,\buyers) - \sum_{r(S,S)}y(S,r(S,S)) & \leq & 0 & \forall S & [p(S)]& \constraint{equ:SWLP'2}\\
    y({S'},r({S'},S)) & \leq & 1 & \forall S, S'\in\Succes_0(S), r({S'},S) & [d({S'},S,r)] & \constraint{equ:SWLP'3}\\
    y(S,r(S,S)) - y({S'},r({S'},S)) & = &  0 & \forall S,S'\in\Succes(S), r({S'},S) & [u({S'},S,r)] & \constraint{equ:SWLP'4}\\ 
    x(S,q,l), y(S,r(S',S)) & \geq & 0 & \forall S,S'\in\Succes_0(S),q,r(S',S) & & \constraint{equ:SWLP'5}\\
    \end{tabular}
    \end{flushright}
    \medskip

    We formulate the dual of SWLP' as follows.\\
    \medskip
    
    {\bfseries DSWLP$^\prime$}
    \begin{equation*}
        \min_{\{b(q,l),p(S),d({S'},S,r),u({S'},S,r\}}\left[
        \sum_{q, l} b(q, l)+\sum_{S, {S'}\in\Succes_0(S), r({S'}, S)} d({S'}, S, r) \right]
    \end{equation*}%
    s.t.
    \begin{flushright}
    \noindent\begin{tabular}
    {%
    @{}>{$}r<{$}@{\ }%
    @{}>{$}c<{$}@{\ }%
    @{}>{$}l<{$}@{\ }%
    @{}>{$}l<{$}@{\ }%
    @{}>{$}l<{$}@{\ }%
    @{}>{$}r<{$}@{\ }
    }%
    b(q,l)+p(S) & \geq & v^l(S,q) & \forall S,q,l & [x(S,q,l)] & \constraint{equ:DSWLP'1}\\ 
    -p(S)+d(S,S,r) + u(\Succes(S),S,r) & \geq & -\DC(S,r(S,S)) & \forall S, r(S,S) & [y(S,r(S,S))] & \constraint{equ:DSWLP'2}\\
    d({S'},S,r)-u({S'},S,r) & \geq & -\DC({S'}, r({S'},S)) & \forall S,S'\in\Succes(S), r(S',S) & [y({S'},r({S'},S))] & \constraint{equ:DSWLP'3}\\
    b(q,l),d({S'},S,r),p(S) & \geq & 0, u({S'},S,r) \in \mathbb{R} & \forall S,S'\in\Succes(S), q, r(S',S) & & \constraint{equ:DSWLP'4}\\
    \end{tabular}
    \end{flushright}
    \medskip
    
    Substituting \cref{equ:DSWLP'3} into \cref{equ:DSWLP'2} and rearranging, we obtain $p(S) \leq \sum_{S'\in\Succes_0(S)} d({S'},S,r) + \DC({S'},r({S'},S))$.
    On any cost function step that is allocated a positive quantity $y({S'},r({S'},S))>0$ complementary slackness implies that constraints (\ref{equ:DSWLP'2}) and (\ref{equ:DSWLP'3})  hold with equality. While $r({S'},S)$ designates which package is allocated on step $r$ of incremental cost function $\DC({S'},\cdot)$, the order in which packages are allocated on incremental cost functions does not matter; it is only the sum of all allocated incremental cost steps that determines the seller's costs. Because $r({S'},S)$ could indeed be any of the steps on $\DC({S'},\cdot)$, on which a positive amount is allocated, we can omit the specification of $S$ and ${S'}$. The first statement of \cref{prop:pricing} holds then indeed for any step $r$, and the second statement for all steps $r$ on which a positive quantity is allocated, i.e.,~all $r\leq\rtild_S$.
\end{proof2}

\begin{observation}\label{observation:CSC}
    Complementary slackness from LP duality implies the following observations, corresponding to the constraints in \cref{equ:SWLP1,equ:SWLP2,equ:SWLP3,equ:DSWLP1,equ:DSWLP2}.
    \begin{flushright}
    \begin{tabular}{%
        @{\ }l@{\ }%
        @{\ }l@{\hspace{45pt} }%
        @{\ }r@{\ }%
        }%
        (i) & If $x(2^N,q,l) < 1$, then $b(q,l) = 0~\forall q\in\Qbarset,l\in\buyers$.
        & \csclabel{csc:S1}{S1}\\
        (ii) & If $x(S,\Qbarset,\buyers)  - \phi^G_S(\{y(S,r)\}) < 0$, then $p(S) = 0~\forall\Sin$.
        & \csclabel{csc:S2}{S2}\\
        (iii) & If $y(S,r) < 1$, then $d(S,r) = 0~\forall \Sin,r\leq\Rbar$.
        & \csclabel{csc:S3}{S3}\\
        (iv) & If $x(S,q,l) \neq 0$, then $b(q,l) = v^l(S,q) - p(S) ~\forall\Sin, q\in\Qbarset,l\in\buyers$.
        & \csclabel{csc:D1}{D1}\\
        (iv) & If $y(S,r) \neq 0$, then $d(S,r) + \DC(S,r) = \psi^G_S(\{p(S)\})~\forall\Sin,r\leq\Rbar$.
        & \csclabel{csc:D2}{D2}\\ 
    \end{tabular}
    \end{flushright}
\end{observation}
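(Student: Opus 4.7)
The plan is to invoke the standard complementary slackness theorem of linear programming and simply read off each of the five assertions from the primal-dual pair (SWLP, DSWLP). Recall that, for a primal maximization LP in canonical form with $\leq$-constraints and nonnegative variables, at any pair of optimal primal/dual solutions the theorem gives us two families of conditions: (a) for every primal constraint, either that constraint is tight or the associated dual variable is zero; (b) for every primal variable, either that variable is zero or the associated dual constraint is tight. Since strong duality holds for SWLP (the feasible set is a nonempty polytope and the objective is linear, so primal and dual optima exist and coincide), these conditions apply to any optimal pair $(x,y)$, $(b,p,d)$ here.

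First I would verify the variable-constraint pairings recorded in the SWLP/DSWLP display. The bracketed labels in SWLP are precisely the dual variables associated with each primal inequality: \cref{equ:SWLP1} pairs with $b(q,l)$, \cref{equ:SWLP2} with $p(S)$, and \cref{equ:SWLP3} with $d(S,r)$. Symmetrically, the bracketed labels in DSWLP identify which primal variable is paired with each dual inequality: \cref{equ:DSWLP1} pairs with $x(S,q,l)$ and \cref{equ:DSWLP2} with $y(S,r)$. Because all variables are restricted to be nonnegative (the integrality of $x,y$ is dropped in passing from SWP to SWLP, but nonnegativity remains), the complementary slackness conditions take the clean product form used above.

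Next I would apply (a) to each of the three primal-constraint/dual-variable pairs. The slack version of \cref{equ:SWLP1} is $1 - x(2^N,q,l) > 0$, which forces $b(q,l)=0$; this is exactly (\ref{csc:S1}). The slack version of \cref{equ:SWLP2} is $\phi^G_S(\{y(S,r)\}) - x(S,\Qbarset,\buyers) > 0$, which forces $p(S)=0$; this is (\ref{csc:S2}). Finally, slackness in \cref{equ:SWLP3} gives $d(S,r)=0$, which is (\ref{csc:S3}). Symmetrically, for (b): $x(S,q,l) \neq 0$ (and hence $>0$ by nonnegativity) forces \cref{equ:DSWLP1} to hold with equality, yielding $b(q,l) + p(S) = v^l(S,q)$, i.e., (\ref{csc:D1}); and $y(S,r)\neq 0$ forces \cref{equ:DSWLP2} to hold with equality, yielding $d(S,r) - \psi^G_S(\{p(S)\}) = -\DC(S,r)$, which rearranges to (\ref{csc:D2}).

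There is no real obstacle here: the whole content of the observation is that the primal-dual formulation has been set up so that complementary slackness reads off directly. The only minor subtlety is ensuring that the characteristic-function term $\phi^G_S(\{y(S,r)\})$ in \cref{equ:SWLP2} and its dual counterpart $\psi^G_S(\{p(S)\})$ in \cref{equ:DSWLP2} really do form a complementary slackness pair. This follows from \cref{lem:one-to-one-mapping}, which establishes that $\phi^G$ is linear and representable as $\phi^G_S(\{y(S,r)\}) = \Phi_S \bm{y}^\intercal$, so its transpose contribution to the dual constraint on $y(S,r)$ is exactly $\Phi^\intercal_S \bm{p}^\intercal = \psi^G_S(\{p(S)\})$, as spelled out before the statement of DSWLP.
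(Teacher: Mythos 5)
Your proposal is correct and matches the paper's intent exactly: the paper states \cref{observation:CSC} without a separate proof, treating it as an immediate reading-off of the standard complementary slackness conditions from the primal--dual pair (SWLP, DSWLP), with the variable--constraint pairings given by the bracketed dual labels, which is precisely what you do. Your added remark that the $\phi^G_S$/$\psi^G_S$ terms form a legitimate complementary pair because $\phi^G$ is linear with matrix $\Phi$ and $\psi^G = \Phi^\intercal \bm{p}^\intercal$ is a correct and worthwhile check, consistent with the construction given just before the statement of DSWLP.
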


\begin{lemma}\label{lem:facts}
    If $\{x(S,q,l),y(S,r)\}$ and $\{b(q,l),d(S,r),p(S)\}$, $\Sin, q\in\Qbarset,l\in\buyers,r\in\Rbarset$, are solutions to SWLP and DSWLP, respectively, the following facts hold.
    \begin{flushright}
    \begin{tabular}{%
        @{\ }l@{\ }%
        @{\ }l@{\ }%
        @{\ }r@{\ }%
        }%
        (i) & $\sum_{S'\in\Predec_0(S)} x({S'}, \Qbarset, \buyers) \leq y(S,\Rbarset)$, for all $\Sin$. & \csclabel{fact:F1}{F1}\\
        (ii) & If $p(S) > 0$, then $\sum_{S'\in\Predec_0(S)} x({S'},\Qbarset,\buyers) = y(S,\Rbarset) $, for all $\Sin$. & \csclabel{fact:F2}{F2}\\
        (iii) & If $v^l(S,q) - p(S) < \max_{S'\neq S}\{v^l(S',q)-p(S'),0\}$, then $x(S,q,l)=0 $, for all $S,q,l$. & \csclabel{fact:F3}{F3}\\
        (iv) & If $\max_{\Sin}\{v^l(S,q) - p(S)\} > 0$, then $x(2^N,q,l) = 1 $, for all $q\in\Qbarset,l\in\buyers$. & \csclabel{fact:F4}{F4}\\
        (v) & If $\DC(S,r) < \psi^G_S(\{p(S)\})$, then $y(S,r) = 1 $, for all $ \Sin$, $r\leq \Rbar$. & \csclabel{fact:F5}{F5}\\
        (vi) & If $\DC(S,r) > \psi^G_S(\{p(S)\})$, then $y(S,r) = 0 $, for all $ \Sin$, $r\leq \Rbar$. & \csclabel{fact:F6}{F6}\\
    \end{tabular}
    \end{flushright}
\end{lemma}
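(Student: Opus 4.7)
The plan is to verify each of the six facts by a direct application of the complementary slackness conditions listed in \cref{observation:CSC}, combined with the primal and dual feasibility constraints of SWLP and DSWLP; two of the facts additionally invoke the price-structure results in \cref{prop:pricing,lem:pricebounds}. Facts (F3)--(F6) concern individual $x$- and $y$-variables and admit short CS arguments; facts (F1) and (F2) require aggregating constraint~(\ref{equ:SWLP2}) across the cost function graph.

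For (F3), I start from $x(S,q,l)>0$, apply \cref{csc:D1} to get $b(q,l)=v^l(S,q)-p(S)$, and combine with $b(q,l)\geq 0$ and the dual feasibility~(\ref{equ:DSWLP1}) $b(q,l)\geq v^l(S',q)-p(S')$ to deduce $v^l(S,q)-p(S)\geq\max_{S'\neq S}\{v^l(S',q)-p(S'),\,0\}$; the contrapositive is (F3). For (F4), $\max_{S}\{v^l(S,q)-p(S)\}>0$ combined with (\ref{equ:DSWLP1}) forces $b(q,l)>0$, and \cref{csc:S1} then forces $x(2^N,q,l)=1$. For (F5), the hypothesis $\DC(S,r)<\psi^G_S(\{p(S)\})$ combined with (\ref{equ:DSWLP2}) yields $d(S,r)\geq\psi^G_S(\{p(S)\})-\DC(S,r)>0$, and \cref{csc:S3} then forces $y(S,r)=1$. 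For (F6), assuming $y(S,r)>0$, \cref{csc:D2} gives $d(S,r)+\DC(S,r)=\psi^G_S(\{p(S)\})$, so the hypothesis $\DC(S,r)>\psi^G_S(\{p(S)\})$ forces $d(S,r)<0$, contradicting (\ref{equ:DSWLP3}); hence $y(S,r)=0$.

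For (F1) and (F2), I aggregate~(\ref{equ:SWLP2}) over $S'\in\Predec_0(S)$. Setting $k_{S'}:=\phi^G_{S'}(\{y(S,r)\})$ and using the linearity of $\phi^G$ with the recursion $k_{S'}=y(S',\Rbarset)-\sum_{T\in\Predec(S')}k_T$ built into \cref{alg:one-to-one}, the sums telescope to $\sum_{S'\in\Predec_0(S)}k_{S'}=y(S,\Rbarset)$. Summing $x(S',\Qbarset,\buyers)\leq k_{S'}$ over $\Predec_0(S)$ then gives (F1). To upgrade the inequality to an equality in (F2), \cref{csc:S2} applied at $S$ yields $x(S,\Qbarset,\buyers)=k_S$ whenever $p(S)>0$; the remaining step is to propagate this tightness so that $x(S',\Qbarset,\buyers)=k_{S'}$ holds for every predecessor $S'\in\Predec(S)$ as well, since the chain of per-$S'$ inequalities must all be tight for the summed inequality to be.

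I expect this propagation to be the main obstacle, because $p(S)>0$ does not mechanically imply $p(S')>0$ for supersets $S'\in\Predec(S)$: the incremental packaging cost $\DC(S',\cdot)$ can be negative and offset the positive contributions of $S$'s successors. To handle it, I would do a case analysis using the monotonicity $\tilde r_{S'}\leq\tilde r_S$ from \cref{observation:monotonicity-r-tilde}: when $\tilde r_{S'}=0$ no copy of $S'$ or its own predecessors is sold, so the corresponding term $k_{S'}$ and the buyers' demand $x(S',\Qbarset,\buyers)$ both vanish. Otherwise, I would invoke \cref{prop:pricing} for both $S$ and $S'$ at $r=1$, together with the inclusion $\Succes_0(S)\subseteq\Succes_0(S')$, to write $p(S')=p(S)+\DC(\Succes_0(S')\setminus\Succes_0(S),1)+d(\Succes_0(S')\setminus\Succes_0(S),1)$, and combine this with dual feasibility~(\ref{equ:DSWLP2}) and the non-negativity of $d$ to show by contradiction with DSWLP-optimality that any strict slack in the predecessor's constraint would permit a local dual modification strictly improving the DSWLP objective, forcing $x(S',\Qbarset,\buyers)=k_{S'}$ as required.
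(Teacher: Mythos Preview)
Your arguments for (F1) and (F3)--(F6) are correct and essentially identical to the paper's: each is a direct complementary-slackness step combined with the relevant SWLP/DSWLP constraint, and your telescoping identity $\sum_{S'\in\Predec_0(S)}k_{S'}=y(S,\Rbarset)$ is exactly what the paper uses for (F1).

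The difference is in (F2). You correctly isolate the obstacle---propagating tightness of constraint~(\ref{equ:SWLP2}) from $S$ to every predecessor $S'\in\Predec(S)$---and your $\tilde r_{S'}=0$ case matches the paper's. But for $\tilde r_{S'}\geq 1$ your proposed ``local dual modification'' argument does not work as stated: $p(S')$ does not appear in the DSWLP objective $\sum_{q,l}b(q,l)+\sum_{S,r}d(S,r)$, so slack in the primal constraint for $S'$ does not directly enable a strict dual improvement. The standard consequence of that slack is precisely $p(S')=0$ via~(S2), which is the case you are trying to exclude, so the argument is circular unless you supply a further mechanism. The paper's resolution is more direct and uses an ingredient you do not mention: the standing model assumption $C^0(\kb)\geq 0$ for all $\kb\in\K$. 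For an allocated predecessor $S'$, \cref{prop:pricing} at $r=1$ gives $p(S')=\DC(\Succes_0(S'),1)+d(\Succes_0(S'),1)$, and since $\DC(\Succes_0(S'),1)=C^0(\{S'\})\geq 0$ and $d(\cdot,1)\geq 0$, the paper concludes $p(S')>0$ and then applies~(S2) directly. So the negative incremental costs you worry about are neutralised by non-negativity of \emph{total} cost, not by a perturbation argument.

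A minor point: your plan says two facts ``additionally invoke'' \cref{lem:pricebounds}, but that lemma's proof uses (F5), so citing it here would be circular. You do not actually use it in your detailed arguments, so this is only a slip in the plan.
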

\medskip

\begin{proof2}[Proof of \cref{lem:facts}]\phantomsection\label{proof:lem:facts}
\hyperref[fact:F1]{(F1)} is obtained by summing up the bundle $S$ supply constraints from SWLP. In particular, we sum over all $S'\in\Predec_0(S)$. By definition of incremental cost functions, $\sum_{S'\in\Predec_0(S)} Y_{S'} = \sum_{r\in\Rbarset}y(S,r)$ for any package $\Sin$.

To show \hyperref[fact:F2]{(F2)}, first note the contrapositive of \hyperref[csc:S2]{(S2)}: if $p(S) > 0$, $x(S,\Qbarset,\buyers) - \phi^G(\{y(S,r)\}) = 0$ ($>0$ is ruled out by the constraints \cref{equ:SWLP2} of the SWLP). We wish to sum these constraints across all packages $S'\in\Predec_0(S)$ of which some positive quantity is allocated, i.e.,~$y({S'},r)>0$ for some step $r$. For $y({S'},r)>0$, we can apply \cref{prop:pricing}, and with $C^0(\kb)\geq 0 ~\forall \kb\in\K$ and $d(S,r)\geq 0~\forall S,r$ it follows that $p({S'}) > 0 $ for all $S'\in\Predec_0(S)$. Therefore, we can take the sum of tight supply constraints corresponding to packages $S'\in\Predec_0(S)$, noting that including those packages ${S'}$ of which nothing is allocated, i.e.,~$y({S'},r)=0$ for all $r$ (and thus also $x({S'},q,l) = 0$ for all $q,l$), does not change the sum. Thus, \hyperref[fact:F2]{(F2)} follows.

\hyperref[fact:F3]{(F3)} is derived from \hyperref[csc:D1]{(D1)}: Assume $x(S,q,l)>0$ and $x(S'',q,l)>0$, $S\neq S''$. Then, by \hyperref[csc:D1]{(D1)}, $v^l(S,q)-p(S) = v^l(S'',q)-p(S'') = b(q,l)$. So if $v^l(S,q)-p(S) < v^l(S',q)-p(S')$ for some $S'$, then $x(S,q,l) = 0$. Furthermore, $b(q,l) \geq 0$ and $b(q,l) \geq v^l(S,q)- p(S)$ by \cref{equ:DSWLP3}. Thus, if $v^l(S,q)-p(S) < 0$, then $v^l(S,q)-p(S) < b(q,l)$, and thus $x(S,q,l)=0$. Together, we obtain \hyperref[fact:F3]{(F3)}.

To show \hyperref[fact:F4]{(F4)}, note that $b(q,l)\geq v^l(S,q) - p(S)$ implies that, if $\max_\Sin\{v^l(S,q) - p(S)\} > 0$, then $b(q,l)>0$. The contrapositive of \hyperref[csc:S1]{(S1)} then implies $x(2^N,q,l) = 1$, and thus \hyperref[fact:F4]{(F4)}.
\hyperref[fact:F5]{(F5)} follows by the contrapositive of \hyperref[csc:S3]{(S3)}, as from the constraint $d(S,r) \geq \psi^G(\{p(S)\}) - \DC(S,r)$ it follows that if $\DC(S,r) < \psi^G(\{p(S)\})$, then $d(S,r) >0$ for any $\Sin$ and $r \in \Rbarset$.
Finally, if $\DC(S,r) > \psi^G(\{p(S)\})$, then it must be $d(S,r) > \psi^G(\{p(S)\}) - \DC(S,r)$, because $d(S,r)$ is non-negative. Hence, the contrapositive of \hyperref[csc:D2]{(D2)} implies $y(S,r) = 0$, and therefore \hyperref[fact:F6]{(F6)}.
\end{proof2}\\

\begin{proof2}[Proof of \cref{prop:characteristic-dual}]\phantomsection\label{proof:prop:characteristic-dual}
    From \cref{prop:pricing}, we have $p(S) = \DC (\Succes_0(S), r) + d(\Succes_0(S),r)$. Substituting \hyperref[csc:D2]{(D2)} into \cref{prop:pricing}, the first statement follows.
    The contrapositive of \hyperref[fact:F5]{(F5)} gives $y(S,\rtild_S+1) = 0 \Rightarrow \DC(S,\rtild_S+1) \geq \psi_S^G(\{p(S)\})$. As $y(S,\rtild_S) = 1$, we can substitute \hyperref[csc:D2]{(D2)} and the second statement follows.
\end{proof2}\\

\begin{proof2}[Proof of \cref{prop:primal-dual-CE}]\phantomsection\label{proof:prop:primal-dual-CE}
Assume the allocation $\{x(S,q,l),y(S,r)\}$ and prices $\{p(S)\}$ are solutions of SWLP and DSWLP as defined above. By \cref{lem:facts}, conditions \hyperref[fact:F1]{(F1)} - \hyperref[fact:F6]{(F6)} hold. In the following, we show that, together with \hyperref[csc:S1]{(S1)} - \hyperref[csc:D2]{(D2)} and the constraints of SWLP and DSWLP, \cref{lem:facts} implies that the prices $\{p(S)\}$ support $\{x(S,q,l),y(S,r)\}$ as a package-linear pricing Walrasian equilibrium.

We prove that (a) there is no surplus improvement possible for any fictitious agents $q$ and any buyer $l$, (b) for a buyer who received a package multiset, no surplus improvement is possible from reassigning elements of the multiset to different unit-demand agents, (c) that, given an allocated supply partition and prices, the seller cannot improve her utility by allocating more or less of a given package, and (d) that, given an allocated supply partition and prices, the seller cannot improve her utility by choosing a different partition.

(a): If the surplus of unit-demand agent $(q,l)$ is negative on all packages in its valuation $v^l(S,q)$, \hyperref[fact:F3]{(F3)} ensures that this unit-demand agent is not assigned anything. \hyperref[fact:F4]{(F4)} implies that, if a strictly positive surplus can be made on any bundle of some unit-demand agent, the maximum quantity of one is allocated to that agent, and from \hyperref[fact:F3]{(F3)} it follows that only bundles that maximize the surplus of agent $(q,l)$ are allocated with a positive quantity. \cref{equ:SWLP1} ensures that not more than the maximum quantity of one is allocated.

(b): If a buyer receives a multiset of items $\kb$, it is value-maximally assigned to his corresponding unit-demand agents by \cref{def:marg-value-aggr}. Recall that the unpacking of multisets is not allowed by our model assumptions. To see that \emph{at the given auction prices}, a buyer has no incentive to reassign elements of the allocated multiset between his unit-demand agents, let $\kb$ denote the multiset of items received by buyer $l$ and let $\mathcal{Q}^l$ denote the set of the corresponding winning unit-demand agents, i.e.,~$\kb = \left(\sum_{q\in\mathcal{Q}^l} x(S,q,l)\right)_{\Sin}$. The buyer's utility is given by $u^l(\kb,p) = \sum_{q\in\mathcal{Q}^l,S} (v^l(S,q) - p(S)) x(S,q,l)$. Suppose that in an alternative assignment $\wtild \xb(l)\neq \xb(l)$ such that $\left(\sum_{q\in\mathcal{\wtild Q}^l} \wtild x(S,q,l)\right)_{\Sin} = \kb$, which gives strictly higher utility, i.e.,~$\wtild u(\kb,p) = \sum_{q\in \mathcal{Q}^l,S} (v^l(S,q) - p(S)) \wtild x(S,q,l) > u(\kb,p)$. 
Then, there exist at least two unit-demand agents $i,j \in \mathcal{\wtild Q}$, for which $ 1 = \wtild x(\wtild S,q,l)\neq x(\wtild S,q,l) = 0$, $(q,\wtild S) = (i,\wtild S_i), (j,\wtild S_j)$ and $ 0 = \wtild x(S,q,l)\neq x(S,q,l) = 1$, $(q,S) = (i,S_i), (j,S_j)$. Moreover, $v^l(\wtild S_i,i) - p(\wtild S_i) + v^l(\wtild S_j,j) - p(\wtild S_j) > v^l(S_i,i) - p(S_i) + v^l(S_j,j) - p(S_j) = b(i,l) + b(j,l)$. However, by \cref{equ:DSWLP1}, we must have $v^l(S,q) - p(S) \leq b(q,l)$ for all $\Sin$, a contradiction. Thus, no additional surplus can be generated by shifting allocations between winning unit-demand bidders. Moreover, no additional surplus can be generated by allocating to non-winning unit-demand agents, as, by the contraposition of \hyperref[fact:F4]{(F4)} we must have $p(S) \geq v^l(S,q)$ for all $\Sin$, for all $q\in \Qbarset\setminus \mathcal{\wtild Q}$. Thus, $\wtild u(\kb,p) \leq u(\kb,p)$.

(c): If a step in the incremental cost function $\DC(S,\cdot)$ is allocated, i.e.,~$y(S,r) > 0$, then the contrapositive of \hyperref[fact:F6]{(F6)} implies $\psi^G(\{p(S)\}) \geq \DC(S,r)$.
Together with \cref{prop:characteristic-dual} we have $p(S) \geq \left(d(\Succes(S),r) + \DC(\Succes(S),r)\right) + \DC(S,r)$, i.e.,~the seller always sells package $S$ at a weakly positive surplus.
Furthermore, if $p(S) > 0$, it follows by \hyperref[fact:F2]{(F2)} that $\sum_{S'\in\Predec_0(S)} x({S'},\Qbarset,\buyers) = y(S,\Rbarset) $, for all $\Sin$, i.e.,~the amount of all packages with cost connection to $S$ sold equals $y(S,\Rbarset)$ and no package assigned on some incremental cost function goes to waste.

Now we show that, if a positive surplus can be made on some incremental cost function step $r$ corresponding to package $S$, then it is assigned $y(S,r) = 1$. Suppose $p(S) > \DC(S,r) + d(\Succes(S),r) + \DC(\Succes(S),r)$, i.e.,~a strictly positive surplus is made on package $S$ (recall $d(S,r)\geq 0$ for all $S,r$). Then, we have
\begin{align*}
    d(S,r) & = 
    d(\Succes_0(S),r) + \DC(\Succes_0(S),r) - \left(d(\Succes(S),r) + \DC(\Succes(S),r)\right) - \DC(S,r) \\
    & \geq p(S) - \left(d(\Succes(S),r) + \DC(\Succes(S),r)\right) - \DC(S,r)\\
    & > 0
\end{align*}
By the contrapositive of \hyperref[csc:S3]{(S3)}, $d(S,r)>0$ implies that $y(S,r) = 1$.

We also show that, if a loss would be made by assigning step $r$ corresponding to package $S$, then $y(S,r) = 0$. Note that the loss is compared to not assigning the items in $S$ at all, or compared to assigning the items contained in $S$ as a different partition (with elements that are strict subsets of $S$). Thus, the shadow prices $d(S,r)$, which capture potential gains in these subsets, appear in the following equation. Let $p(S) < d(\Succes(S),r) + \DC(\Succes(S),r) + \DC(S,r)$, i.e.,~assigning package $S$ is not profitable at the given prices. For contradiction, suppose $y(S,r)>0$. Then, \cref{prop:characteristic-dual} applies, and
\begin{align*}
    \psi^G(\{p(S)\}) & = p(S) - \left(d(\Succes(S),r) + \DC(\Succes(S),r)\right)\\
    & < d(\Succes(S),r) + \DC(\Succes(S),r) + \DC(S,r) - \left(d(\Succes(S),r) + \DC(\Succes(S),r)\right)\\
    & = \DC(S,r)
\end{align*}
Then, by \hyperref[fact:F6]{(F6)} we must have $y(S,r) = 0$.

(d) We claim that, given a partition of supply that is a solution to SWLP, the seller cannot improve her utility by choosing a different partition of supply. To see this, recall \cref{lem:one-to-one-mapping}, which states that the mapping from the assignment on incremental cost functions $(y(S,r))_{\Sin,r \leq \Rbar}$ to a package multiset $(k_S)_{\Sin}$ is one-to-one. In (c), we have shown that the seller cannot improve her surplus given the assignment on incremental cost functions, given the dual prices. By \cref{lem:one-to-one-mapping}, the resulting partition of supply is unique and also optimal for the seller.
\end{proof2}\\

\begin{proof2}[Proof of \cref{prop:IP-LP-existence}]\phantomsection\label{proof:prop:IP-LP-existence}
    Let $O_{\text{SWP}}$ ($O_{\text{SWLP}}$, $O_{\text{DSWLP}}$) denote the value of an optimal solution to SWP (SWLP, DSWLP), respectively. First, suppose $O_{\text{SWP}} = O_{\text{SWLP}}$. Then, there exists $\{x(S,q,l),y(S,r)\}$ as an optimal solution to SWP and SWLP, and $\{x(S,q,l),y(S,r)\}$ is efficient. By \cref{prop:primal-dual-CE}, the dual variables $\{p(S)\}$ of DSWLP support this allocation as a package-linear pricing Walrasian equilibrium.

    Now suppose that there exists an equilibrium, i.e.,~prices $\{p(S)\}\geq 0$ that support $\{x(S,q,l)$, $y(S,r)\}$ as an equilibrium allocation. By \cref{prop:efficientPLW-multi-unit}, the allocation is efficient. Let $b(q,l) := v^l(S,q) - p(S)$ for all $q\in\Qbarset,l\in\buyers: x(S,q,l)>0$, and let $d(S,r):= \psi^G_S(\{p(S)\}) - \DC(S,r)$ for all $\Sin, r\leq \rtild_S:y(S,r)>0$. The dual variables $d(S,r)$, $\Sin,r\leq\rtild_S$, are defined recursively by \cref{prop:characteristic-dual}: for $j\in N$, we have $\psi_S^G(\{p(j)\}) = p(j)$, so, by \cref{prop:pricing}, $d(j,r):= p(j) - \DC(j,r)$ for all $j\in N, r \leq \Rbar$. Given $d(j,r)$, one can go on to define $d(S,r)$ for all $S\in\DPredec(j), j\in N$, etc. The $d(S,r)$ are the seller's surplus on each individual supply step, and the $b(q,l)$ are the surplus of each unit-demand agent. Because $\{p(S)\}$ are Walrasian equilibrium prices, it must be that the surpluses are positive, i.e.,~$d(S,r),b(q,l) \geq 0$. Together with their definitions, this implies that $b(q,l), d(S,r)$, and $p(S)$ are feasible in DSWLP.

    We now show that the seller's revenue $\sum_S p(S) Y_S$ is equivalent to the sum of the seller's revenue corresponding to each incremental cost function, $\sum_{S,r:y(S,r)=1} \psi^G_S(\{p(S)\})$. Using \cref{lem:combinatorics-lem}, we have
    \begin{align*}
        & \sum_{S,r:y(S,r)=1} \psi^G_S(\{p(S)\}) = \sum_{S,r:y(S,r)=1}  p(S) - \left(d(\Succes(S),r) + \DC(\Succes(S),r)\right)
    \end{align*}
    Moreover, recall that the quantity assigned on an incremental cost function $\DC(S,\cdot)$ equals the sum of all packages sold that have a cost connection to $S$, i.e., $\sum_{r} y(S,r) = \sum_{S'\in\Predec_0(S)}Y_{S'} =  Y_S + \sum_{S'\in\Predec(S)}Y_{S'}$. Thus, we further write
    \begin{align*}
        & \sum_{S,r:y(S,r)=1} \psi^G_S(\{p(S)\}) = \sum_{S} p(S) Y_S  + F, \quad \text{where}\\
        F := & \sum_{S} p(S) \left(\sum_{S'\in\Predec(S)} Y_{S'}\right)
        - \sum_{S}\left(d(\Succes(S),r) + \DC(\Succes(S),r)\right) \left(\sum_{S'\in\Predec_0(S)} Y_{S'}\right)
    \end{align*}
    Using \cref{prop:pricing}, we substitute for $p(S))$ and obtain
    \begingroup
        \allowdisplaybreaks
        \begin{align*}
        F = & \sum_{S} \left(d(\Succes_0(S),r) + \DC(\Succes_0(S),r)\right) \left(\sum_{S'\in\Predec(S)} Y_{S'}\right)\\
        - & \sum_{S}\left(d(\Succes(S),r) + \DC(\Succes(S),r)\right) \left(\sum_{S'\in\Predec_0(S)} Y_{S'}\right) \\
        = & \sum_{S} \left(d(S,r) + \DC(S,r)\right) \left(\sum_{S'\in\Predec(S)} Y_{S'}\right)\\
        - & \sum_{S}\left(\sum_{S'\in \Succes(S)} d(S',r) + \DC(S',r)\right) Y_S
    \end{align*}
    \endgroup
    The term $F$ is equal to zero because of symmetry: the sum of the products of each node with the sum of its predecessors is equal to the sum of the products of each node with its successors. Because we are taking the sum across all nodes, it does not matter in which direction we traverse the directed graph. Thus, we have shown that $\sum_S p(S) Y_S = \sum_{S,r:y(S,r)=1} \psi^G_S(\{p(S)\})$.
    
    Finally, efficiency of the package-linear Walrasian equilibrium (cf.~\cref{prop:efficientPLW-multi-unit}) implies that the allocation $\{x(S,q,l),y(S,r)\}$ is optimal in SWLP. By strong duality, $O_{\text{SWLP}} = O_{\text{DSWLP}}$ holds. Thus, we have
    \begingroup
    \allowdisplaybreaks
    \begin{align*} 
        O_{\text{SWLP}} & = O_{\text{DSWLP}} \\
               & \stackrel{(1)}{\leq} \sum_{q,l} b(q,l) + \sum_{S,r} d(S,r)\\
               & \stackrel{(2)}{=} \sum_{S,q,l:x(S,q,l)=1} (v^l(S,q) - p(S)) + \sum_{S,r:y(S,r)=1} (\psi^G_S(\{p(S)\}) - \DC(S,r))\\
               & \stackrel{(3)}{=} \sum_{q,l,S}v^l(S,q)x(S,q,l)-%
               \sum_{S,r}\DC(S,r)y(S,r)\\
               & \stackrel{(4)}{=} O_{SWP}
    \end{align*}
    \endgroup
    (1) follows from the objective function of DSWLP. (2) follows by definition of $b(q,l)$ and $d(S,r)$ above. (3) follows because $\sum_{S,q,l:x(S,q,l)}p(S) = \sum_S Y_S p(S)$. Finally, (4) follows from the definition of SWP. Overall, $O_{\text{SWLP}} \leq O_{SWP}$. It also holds that $O_{\text{SWLP}} \geq O_{SWP}$ because any solution of SWP is feasible in SWLP, and the claim follows.
\end{proof2}\\

\begin{lemma}\label{lem:combinatorics-lem}
    The sets $x\subseteq z\in 2^N$, and a number $q$ with $|x|\leq q\leq |z|$ are given. Let $R:=\{y\in 2^N|x\subseteq y \subseteq z, |y|=q\}$. Then, $|R|=\binom{|z|-|x|}{q-|x|}$.
\end{lemma}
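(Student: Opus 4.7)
The plan is to establish a bijection between the set $R$ and the collection of $(q-|x|)$-element subsets of $z \setminus x$. First, I would observe that any $y \in R$ satisfies $x \subseteq y \subseteq z$, so $y$ can be written uniquely as $y = x \cup (y \setminus x)$, where $y \setminus x \subseteq z \setminus x$. Conversely, for any subset $w \subseteq z \setminus x$, the set $y := x \cup w$ satisfies $x \subseteq y \subseteq z$ (the latter because $x \subseteq z$ and $w \subseteq z \setminus x \subseteq z$), and moreover $|y| = |x| + |w|$ since $x$ and $w$ are disjoint.

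Hence the map $y \mapsto y \setminus x$ is a bijection between $R$ and the collection of subsets $w \subseteq z \setminus x$ with $|w| = q - |x|$. The latter has cardinality $\binom{|z \setminus x|}{q-|x|} = \binom{|z|-|x|}{q-|x|}$, where the equality uses $x \subseteq z$ to conclude $|z \setminus x| = |z| - |x|$. Since $|x| \leq q \leq |z|$ ensures $0 \leq q - |x| \leq |z| - |x|$, the binomial coefficient is well-defined and nonzero. This yields $|R| = \binom{|z|-|x|}{q-|x|}$.

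There is no real obstacle here; the only thing to be careful about is verifying both directions of the bijection (each $y \in R$ maps to a valid subset of $z \setminus x$ of the right size, and each such subset lifts back to a valid $y \in R$), together with the elementary cardinality identity $|z \setminus x| = |z|-|x|$ which requires $x \subseteq z$.
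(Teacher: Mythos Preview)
Your proof is correct and takes essentially the same approach as the paper: both argue that $y$ is determined by choosing $q-|x|$ elements from $z\setminus x$ to adjoin to $x$. Your version states the bijection $y\mapsto y\setminus x$ explicitly and verifies both directions, whereas the paper's proof is a one-paragraph informal counting argument, but the underlying idea is identical.
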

\begin{proof}
    This is a standard combinatorics problem. First, note that $q-|x|$ items can be added to $x$ such that $y$ contains $q$ items. These items also need to be different from those contained in $x$, and they need to be contained in $z$. Hence, there are $|z|-|x|$ different items, of which $q-|x|$ many can be added to $x$. This is possible in $\binom{|z|-|x|}{q-|x|}$ different ways.
\end{proof}

\begin{proof2}[Proof of \cref{prop:charac-function-complete-cfg}]\phantomsection\label{proof:prop:charac-function-complete-cfg}
    Let $W\subseteq S \subseteq S' \in 2^N$, and let $y(S,S',t,W)$ denote the amount allocated on the incremental cost function $\DC(S,\cdot)$ due to the cost connection of the allocated bundle $S'\supset_t W$, $t$ levels above $W$. Let $dist(x,y):= ||x|-|y||$ for any $x,y \in 2^N$. We first establish a series of facts.
    
    \noindent {\it Fact (1).} Given any reference supply set $W \subseteq S$, we can write the amount allocated on incremental cost function $\DC(S,\cdot)$ as
    \begin{align*}
        \sum_q y(S,r) = \sum_{t=dist(S,W)}^{n-|W|}\sum_{S'\supset_t W} y(S,S',t,W).
    \end{align*}
    
    \noindent {\it Fact (2).} Given $S,S',W$, let $r:= dist(S,W)$ and $t:=dist(S',W)$. By \cref{lem:combinatorics-lem} above, there exist $\binom{t}{r}$ incremental cost functions $\DC(S,\cdot)$ relative to $\DC(W,\cdot)$, on which the amount $y(S,S',t,W)$ is allocated.
    
    \noindent {\it Fact (3).} $y(S,S',t,W)$ does not depend on $S$. If a step on the incremental cost function $\DC(S',\cdot)$ is allocated, then a step on each incremental cost function $S \subseteq S'$ is allocated. Thus, the allocation of bundles on incremental cost functions in the graph ``between'' $W$ and $S'$ \emph{due to the allocation of bundle $S'$} has to be the same amount.
    
    \noindent {\it Fact (4).} For any $t\geq 1$, we have, by the binomial theorem, $\sum_{z=0}^t \binom{t}{z}(-1)^z = 0$.
    
    \noindent Using the facts above (references in the equation below), we have
    {\everymath={\displaystyle}
    \begin{equation*}
        \begin{array}{cl}
        & \sum_{t=0}^{n-|S|}\sum_{r}\sum_{S' \supset_{t}S} (-1)^t y(S',r) \\
        = & \sum_{t=0}^{n-|S|}\sum_{S' \supset_{t}S} (-1)^t \sum_{r}y(S',r) \\
        \stackrel{(1)}{=} & \sum_{t=0}^{n-|S|}\sum_{S' \supset_{t}S} (-1)^t \sum_{t=dist(S',S)}^{n-|S|}\sum_{S''\supset_t S} y(S',S'',t,S) \\
        = & \sum_{t=0}^{n-|S|}\sum_{S' \supset_{t}S} (-1)^t \sum_{z=t}^{n-|S|}\sum_{S''\supset_z S} y(S',S'',z,S) \\
        \stackrel{(2),(3)}{=} & \sum_{t=0}^{n-|S|}(-1)^t \sum_{z=t}^{n-|S|} \binom{z}{t}\sum_{S''\supset_z S} y(\cdot,S'',z,S) \\
        \stackrel{(4)}{=} & \sum_{t=0}^{n-|S|} \sum_{S''\supset_t S} y(\cdot,S'',t,S) \sum_{z=0}^{t} (-1)^z\binom{t}{z}\\
        = & \sum_{S''\supset_0 S} y(\cdot,S'',0,S) \sum_{t=0}^{0} (-1)^t\binom{0}{t}\\
        = & y(\cdot,S,0,S)\\
        = & \phi^G_S(\{y(S,r)\}) 
        \end{array}
    \end{equation*}
    }
\end{proof2}

\subsection{Proofs for \cref{sec:equilibrium-existence}}\label{app:sec:proofs-equilibrium-existence}

\begin{proof2}[Proof of \cref{prop:ascending-auction}]\phantomsection\label{proof:prop:ascending-auction}
    The proof proceeds by analogy with \cite{Sun2014}, but the detailed arguments differ. The auction terminates at some time $t^*$, because buyers' values are finite, i.e.,~demand ceases at some point. The empty package is always priced at zero.
    
    Let $p^* = p(t^*)$ and let $S^{l*} = S^l(t^*)$. Furthermore, let $\partition^* = \partition(t^*)\in D^0(p^*)$ denote the supply set in $D^0(p^*)$ that is chosen at time $t^*$ by the seller. First, we establish an allocation $\pi^*$ such that $(p^*,\pi^*)$ constitutes a package-linear Walrasian equilibrium. Because at $p^*$ no package is overdemanded, for any buyer $l\in \buyers$ with $S^{l*} \neq\emptyset$, his demand $S^{l*}$ must be in $\partition^*$. If $\bigcup_{l\in\buyers}S^{l*} = N$ holds, let $\pi^*(l) = S^{l*}$ for all $l\in \buyers$ and $\pi^*(0) = \emptyset$. Then $(p^*, \pi^*)$ is a package-linear Walrasian equilibrium.
    If $\bigcup_{l\in\buyers}S^{l*} \subset N$, there is at least one package $B$ in the chosen supply set $\partition^*$ which is not demanded by any buyer at time $t^*$. By analogy with SY, $B$ is called a \emph{squeezed-out} package. We distinguish multiple cases:
   
    {\bfseries Case 1:} $p^*(B) = c^0(B)$.
    The final price of bundle $B$ is still fixed at the starting price, so $B$ was never overdemanded. If a buyer demanded it in some earlier round, this buyer demands now a different, more profitable package. Let $\partition_0^* = \{B\in\partition^*~|~p^*(B) = c^0(B) \text{ and } B\neq S^{l*} \text{ for all } l \in \buyers\}$ be the set of all squeezed-out packages. Let $\pi^*(0) = \bigcup_{B\in\partition^*_0} B$ and allocate $\pi^*(0)$ to the seller at zero cost. Let $\K_0^*$ denote the universe of all partitions of the items contained in $\partition_0^*$. Because $\partition^*\in D^0(p^*)$, we have
    \begin{align*}
        \sum_{B\in \gamma}\left[p^*(B) - c^0(B)\right] \leq
        \sum_{B\in\partition_0^*}\left[p^*(B) - c^0(B)\right] = 0
    \end{align*}
    for all $\gamma \in \K_0^*$. Hence, the seller is indifferent between selling $\pi^*(0)$ or not.
    
    {\bfseries Case 2:} $ p^*(B) > c^0(B)$.
    Package $B$ was demanded by some buyer in an earlier round. Denoting by $t$ the last round in which $B$ was demanded by some buyer $l$, $B$ may be allocated to buyer $l$ at the current price $p^*(B)$ by the auction rule. Thus, we need to demonstrate that it is still utility-maximizing for buyer $l$ to receive package $B$ at the current price. By the auction rule, we must have $\mathcal{V}^l(p(t)) = u^l(B,p(t)) = v^l(B) - p(t,B) \geq 1$ and $p^*(B) = p(t,B)$ or $p^*(B) = p(t,B) + 1$. Thus, we have for buyer $l$, who is allocated the squeezed-out package $B$,
    \begin{align}
        u^l(B,p^*) = v^l(B) - p^*(B) \geq 0 \label{equ:quasiLinB}
    \end{align}
    Now two sub-cases need to be distinguished:

    {\bfseries Case 2A:} If $S^{l*} = \emptyset$, assign buyer $l$ the squeezed-out bundle, i.e.,~$\pi^*(l) = B$. $S^{l*}\in D^l(p^*)$ and $S^{l*} = \emptyset$ imply that $\mathcal{V}^l(p^*) = 0 $. By definition of $\mathcal{V}^l$ we have $\mathcal{V}^l(p^*)\geq u^l(B,p^*)$. Together with \cref{equ:quasiLinB} this implies $u^l(B,p^*) = 0$, and hence $\pi^*(l)\in D^l(p^*)$.
    
    {\bfseries Case 2B:} If $S^{l*} \neq \emptyset$, assign buyer $l$ what he demanded at time $t^*$ and the squeezed-out bundle, i.e.,~$\pi^*(l) = S^{l*} \cup B$. Because the seller chose a supply set $\partition^* \ni \{S^{l*}, B\}$, we have
    \begin{align}
        p^*(S^{l*}) - c^0(S^{l*}) + p^*(B) - c^0(B) \geq p^*(\pi^*(l)) - c^0(\pi^*(l)). \label{equ:profitBundleSeller}
    \end{align}
    
    Superadditivity of $l$'s utility, subadditivity of the seller's cost, and $S^{l*}\in D^l(p^*)$ imply
     \begin{align}
        v^l(\pi^*(l)) & \geq v^l(S^{l*}) + v^l(B) \label{equ:superAdd} \\
        c^0(\pi^*(l)) & \leq c^0(S^{l*}) + c^0(B) \label{equ:subAdd} \\
        v^l(S^{l*}) - p^*(S^{l*}) & \geq v^l(\pi^*(l)) - p^*(\pi^*(l)). \label{equ:buyerLOpt}
    \end{align}
    From \cref{equ:profitBundleSeller,equ:subAdd} follows
    \begin{align} \label{equ:priceBundleSeller}
        p^*(S^{l*}) + p^*(B) \geq p^*(\pi^*(l)).
    \end{align}
    Then, using equation \cref{equ:priceBundleSeller}, \cref{equ:superAdd}, and \cref{equ:quasiLinB} (in this order), we obtain
    \begin{align*}
        v^l(\pi^*(l)) - p^*(\pi^*(l)) & \geq v^l(\pi^*(l)) - \left[p^*(S^{l*}) + p^*(B)\right] \\
        & \geq \left[v^l(S^{l*}) - p^*(S^{l*})\right] + \left[v^l(B) - p^*(B)\right]\\
        & \geq v^l(S^{l*}) - p^*(S^{l*}).
    \end{align*}
    By \cref{equ:buyerLOpt}, $v^l(\pi^*(l)) - p^*(\pi^*(l)) = v^l(\pi^*(l)) - \left[p^*(S^{l*}) + p^*(B)\right] = v^l(S^{l*}) - p^*(S^{l*})$, and thus
    \begin{align}
        p^*(\pi^*(l)) = p^*(S^{l*}) + p^*(B). \label{equ:priceBundleEqual}
    \end{align}
    Buyer $l$ is therefore happy to receive bundle $B$ in addition to his demanded bundle $A_k^*$, and pay the price that is set for the bundle $\pi^*(l)$. This process can be repeated for every squeezed-out bundle $B$ with $p^*(B) > c^0(B)$. Every buyer $l$ who is not allocated any squeezed-out bundle receives his demanded package, i.e.,~$\pi^*(l) = S^{l*}$. $\partition^*$ is a partition of $N$ chosen by the seller, and thus $(\pi^*(0), \dots, \pi^*(L))$ is an allocation of $N$. By \cref{equ:priceBundleEqual}, the seller's utility is
    \begin{align*}
        \sum_{l\in\buyers} \left[p^*(\pi^*(l)) - c^0(\pi^*(l))\right] = \sum_{A\in\partition^*} \left[p^*(A) - c^0(A)\right] = \mathcal{V}^0(p^*) 
    \end{align*}
    It follows that $(p^*,\pi^*)$ is a package-linear pricing Walrasian equilibrium.
\end{proof2}\\

\begin{proof2}[Proof of \cref{prop:seller-equivalence}]\phantomsection\label{proof:prop:seller-equivalence}
    In SY, the seller's supply correspondence is defined as
    \begin{align*}
        S(p) = \argmax_{\partition \in \K}\left\{\sum_{A\in \partition}p(A)\right\}
    \end{align*}
    In our ascending auction the seller's supply correspondence is defined as
    \begin{align*}
        D^0(p) = \argmax_{\partition \in \K} \left\{\sum_{A \in \partition}\left(p(A) -c^0(A)\right)\right \}
    \end{align*}
    In SY's ascending auction, in holds that $p(B) = v^0(B)$ for any bundle $B$ that is assigned to the seller during the procedure and $p(\pi(0)) = v^0(\pi(0))$. Hence, we have
    \begin{align*}
        S(p) & =\argmax_{\partition \in \K}\left\{\sum_{A\in \partition\setminus B}p(A) + v^0(B)\right\}\\
        & = \argmax_{\partition \in \K}\left\{\sum_{A\in\partition\setminus B} p(A)  + v^0(B) - v^0(N) \right\}\\
        & = \argmax_{\partition \in \K}\left\{\sum_{A\in\partition\setminus B} p(A)  - c^*(N\setminus B) \right\}\\
        & = \argmax_{\partition \in \K}\left\{\sum_{A\in\partition\setminus B} p(A)  - c^*\left(\bigcup_{A\in\partition\setminus B} A\right) \right\}
    \end{align*}
    $c^*$ is by definition the dual of $v^0$, and $c^*\left(\bigcup_{A\in\partition\setminus B} A\right)$ may be interpreted as the seller's cost function. Thus, part (i) and (ii) of the proposition follow.
\end{proof2}\\

\begin{proof2}[Proof of \cref{prop:setcover-submodular}]\phantomsection\label{proof:prop:setcover-submodular}
    To simplify notation, we write $c^*(v^0,S)$ as $c^*(S)$ for any $\Sin$. Let $S_1^{\mathrm{c}}, S_2^{\mathrm{c}} \in 2^N$ and $S_1^c \cap S_2^c = \emptyset$. Note that $c^*(N) = v^0(N)$ and $S_1^c \cap S_2^c = \emptyset \Leftrightarrow S_1 \cup S_2 = N$.  Because $v^0$ is superadditive we have
    \begin{align*}
        & v^0(S_1^{\mathrm{c}} \cup S_2^{\mathrm{c}}) \geq v^0(S_1^{\mathrm{c}}) + v^0(S_2^{\mathrm{c}}) \\
        \Leftrightarrow \quad & v^0(N) - c^*((S_1^{\mathrm{c}} \cup S_2^{\mathrm{c}})^{\mathrm{c}}) \geq 2 v^0(N) - c^*(S_1) - c^*(S_2) \\
        \Leftrightarrow \quad & c^*(S_1) + c^*(S_2) \geq c^*(S_1\cup S_2) + c^*(S_1\cap S_2)
    \end{align*}
    The proof for subadditive $v^0$ is analogous.
\end{proof2}\\

\begin{proof2}[Proof of \cref{lem:same-allocation}]\phantomsection\label{proof:lem:same-allocation}
    With a revenue-maximizing seller, an allocation $\pi$ is efficient if it holds for every allocation $\pi'$ that
    \begin{align} \label{equ:SYefficient}
        \sum_{l\in\agents}\left[v^l(\pi(l))\right] \geq \sum_{l\in\agents}\left[v^l(\pi'(l))\right]
    \end{align}
    With a utility-maximizing seller, an allocation $\pi$ is efficient if for every allocation $\pi'$ it holds that
    \begin{align} \label{equ:efficient}
        \sum_{l\in\buyers}\left[v^l(\pi(l)) - c^0(\pi(l))\right] \geq \sum_{l\in\buyers}\left[v^l(\pi'(l)) - c^0(\pi'(l))\right]
    \end{align}
    SY's ascending auction terminates in an efficient allocation $\pi$ in the sense of \cref{equ:SYefficient}. The extended ascending auction terminates in an efficient allocation $\pi$ in the sense of \cref{equ:efficient}. If the auctioneer's marginal costs are zero, the efficient allocation in the extended ascending auction is equivalent to the efficient allocation in the sense of \cref{equ:SYefficient}: running the auction with the set of buyers $\buyers' = \buyers + \{0\} = \agents$, \cref{equ:SYefficient} and \cref{equ:efficient} are equivalent and the claim follows.
\end{proof2}\\

\begin{proof2}[Proof of \cref{prop:extended-ascending-auction}]\phantomsection\label{proof:prop:extended-ascending-auction}
    Note that every conventional buyer bids identically in the ascending auction and the extended ascending auction, up to ties. We split the revenue-maximizing seller into $2^n$ dummy buyers, denoted $l_S, \Sin$. Define dummy $l_S$'s utility function as follows:
    \begin{align*}
        v^{l_S}(B):=\left\{ \begin{array}{ll}
            v^0(S) & \text{ if } B \supseteq S \\
            0 & \text{ otherwise }
        \end{array} \right.
    \end{align*}
    Each dummy $l_S$ has the highest bid on bundle $S$ among all dummies because $v^0$ is superadditive. Let dummy $l_S$ demand bundle $S$ whenever he weakly prefers $S$ to any other bundle except the empty set and let him demand the empty set when she weakly prefers to do so.
    
    Let the extended ascending auction start at $t=-1$ with starting prices $p(-1,S) = v^0(S)-1 ~\forall \Sin$. Let two instances of each dummy $l_S$ participate. Dummies $l_S, \Sin$ each demand bundle $S$. The auctioneer offers some supply set. Regardless of the non-dummy buyers' demand, each bundle $\Sin$ is overdemanded in $t=-1$, so prices in $t=0$ are increased by one. The dummies all demand the empty set for all $t=0,1,\dots$, so if at some round $t\geq 0$ the auction ends with squeezed-out bundles, they can be assigned to the dummies if they were the last to demand them. It is without loss of generality to stipulate that the squeezed-out bundles are allocated to dummies in this case, and not to regular buyers who might have demanded them at $t=-1$ as well.
    Then, in all rounds $t=0,1,\dots$, the supply correspondence and the demand correspondences are chosen to maximize identical utility functions in both auctions. Hence, the supply and demand correspondences are identical in every round of both auctions, and it follows that an identical price path resulting in the same allocation exists.
\end{proof2}

\section{Additional examples}

\subsection{Packaging costs between identical items}\phantomsection\label{sec:identical-items}

If the seller has partition preferences over identical items (or if there are complementarities between identical items on the buyers' side), one can appropriately relabel items and adjust valuations and costs.\footnote{From a market design perspective, it is most efficient in terms of computational complexity to only relabel those items on which partition preferences or complementarities are expected.} We illustrate this with an example.
\begin{example}\label{ex:preprocessing}
    There are two items $A$ and $B$ supplied with $\Omega_A = \Omega_B = 2$. We wish to allow the package $\{AA\}$ to have its own price $p(AA)$ not necessarily equal to $2p(A)$.
    The values are given by $v(A,q,l),v(B,q,l),v(AA,q,l),v(AB,q,l),v(AAB,q,l)$, $l = 1,2$, $q=1,2$. We give each unit of $A$ its own index, i.e.,~$N:=\{{A_1}, {A_2}, B\}$, and obtain the values $v(A_1, q, l)$,$v(A_2,q,l)$, $v(B,q,l), v(A_1,A_2,q,l)$, $v(A_1B_,q,l), v(A_2B,q,l)$, and $v(A_1A_2B, q, l)$, where $v(A_1,q,l) = v(A_2,q,l)$, $v(A_1A_2,q,l) = v(AA,q,l)$, $v(A_1B,q,l) = v(A_2B,q,l) = v(AB,q,l)$ and $v(A_1A_2B,q,l) = v(AAB,q,l)$.

    The seller submits incremental cost functions $\DC(A,\cdot)$, $\DC(B$, $\DC(AA,\cdot)$, $\DC(AB,\cdot)$, and $\DC(AAB,\cdot)$, and a cost function graph defining the cost connections between these packages as shown in \cref{fig:cfg-identical-items}. The transformed incremental cost functions $\DC(A_1,\cdot)$, $\DC(A_2,\cdot)$, $\DC(A_1 A_2,\cdot)$, $\DC(A_1 B,\cdot)$, $\DC(A_2 B,\cdot)$, $\DC(A_1 A_2 B,\cdot)$ are such that $\DC(A_1,1) = \DC(A,1)$, $\DC(A_2,1) = \DC(A,2)$, $\DC(A_1 B,1) = \DC(AB,1)$, $\DC(A_2 B,1) = \DC(AB,2)$, $\DC(A_1 A_2,1) = \DC(AA,1)$, and $\DC(A_1 A_2 B,1) = \DC(AAB,1)$. $\DC(B,\cdot)$ remains unchanged and all other $\DC(S,r)$ are set to $\infty$. The cost function graph is adjusted as shown in \cref{fig:augmented-cfg}, where outgoing edges are implied by the original graph in \cref{fig:cfg-identical-items}.
    \begin{center}
    \renewcommand{\tabularxcolumn}[1]{b{#1}}
    \begin{tabularx}{0.9\textwidth}{>{\centering\arraybackslash}X >{\centering\arraybackslash}X}
        \centering
        \includegraphics[scale=1]{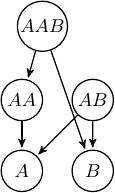}
        \captionof{figure}{CFG with package $AA$}
        \label{fig:cfg-identical-items}
        &
        \includegraphics[scale=1]{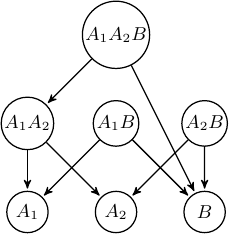}
        \captionof{figure}{Augmented CFG}
        \label{fig:augmented-cfg}
    \end{tabularx}
    \end{center}
    Note that the auction prices must satisfy $p(A) = \min\{p(A_1), p(A_2)\}$, $p(AA) = p(A_1A_2)$, and $p(AB) = \min\{p(A_1B),p(A_2B)\}$. If $A_1$ and $A_2$ are both allocated to buyers, it must hold that $p(A_1) = p(A_2)$, and similarly for $p(AB)$, because of $v(A_1,q,l) = v(A_2,q,l) = v(A,q,l)$ and the constraints of DSWLP.
\end{example}

\subsection{Ascending auctions}

\begin{example}\label{ex:ref_vs_utilmax}
    Buyers are labeled L1 to L6. Their values and the seller's values $v^0$ and dual marginal costs $c^0$ are given in \cref{tab:ex:rev_vs_utilmax}. \cref{tab:ex:rev_vs_utilmax2} details the ascending auction where the seller maximizes revenue or, equivalently, utility based on the cost function $C^0(\partition) = c^0\left(\bigcup_{S\in\partition}S\right)$, in each round. \cref{tab:ex:rev_vs_utilmax3} details the extended ascending auction where the seller maximizes utility based on the cost function $\wtild{C}^0(\partition) = \sum_{S\in\partition} c^0\left(S\right)$ in each round. If the seller maximizes revenue, the two individual items $A$ and $B$ are allocated, e.g.,~to L1 and L3. If the seller maximizes utility based on $\wtild{C}^0$, bundle $AB$ is allocated, e.g.,~to L5.
    \begin{table}[ht]
    \hspace{0.5cm}
    \begin{minipage}{.23\textwidth}
        \centering
        \begin{adjustbox}{max width=0.8\textwidth,center}
        \begin{tabular}{cccc}
        \toprule
                 & $A$ & $B$ & $AB$ \\
        \midrule
        $v^1$    & 5     & 0     & 5 \\
        $v^2$    & 5     & 0     & 5 \\
        $v^3$    & 0     & 7     & 7 \\
        $v^4$    & 0     & 7     & 7 \\
        $v^5$    & 0     & 0     & 11 \\
        $v^6$    & 0     & 0     & 11 \\
        \midrule
        $v^0$ & 2     & 4     & 8 \\
        $c^0$ & 4     & 6     & 8 \\
        \bottomrule
        \end{tabular}%
        \end{adjustbox}
        \captionof{table}{Values and costs}
        \label{tab:ex:rev_vs_utilmax}%
    \end{minipage}
    \hspace{0.2cm}
    \begin{minipage}{.7\textwidth}
        \centering
        \begin{adjustbox}{max width=0.9\textwidth,center}
        \begin{tabular}{llccccccc}
            \toprule
            \multicolumn{1}{c}{\multirow{2}[2]{*}{current price}} & \multicolumn{1}{c}{\multirow{2}[2]{*}{supply set}} & \multicolumn{6}{c}{demand}                    & \multirow{2}[2]{*}{squeezed-out} \\
                  &       & L1    & L2    & L3    & L4    & L5    & L6    &  \\
            \midrule
            $p(0) = (2,4,8)$ & $\{AB\}$ & $A$   & $A$   & $B$   & $B$   & $AB$  & $AB$  &  \\
            $p(1) = (3,5,9)$ & $\{AB\}$ & $A$   & $A$   & $B$   & $B$   & $AB$  & $AB$  &  \\
            $p(2) = (4,6,10)$ & $\{AB\}$ & $A$   & $A$   & $B$   & $B$   & $AB$  & $AB$  &  \\
            $p(3) = (5,7,11)$ & $\{A,B\}$ & $\emptyset$ & $\emptyset$ & $\emptyset$ & $\emptyset$ & $\emptyset$ & $\emptyset$ & A,B \\
            \bottomrule
        \end{tabular}%
        \end{adjustbox}
        \caption{Ascending auction}
        \label{tab:ex:rev_vs_utilmax2}%
        \vspace{25pt}
        \centering
        \begin{adjustbox}{max width=0.9\textwidth,center}
        \begin{tabular}{llccccccc}
            \toprule
            \multicolumn{1}{c}{\multirow{2}[2]{*}{current price}} & \multicolumn{1}{c}{\multirow{2}[2]{*}{supply set}} & \multicolumn{6}{c}{demand}                    & \multirow{2}[2]{*}{squeezed-out} \\
                  &       & L1    & L2    & L3    & L4    & L5    & L6    &  \\
            \midrule
            $p(0) = (4,6,8)$ & $\{AB\}$ & $A$   & $A$   & $B$   & $B$   & $AB$  & $AB$  &  \\
            $p(1) = (5,7,9)$ & $\{A,B\}$ & $\emptyset$ & $\emptyset$ & $\emptyset$ & $\emptyset$ & $AB$  & $AB$  &  \\
            $p(2) = (5,7,10)$ & $\{AB\}$ & $\emptyset$ & $\emptyset$ & $\emptyset$ & $\emptyset$ & $AB$  & $AB$  &  \\
            $p(3) = (5,7,11)$ & $\{AB\}$ & $\emptyset$ & $\emptyset$ & $\emptyset$ & $\emptyset$ & $\emptyset$ & $\emptyset$ & AB \\
            \bottomrule
        \end{tabular}%
        \end{adjustbox}
        \caption{Modified ascending auction}
        \label{tab:ex:rev_vs_utilmax3}%
    \end{minipage}
    \end{table}%
\end{example}

\end{document}